\documentclass[letterpaper,11pt]{article}
\usepackage{graphicx}
\usepackage{amssymb,amsmath,amsthm}
\usepackage{fullpage}
\usepackage{color}
\usepackage[hypertex]{hyperref}
\newcommand{\remove}[1]{}
\usepackage{xspace}
\usepackage{multirow}
\usepackage[ruled,vlined,linesnumbered]{algorithm2e}

\newcommand{\polylg}{polylogarithmic\xspace}
\newcommand{\N}{{\mathbb{N}}}

\DeclareMathSymbol{\qedsymb} {\mathord}{AMSa}{"04}
\newcommand{\eps}{\varepsilon}
\def\veps{\varepsilon}
\def\D{\mathcal D}
\def\E{\mathcal E}
\def\S{\mathcal S}

\def\eqdef{\stackrel{\text{\tiny\rm def}}{=}}
\def\sprod{\circledast}

\newcommand{\maxx}[1]{ \max\{#1\} }
\newcommand{\floor}[1]{ \lfloor{#1}\rfloor }

\newcommand{\myparagraph}[1]{\medskip\noindent{\bf #1.}}

\newtheorem{definition}{Definition}[section]
\newtheorem{lemma}[definition]{Lemma}
\newtheorem{claim}[definition]{Claim}
\newtheorem{fact}[definition]{Fact}
\newtheorem{theorem}[definition]{Theorem}
\newtheorem{corollary}[definition]{Corollary}

\DeclareMathOperator{\polylog}{polylog}
\DeclareMathOperator{\lcs}{LCS}

\DeclareMathOperator{\ed}{ed}
\DeclareMathOperator{\edd}{\underline{ed}}
\DeclareMathOperator{\blk}{blk}

\DeclareMathOperator{\EX}{\mathbb E}
\DeclareMathOperator{\cost}{cost}
\DeclareMathOperator{\Ham}{H}
\DeclareMathOperator{\supp}{supp}

\def\DTEP{{\mathrm{DTEP}}}

\newcommand{\EC}{{\cal E}}
\newcommand{\W}{{\cal W}}
\newcommand{\EE}[2]{{\mathbb{E}_{#1}\left[#2\right]}}
 
\def\compactify{\itemsep=0pt \topsep=0pt \partopsep=0pt \parsep=0pt}

\newcommand{\zo}{\{0,1\}}
\newcommand{\tO}{{\tilde{O}}}
\DeclareMathOperator{\dx}{dx}
\DeclareMathOperator{\IH}{IH}

\begin{document}

\title{
Polylogarithmic Approximation for Edit Distance\\and the Asymmetric
Query Complexity
}
\author{
Alexandr Andoni\thanks{Supported in part by NSF CCF 0832797.}\\
 Princeton University/C.C.I.
\and
Robert Krauthgamer\thanks{Supported in part by The Israel Science Foundation (grant \#452/08), and by a Minerva grant.}\\
Weizmann Institute
\and
Krzysztof Onak\thanks{Supported in part by NSF grants 0732334 and 0728645.}\\
MIT}

\maketitle

\begin{abstract}
We present a near-linear time algorithm that approximates the edit distance 
between two strings within a polylogarithmic factor; 
specifically, for strings of length $n$ and every fixed $\eps>0$, 
it can compute a $(\log n)^{O(1/\eps)}$ approximation in $n^{1+\eps}$ time. 
This is an {\em exponential} improvement over the previously known factor, 
$2^{\tilde O(\sqrt{\log n})}$, with a comparable running time~\cite{OR-edit,AO-edit}.  
Previously, no efficient polylogarithmic approximation algorithm was known
for any computational task involving edit distance 
(e.g., nearest neighbor search or sketching).

This result arises naturally in the study of 
a new \emph{asymmetric query} model.
In this model, the input consists of two strings $x$ and $y$,
and an algorithm can access $y$ in an
unrestricted manner, while being charged for querying every symbol of $x$.
Indeed, we obtain our main result by designing an algorithm that
makes a small number of queries in this model.
We then provide a nearly-matching lower bound on the 
number of queries.

Our lower bound is the first to expose hardness of edit distance
stemming from the input strings being ``repetitive'', 
which means that many of their substrings are approximately identical.
Consequently, our lower bound provides the first rigorous separation
between edit distance and Ulam distance, 
which is edit distance on non-repetitive strings, such as permutations. 
\end{abstract}

\newpage

\section{Introduction}

Manipulation of strings has long been central to computer science, 
arising from the high demand to process texts 
and other sequences efficiently.
For example, for the simple task of {\em comparing} two strings
(sequences), one of the first methods  
emerged to be the \emph{edit distance} (aka the Levenshtein distance) 
\cite{Lev65},
defined as the minimum number of character insertions, deletions, and substitutions
needed to transform one string into the other. 
This basic distance measure, together with its more elaborate versions, is widely
used in a variety of areas such as
computational biology, speech recognition, and information retrieval.
Consequently, improvements in edit distance algorithms have
the potential of major impact. As a result, computational problems
involving edit distance have been studied 
extensively (see \cite{Navarro01,Gus-book} and references therein).

The most basic problem is that of computing the edit distance between two
strings 
of length $n$ over some alphabet. 
It can be solved in $O(n^2)$ time by a classical algorithm \cite{WF74};
in fact this is a prototypical dynamic programming algorithm,
see, e.g., the textbook \cite{CLRS} and references therein. 
Despite significant research over more than three decades,
this running time has so far been improved
only slightly to $O(n^2/\log^2 n)$ \cite{MP80},
which remains the fastest algorithm known to date.%
\footnote{The result of \cite{MP80} applies to
  constant-size alphabets. It was recently
extended to arbitrarily large alphabets,
albeit with an $O(\log\log n)^2$
  factor loss in runtime
\cite{BFC08}.} 

Still, a near-quadratic runtime is often unacceptable in 
modern applications that must deal with massive datasets, such as the genomic data.
Hence practitioners tend to rely on faster heuristics~\cite{Gus-book,Navarro01}. This has motivated the quest for  faster algorithms at the expense of approximation, see, e.g.,
\cite[Section 6]{I-survey} and \cite[Section 8.3.2]{IMCRC}.
Indeed, the past decade has seen a serious effort in this direction.%
\footnote{We shall not attempt to present a complete list of results for 
restricted settings (e.g., average-case/smoothed analysis, weakly-repetitive
strings, and bounded distance-regime), for variants of the distance function
(e.g., allowing more edit operations), or for related computational problems 
(such as pattern matching, nearest neighbor search, and
sketching). See also the surveys of~\cite{Navarro01} and~\cite{Sah-Encyclopedia}.
}
One general approach 
is to design linear 
time algorithms that approximate the edit distance.
A linear-time $\sqrt{n}$-approximation algorithm immediately follows
from the exact algorithm of \cite{LMS98},
which runs in time $O(n+d^2)$, 
where $d$ is the edit distance between the input strings. 
Subsequent research improved the approximation factor, first to $n^{3/7}$ \cite{BJKK04},
then to $n^{1/3+o(1)}$ \cite{BES06},
and finally to $2^{\tO(\sqrt{\log n})}$ \cite{AO-edit}
(building on \cite{OR-edit}).
Predating some of this work was the {\em sublinear-time} algorithm of  \cite{BEK+03} achieving 
$n^\eps$  approximation, 
but only when the edit distance $d$ is rather large.

Better progress has been obtained on {\em variants} of edit distance,
where one either restricts the input strings, 
or allows additional edit operations.
An example from the first category is the edit distance on
non-repetitive strings (e.g., permutations of $[n]$), termed {\em the Ulam
  distance} in the literature.
The classical Patience Sorting algorithm computes the exact Ulam distance between
two strings in $O(n\log n)$ time.
An example in the second category is the case of two variants of
the edit distance where certain block operations are allowed. Both of these variants
admit an $\tilde O(\log
n)$ approximation in near-linear time \cite{CPSV, MSah, CMu, C-PhD}. 

Despite the efforts, achieving a polylogarithmic approximation
factor for the classical edit distance 
has eluded researchers for a long time. In fact, this is has been the case 
not only in the context of linear-time algorithms, 
but also in the related tasks, such as nearest neighbor search, 
$\ell_1$-embedding, or sketching. 
From a lower bounds perspective, only a {\em sublogarithmic}
approximation has been ruled out
for the latter two tasks \cite{KN,KR06,AK07},
thus giving evidence that a sublogarithmic approximation for the
distance computation might be much harder or even impossible to attain.

\subsection{Results}

Our first and main result is an algorithm that runs in near-linear time
and approximates edit distance within a {\em \polylg factor}.
Note that this is {\em exponentially better} than 
the previously known factor $2^{\tO(\sqrt{\log n})}$ 
(in comparable running time), due to \cite{OR-edit,AO-edit}.

\begin{theorem}[Main]
\label{thm:introAlgorithm}
For every fixed $\eps>0$, there is an algorithm that 
approximates the edit distance between two input strings $x,y\in\Sigma^n$ 
within a factor of $(\log n)^{O(1/\eps)}$, and runs in $n^{1+\eps}$ time. 
\end{theorem}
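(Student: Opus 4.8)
The plan is to first reduce the estimation task to a gap/threshold decision problem, then solve that problem via a hierarchical surrogate for edit distance that can be evaluated in near-linear time, and finally read off both the running-time bound and (conceptually) the asymmetric query bound from the same construction. For Step~1, it suffices to solve, for each scale $k\in\{1,2,4,\dots,n\}$, the decision problem $\DTEP$: distinguish $\ed(x,y)\le k$ from $\ed(x,y)>k\cdot(\log n)^{O(1/\eps)}$. Running these $O(\log n)$ instances and reporting the smallest surviving $k$ yields a $(\log n)^{O(1/\eps)}$-approximation, and the extra $O(\log n)$ factor is absorbed into $n^{1+\eps}$. So fix $k$ from now on.

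For Step~2, impose on $[n]$ a balanced tree of branching factor $b=n^{\eps}$ and depth $\ell=1/\eps$, so a level-$j$ block has length $n^{1-j\eps}$ and level-$\ell$ blocks are single characters. For an interval $I$ of $x$ at level $j$ and a candidate placement $t$ in $y$, define $E_j(I,t)$ recursively: $E_\ell(\{p\},t)=[x_p\ne y_t]$, and for internal $j$, with $I$ split into its contiguous children $I_1,\dots,I_b$,
\[
E_j(I,t)=\min_{t=t_1\le t_2\le\cdots\le t_{b+1}}\ \sum_{r=1}^{b}\Bigl(E_{j+1}(I_r,t_r)+\bigl|\,|I_r|-(t_{r+1}-t_r)\,\bigr|\Bigr),
\]
i.e.\ edit distance forced to respect the block hierarchy, with each level free to re-align its blocks against tiled windows of $y$ and charged a displacement penalty for stretching or shrinking a window. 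Our estimate is $E_0([n],0)$. By construction $\ed(x,y)\le E_0([n],0)$, since any choice of placements can be \emph{unfolded} into a genuine sequence of edit operations of no greater cost (the displacement penalties pay exactly for the needed insertions/deletions).

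Step~3 is the crux: showing $E_0([n],0)\le (\log n)^{O(1/\eps)}\cdot \ed(x,y)$, i.e.\ that each of the $\ell$ recursion levels costs only a $\polylog n$ factor. The argument takes an optimal alignment of $x$ into $y$ and, level by level, \emph{rounds} it so that every block of $x$ at that level maps to a contiguous window of $y$; the rounding cost at one level is controlled by the number of blocks times the mismatch created at block boundaries, and a charging/averaging argument over the $O(\log n)$ possible offsets of the block grid shows the blow-up per level is $\polylog n$. Composing $\ell=1/\eps$ levels gives $(\log n)^{O(1/\eps)}$. I expect this rounding lemma — together with calibrating the displacement penalties so they neither overcount nor undercount relative to true edit operations, and interacting correctly with the net in Step~4 — to be the main obstacle.

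For Step~4, evaluate $E$ bottom-up by dynamic programming over the tree. Naively a block of $x$ could be placed at $\Omega(n)$ positions of $y$; the fix is to restrict, at each level, the candidate placements $t$ to a carefully chosen net of size $n^{o(1)}$ (geometrically spaced displacements relative to the parent's placement, after arguing that a rounded optimal alignment snaps onto the net at $\polylog n$ cost), and to combine a node's $b$ children by a small dynamic program over these nets. Summing over the $O(n)$ tree nodes and the $n^{o(1)}$ placements per node gives $n^{1+\eps+o(1)}$ time, which becomes $n^{1+\eps}$ after rescaling $\eps$. Finally, note that the procedure inspects symbols of $x$ only through the leaf evaluations $E_\ell$ it actually performs, so pruning the recursion to the surviving placements gives an algorithm in the asymmetric query model reading only a sublinear number (roughly $n^{\eps}$) of symbols of $x$ — the nearly-matching upper bound advertised in the introduction, and the real conceptual engine behind the whole approach.
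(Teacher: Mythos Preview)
Your proposal diverges from the paper in both parameters and technique, and the divergence is where the gap lies. You take $b=n^{\eps}$ with constant depth $\ell=1/\eps$; the paper takes $b=(\log n)^{\Theta(1/\eps)}$ with depth $h=\Theta(\eps\log n/\log\log n)$, and its $\EC$-distance charges each child an \emph{independent} shift $|r_j|$ rather than your monotone placements with window-length penalties. The paper's key characterization lemma is that $\EC$ approximates $\ed$ to within $O(bh)$ --- an \emph{additive}-per-level bound, not a multiplicative one --- so polylogarithmic $b$ yields approximation $(\log n)^{O(1/\eps)}$. Your Step~3 instead asserts a $\polylog$ \emph{multiplicative} loss per level via an unspecified averaging over grid offsets; that is neither proved nor how the paper argues, and the paper explicitly remarks that prior hierarchical schemes with a multiplicative per-level recurrence are stuck at $2^{\tilde O(\sqrt{\log n})}$.

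The decisive gap is Step~4 and its interaction with Step~3. Your monotone-placement surrogate is in fact within $2^{O(1/\eps)}$ of $\ed$ (each level loses only a constant), so evaluating it is essentially as hard as edit distance: leaf $p$ must be tested at every placement within $\pm k$ of $p$, i.e.\ $\Theta(nk)$ work, which is $\Theta(n^2)$ when $\beta=O(1)$. Your proposed fix --- a geometric net of $n^{o(1)}$ displacements --- does not survive the analysis: rounding $t_r$ to such a net costs $\Theta(|t_r-\text{natural}_r|)$, and a single edit early in $I$ shifts \emph{every} subsequent $t_r$ by the same amount, so $\sum_r |t_r-\text{natural}_r|$ can be $(b-1)\cdot\ed$. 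The rounding thus inflates the estimate by $\Theta(b)=\Theta(n^\eps)$ per level, i.e.\ $\Theta(n)$ overall. The paper avoids this completely: it never thins the placements. Instead it \emph{subsamples the tree nodes} using carefully calibrated per-node precisions (the Uniform and Non-uniform Sampling Lemmas), retaining only $\beta\cdot(\log n)^{O(h)}=\beta n^{O(\eps)}$ nodes, and for each surviving node computes its $\EC$-value at \emph{all} positions $z\in[n]$ (sped up via dyadic range-minima). This node-sampling is the engine behind both the $n^{1+\eps}$ runtime and the asymmetric query bound; your plan has no analogue of it, and your closing remark that the leaf evaluations touch only $\approx n^\eps$ symbols of $x$ is incorrect --- without the sampling, every one of the $n$ leaves is evaluated.
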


This development stems from
a principled study of edit distance in a computational model
that we call the {\em asymmetric query} model, and which we shall define shortly.
Specifically, we design a query-efficient procedure in the said model,
and then show how this procedure yields a near-linear time algorithm.
We also provide a query complexity lower bound for this model, 
which matches or nearly-matches the performance of our procedure.

A conceptual contribution of our query complexity lower bound is 
that it is the first one to expose hardness stemming from ``repetitive substrings'', 
which means that many small substrings of a string may be approximately equal.
Empirically, it is well-recognized that such repetitiveness is a 
major obstacle for designing efficient algorithms.
All previous lower bounds (in any computational model) failed to exploit it,
while in our proof the strings' repetitive structure is readily apparent.
More formally, our lower bound provides the first rigorous separation
of edit distance from Ulam distance (edit distance on non-repetitive strings). 
Such a separation was not previously known in any studied model of
computation, and in fact all the lower bounds known for the edit
distance hold to (almost) the same degree for the Ulam distance.
These models include: non-embeddability into normed spaces~\cite{KN, KR06, AK07},
lower bounds on sketching complexity~\cite{AK07, AJP-sketch}, and
(symmetric) query complexity~\cite{BEK+03, AN-ulam}. 

\myparagraph{Asymmetric Query Complexity}
Before stating the results formally, we define the problem and the
model precisely. Consider two strings $x,y\in \Sigma^n$ for some alphabet $\Sigma$,
and let $\ed(x,y)$ denote the edit distance between these two strings.
The computational problem is the promise problem known as the
Distance Threshold Estimation Problem (DTEP)~\cite{SS02}: 
distinguish whether $\ed(x,y)> R$ or $\ed(x,y)\le R/\alpha$, 
where $R>0$ is a parameter (known to the algorithm) 
and $\alpha\ge1$ is the {\em approximation factor}.
We use $\DTEP_\beta$ to denote the case of $R=n/\beta$, 
where $\beta\ge 1$ may be a function of $n$.

In the {\em asymmetric query model},
the algorithm knows in advance (has unrestricted access to) one of the strings, 
say $y$, and has only {\em query access} to the other string, $x$. 
The \emph{asymmetric query complexity} of an algorithm is the number of coordinates in $x$ 
that the algorithm has to probe in order to solve DTEP with success probability
at least $2/3$.

We now give complete statements of our upper and lower bound results.
Both exhibit a smooth {\em tradeoff} between approximation factor 
and query complexity.
For simplicity, we state the bounds in two extreme 
regimes of approximation ($\alpha=\polylog(n)$ and $\alpha=\mathrm{poly}(n)$).
See Theorem~\ref{thm:upperBound} for the full statement of the upper
bound, and Theorems~\ref{thm:main_lb} and~\ref{thm:lbMorePrecise} for
the full statement of the lower bound.

\begin{theorem}[Query complexity upper bound]
  \label{thm:introUpperBound}
For every $\beta=\beta(n)\ge 2$ and fixed $0<\eps<1$ there is an algorithm 
that solves $\DTEP_{\beta}$ with approximation $\alpha=(\log n)^{O(1/\eps)}$, 
and makes $\beta n^{\eps}$ asymmetric queries. 
This algorithm runs in time $O(n^{1+\eps})$.

For every $\beta=O(1)$ and fixed integer $t\ge 2$ there is an algorithm 
for $\DTEP_\beta$ achieving approximation $\alpha=O(n^{1/t})$,
with $O(\log^{t-1} n)$ queries into $x$. 
\end{theorem}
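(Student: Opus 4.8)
The plan is to reduce both statements to estimating, in a query-efficient way, a \emph{hierarchical surrogate} $\E(x,y)$ for the edit distance, built from a recursive decomposition of $x$. Fix a branching factor $b$ and consider the $b$-ary tree obtained by repeatedly splitting the interval $[n]$ of positions of $x$ into $b$ equal sub-intervals, down to singletons; its depth is $h=\log_b n$. With each node $v$ (a block $x[I_v]$) and each candidate alignment position $j$ in $y$ we associate a cost $\E(v,j)$, defined bottom-up: at a leaf, $\E(v,j)$ is $0$ or $1$ according to whether $x[I_v]=y[j]$; at an internal node with children $v_1,\dots,v_b$, roughly $\E(v,j)=\sum_{k=1}^{b}\min_{s}\bigl(\E(v_k,\;j+\mathrm{off}_k+s)+|s|\bigr)$, where $\mathrm{off}_k$ is the offset of $I_{v_k}$ inside $I_v$ and $s$ ranges over a grid of displacements (powers of two up to $|I_v|$). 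Put $\E(x,y)=\min_j \E(\mathrm{root},j)$. The first key step, which adapts the analysis of \cite{OR-edit,AO-edit}, is the \textbf{sandwiching bound} $\ed(x,y)\le \E(x,y)\le (\log n)^{O(h)}\cdot\ed(x,y)$: the lower bound is immediate since $\E$ describes an actual edit script, and the upper bound follows by taking an optimal alignment, rounding its per-block displacements to the displacement grid level by level, and charging the rounding loss, which is $O(\log n)$ per level. Thus $\E(x,y)$ already solves $\DTEP_\beta$ with approximation $(\log n)^{O(h)}$; the issue is computing it without reading all of $x$.

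The second step is a \textbf{recursive sampling estimator} for $\E$. Since $\E(v,j)$ is a \emph{sum} over the $b$ children, we estimate it by drawing $s_v$ children uniformly, recursing on each, and rescaling by $b/s_v$; the displacement minimization at each node is over an $O(\log n)$-size grid and involves only substrings of $y$, which are free to access, so it costs time but no queries. Unwinding the recursion, the only queries into $x$ occur at the sampled leaves, and their number is the product $s_1 s_2\cdots s_h$ of the sample sizes along the levels. We set $b=n^{\Theta(\eps)}$ so that $h=O(1/\eps)$, which makes the approximation $(\log n)^{O(h)}=(\log n)^{O(1/\eps)}=\alpha$ as required. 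For the sample sizes: in the ``close'' case $\ed\le R/\alpha=n/(\alpha\beta)$, at most a $1/(\alpha\beta)$-fraction of the children of the root have significant cost, whereas in the ``far'' case $\ed>R=n/\beta$ this fraction is $\gtrsim 1/\beta$; certifying which side we are on from a sample of the root's children therefore needs $\Theta(\beta)$ of them (up to a $\log n$ amplification factor), while at each of the remaining $h-1=O(1/\eps)$ levels we draw $n^{\delta}$ children for a suitable $\delta=\Theta(\eps^2)$, chosen so the product of all sample sizes is $\beta\,n^{\eps}$. The running time is dominated by preprocessing $y$: at each of the $O(1/\eps)$ levels one handles $O(n)$ blocks of $y$, each needing $O(b\log n)=n^{O(\eps)}$ work for its displacement optimization, giving $n^{1+\eps}$ overall. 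Combining with the sandwiching bound and amplifying gives the first statement, and tuning $b,h$ yields the full tradeoff of Theorem~\ref{thm:upperBound}.

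The \textbf{main obstacle} is the error analysis of this estimator, and this is where most of the work lies. Three points need care: (i) the rescaled partial sum is unbiased, but $\min_s$ of unbiased estimates is biased downward, so the displacement optimization must be carried out with one-sided (overestimating) surrogates together with a union bound over the $O(\log n)$ grid values; (ii) estimation errors must not compound across the $h=O(1/\eps)$ levels --- one shows that $n^{\Theta(\eps^2)}$ samples keep the relative error incurred per level $o(1/h)$, so the total distortion stays within the $(\log n)^{O(1/\eps)}$ budget; and (iii) in the ``far'' direction one must rule out that a few unsampled cheap children make the rescaled estimate small, which is exactly what the $\Theta(\beta)$ root-sample is for. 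Together with (i)--(ii), this shows the algorithm answers $\DTEP_\beta$ correctly with probability $2/3$.

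For the second statement we run the same framework with $t$ levels (block lengths shrinking by a factor $n^{1/t}$ per level) and $\beta=O(1)$. Now the target approximation $\alpha=\Theta(n^{1/t})$ is polynomially large, so the gap between the close-case and far-case ``significant-child'' fractions is a factor of order $n^{1/t}$ at \emph{every} level; hence a sample of only $O(\log n)$ children per level suffices to separate the two cases (the logarithm being for amplification and a union bound over the $\poly(n)$ nodes visited), and the crude $O(n^{1/t})$ approximation is precisely the price of getting away with a logarithmic rather than polynomial sample per level. Multiplying the per-level sample sizes over the branching levels --- the root is a single node and the bottom level is read directly --- gives $O(\log^{t-1} n)$ queries into $x$, as claimed; the correctness argument is the error analysis above instantiated with these parameters.
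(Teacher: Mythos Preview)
Your high-level architecture---a hierarchical surrogate plus recursive sampling---matches the paper's, but two genuine gaps make the plan, as written, fail for the first statement.

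\textbf{The surrogate's approximation factor.} The quantity you define is essentially the paper's $\EC$-distance (restricting displacements to a dyadic grid changes nothing essential). The correct sandwiching bound is $\tfrac12\ed(x,y)\le \EC(x,y)\le O(bh)\cdot\ed(x,y)$, \emph{not} $(\log n)^{O(h)}$. The upper bound is obtained by fixing an optimal alignment and showing that at \emph{each} level the total displacement cost summed over all nodes is $O(b\cdot\ed)$ additively (this is the paper's Claim~\ref{clm:sumDis}); over $h$ levels one gets $O(bh\cdot\ed)$. There is no per-level multiplicative $O(\log n)$ loss, and the argument you cite from \cite{OR-edit,AO-edit} is precisely the one the paper singles out as giving only $2^{\tilde O(\sqrt{\log n})}$. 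With your choice $b=n^{\Theta(\eps)}$, $h=O(1/\eps)$, the surrogate is therefore only an $O(n^{\Theta(\eps)})$-approximation to edit distance, far too coarse. The paper instead takes $b=(\log n)^{\Theta(1/\eps)}$, which makes the approximation $(\log n)^{O(1/\eps)}$ but forces $h=\Theta(\eps\log n/\log\log n)$ to be \emph{logarithmic}, not constant.

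\textbf{Uniform sampling does not handle general $\beta$.} Once $h$ is logarithmic, the per-level query budget is $(\log n)^{O(1)}$, and here your uniform scheme breaks. Sampling $\Theta(\beta)$ children at the root is fine, but at level~$2$ the retained children have unknown costs in $[0,n/b]$ that may be arbitrarily heterogeneous; estimating their sum to the required additive accuracy $n/\beta$ with a \emph{fixed} number $n^{\delta}$ of grandchildren per child fails whenever $\beta\gg n^{\delta}$ (a single child can carry almost all the cost and you will not resolve it). The paper's fix, and its main technical engine, is a \emph{non-uniform} sampling lemma: each retained node $v$ is assigned a random precision $w_v$ drawn i.i.d.\ from a heavy-tailed law $\W$ on $[1,n^3]$ with density $\propto 1/x^2$ (so $\EX[w]=\polylog n$), and must estimate its $\EC$-value only to additive error $\ell_v/w_v$; a reconstruction procedure then combines the children's estimates into a $(1+o(1))$-approximator of the parent's sum, essentially trusting only those children whose realized precision happened to be large relative to their value. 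This is what keeps the expected number of retained leaves at $\beta\cdot(\log n)^{O(h)}=\beta\, n^{O(\eps)}$ while still driving the root's additive error down to $n/\beta$; it is absent from your plan. (For the second statement, where $\beta=O(1)$ and $b=n^{1/t}$, the $O(bh)=O(n^{1/t})$ approximation comes from the characterization itself, not from coarse sampling, and the paper's sampling then estimates the surrogate to constant factors; your outline there is closer to workable once you correct the sandwiching bound.)
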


It is an easy observation that our general edit distance algorithm in
Theorem~\ref{thm:introAlgorithm} follows immediately from the above query
complexity upper bound theorem, by running the latter for all $\beta$
that are a power of 2. 

\begin{theorem}[Query complexity lower bound]
\label{thm:LB}
For a sufficiently large constant $\beta>1$,
every algorithm that solves $\DTEP_\beta$ 
with approximation $\alpha=\alpha(n)>2$ 
has asymmetric query complexity 
$2^{\Omega\left(\frac{\log n}{\log \alpha + \log \log n}\right)}$. 
Moreover, for every fixed non-integer $t>1$,
every algorithm that solves $\DTEP_\beta$ with approximation $\alpha=n^{1/t}$ 
has asymmetric query complexity $\Omega(\log^{\floor{t}} n)$.
\end{theorem}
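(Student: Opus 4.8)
The plan is to prove both parts through a single recursive (hierarchical) hard-instance construction analyzed via Yao's minimax principle, instantiated in two parameter regimes. First I would fix a single ``canonical'' reference string $y\in\Sigma^n$, defined recursively over $r$ levels, and design two distributions $\Dclose$ and $\Dfar$ over the queried string $x$, supported respectively on strings with $\ed(x,y)\le R/\alpha$ and with $\ed(x,y)>R$ (recall $R=n/\beta$). Because the algorithm's access to $y$ is free and $y$ is the \emph{same} under both distributions, the algorithm gains nothing from it, so it suffices---by the standard averaging reduction to deterministic algorithms run against the mixture $\tfrac12\Dclose+\tfrac12\Dfar$---to show that for every deterministic decision tree probing at most $q$ coordinates of $x$, the distribution of its transcript has total variation distance $<\tfrac13$ between $\Dclose$ and $\Dfar$; this forces $q$ to be as large as claimed.

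The construction is where repetitiveness enters, and it is essential: a level-$j$ string is a concatenation of $b_j$ blocks, each of which is a (near) copy of a level-$(j-1)$ string, so at every scale the substrings of $x$ and of $y$ are near-duplicates---precisely the feature absent in the Ulam metric and what makes single probes uninformative. In $\Dfar$, $x$ is obtained from $y$ by composing, at each level, an independent random ``shift'' of the block sequence (a cyclic rotation, with bounded-cost fixes at the boundary); in $\Dclose$ only a single mild perturbation is applied (a small shift localized at one level), so that $\ed(x,y)$ stays below $R/\alpha$. Two properties must then be verified. (i) A combinatorial lemma: the composed random misalignments in $\Dfar$ force $\ed(x,y)=\Omega(n)>R$ with high probability---intuitively, the self-similar structure prevents a high-level rotation from being ``absorbed'' cheaply by the lower-level matchings, so realigning costs a constant fraction of the block mass introduced at each level. (ii) The close case satisfies $\ed(x,y)\le R/\alpha$, immediate from the form of the perturbation. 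I would take the level-$0$ alphabet large enough that distinct base blocks are genuinely distinguishable, and remark that a standard encoding reduces the alphabet afterwards.

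The core of the argument is the query lower bound, which I would prove by induction on the level via a hybrid argument: if no depth-$q_{j-1}$ decision tree separates the level-$(j-1)$ close/far distributions with transcript TV-distance $\ge\delta_{j-1}$, then no depth-$q_j$ tree separates the level-$j$ distributions with distance $\ge\delta_j$, where $q_j$ is a fixed multiple of $q_{j-1}$ and $\delta_j$ exceeds $\delta_{j-1}$ only slightly. The point is that a probe into $x$ at level $j$ lands inside some block and---because blocks are near-identical---cannot reveal the level-$j$ shift until the algorithm has essentially resolved that block, which is itself a level-$(j-1)$ instance; resolving enough blocks to locate the level-$j$ shift, and recursing inside, costs a multiplicative factor in queries per level. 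Unwinding over $r$ levels gives $q=q_r$ growing exponentially in $r$ (with the per-level branching reflected in the length $n\approx\prod_j b_j$ and in the gap $\alpha$, determined by the level at which $\Dclose$'s perturbation acts). Taking $r=\Theta\!\big(\tfrac{\log n}{\log\alpha+\log\log n}\big)$ levels with near-constant branching yields the first bound $q\ge 2^{\Omega(\log n/(\log\alpha+\log\log n))}$ (nontrivial in the $\alpha=\polylog(n)$ regime); taking exactly the constant number $r=\lfloor t\rfloor$ of levels, with the top-level branching set to $n^{1/t}$ to produce approximation $\alpha=n^{1/t}$ and the remaining $\lfloor t\rfloor-1$ levels absorbing the rest of the length, yields $q=\Omega(\log^{\lfloor t\rfloor}n)$.

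The main obstacle is the recursive indistinguishability step carried out against an \emph{adaptive} algorithm that knows $y$ in full: one must show adaptivity buys essentially nothing at the relevant scales, i.e., that conditioned on the transcript so far the posterior over the not-yet-revealed shifts stays close to uniform and the next probe's answer is nearly distribution-independent, so that TV-distance accumulates only $o(1/r)$ per level. Making this precise calls for a careful ``exposure'' process tracking which blocks, at each level, the algorithm has resolved, together with a proof that an unresolved block's shift is information-theoretically hidden---and this is exactly where repetitiveness is used quantitatively, since near-equality of distinct blocks is what keeps probes uninformative. A secondary obstacle is the combinatorial lemma (i): establishing, uniformly over all alignments, a clean lower bound on the edit distance between a hierarchical string and its multi-level random rotation, for which I would use an ``alignment charging'' argument attributing $\Omega(b_j)$ unmatched blocks at level $j$ to the level-$j$ rotation.
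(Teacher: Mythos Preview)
Your construction has a structural flaw at the heart of the edit-distance separation. You build $y$ so that its level-$j$ blocks are near-copies of one another (this is your repetitiveness), and you obtain $\Dfar$ from $y$ by composing random cyclic rotations of the block sequence at each level. But repetitiveness is precisely what makes such rotations \emph{cheap} in edit distance: if the blocks $B_1,\ldots,B_{b_j}$ at level $j$ are pairwise close, then a cyclic shift by one block position can be realized by aligning each $B_i$ against the original $B_{i+1}$, at total cost $\sum_i \ed(B_i,B_{i+1})$, which is small by assumption; iterating over levels keeps the cost small. Thus your claim (i)---that the self-similar structure ``prevents a high-level rotation from being absorbed cheaply by the lower-level matchings''---is exactly backwards: self-similarity is what \emph{permits} cheap absorption. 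With your construction, strings in $\Dfar$ will typically be close to $y$, not far, and the promise gap collapses. A secondary red flag is the asymmetry of randomness: $\Dclose$ is nearly a point mass (one mild perturbation) while $\Dfar$ carries independent shifts at every level, so even the single-coordinate marginals of $x$ need not match between the two, threatening indistinguishability already at $q=1$.

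The paper sidesteps this tension by a different design. Both hard distributions are generated by the \emph{same} random mechanism---a substitution product that replaces each symbol by an independently and randomly shifted copy of a base block---and they differ only in which of two independent random base strings seeds the top level. The far/close gap in edit distance comes entirely from the two random base strings being far apart, together with a lemma showing that substitution product approximately preserves normalized $\edd$; the random shifts contribute nothing to the gap. Their sole role is to hide which base string was used. Indistinguishability is captured by a pointwise ``uniform $\alpha$-similarity'' of the block distributions, shown to compose multiplicatively under substitution product; this replaces your inductive TV/hybrid argument and handles adaptivity directly via a decision-tree leaf-by-leaf comparison. To make your plan work you would need to introduce the same two-base-string device (or an equivalent source of distance that is orthogonal to the shift randomness), at which point the construction essentially becomes the paper's.
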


We summarize in Table \ref{tbl:results} our results and previous bounds 
for $\DTEP_\beta$ under edit distance and Ulam distance. 
For completeness, we also present known results for a common
query model where the algorithm has query 
access to both strings (henceforth referred to as the \emph{symmetric query} model).
We point out two implications of our bounds on the asymmetric query complexity:
\begin{itemize}
\compactify
\item 
There is a strong separation between edit distance and Ulam distances.
In the Ulam metric, a {\em constant} approximation is achievable
with only $O(\log n)$ asymmetric queries (see \cite{ACCL04}, which builds on \cite{EKKRV00}).
In contrast, for edit distance, we show an exponentially higher
complexity lower bound,
of $2^{\Omega(\log n/\log\log n)}$, even for a larger (polylogarithmic) approximation.
\item 
Our query complexity upper and lower bounds are nearly-matching, 
at least for a range of parameters. At one extreme,
approximation $O(n^{1/2})$ can be achieved with $O(\log n)$ queries, 
whereas approximation $n^{1/2-\eps}$ already requires $\Omega(\log^2 n)$ queries. 
At the other extreme, approximation $\alpha=(\log n)^{1/\eps}$
can be achieved using $n^{O(\eps)}$ queries, 
and requires $n^{\Omega(\eps/\log\log n)}$ queries.
\end{itemize}

\begin{table}[htb]
  \centering
  \begin{tabular}[t]{|l|l|l|l|l|}
    \hline
    Model & Metric & Approx. & Complexity & Remarks \\
    \hline\hline
    \multirow{2}{*}{
      \begin{tabular}{@{}l@{}}
        Near-linear\\
        time
      \end{tabular}
    } 
    & Edit & $(\log n)^{O(1/\eps)}$ & $n^{1+\eps}$ 
      & Theorem \ref{thm:introAlgorithm}
    \\
      & Edit & $2^{\tO(\sqrt{\log n})}$ & $n^{1+o(1)}$ 
      & \cite{AO-edit}\\
    \hline
    \multirow{3}{*}{
      \begin{tabular}{@{}l@{}}
        Symmetric \\
        query\\
        complexity
      \end{tabular}
    } 
      & Edit & $n^\eps$ 
      & $\tO(n^{\maxx{1-2\eps,(1-\eps)/2}})$
      & \cite{BEK+03} (fixed $\beta>1$)
    \\
      & Ulam & $O(1)$ & $\tO(\beta+\sqrt{n})$
      & \cite{AN-ulam} 
    \\
      & Ulam+edit & $O(1)$ & \multicolumn{1}{r|}{$\tilde\Omega(\beta+\sqrt{n})$}
      & \cite{AN-ulam} \\
    \hline
    \multirow{5}{*}{
      \begin{tabular}{@{}l@{}}
        Asymmetric \\
        query\\ 
        complexity
      \end{tabular}
    } 
      & Edit & $n^{1/t}$ & $O(\log^{t-1} n)$
      & Theorem \ref{thm:introUpperBound} (fixed $t\in\mathbb N, \beta>1$)
    \\
      & Edit & $n^{1/t}$ & \multicolumn{1}{r|}{$\Omega(\log^{\floor{t}} n)$}
      & Theorem \ref{thm:LB} (fixed $t\notin \mathbb N, \beta>1$) 
    \\
      & Edit & $(\log n)^{1/\eps}$ & $\beta n^{O(\eps)}$
      & Theorem \ref{thm:introUpperBound}
    \\
      & Edit & $(\log n)^{1/\eps}$ & \multicolumn{1}{r|}{$n^{\Omega(\eps/\log\log n)}$}
      & Theorem \ref{thm:LB} (fixed $\beta>1$) 
    \\
      & Ulam & $2+\eps$ & $O_\eps(\beta\log\log\beta\cdot\log n)$
      & \cite{ACCL04} \\
    \hline
  \end{tabular}
  \caption{Known results for
$\DTEP_\beta$ 
and arbitrarily $0<\veps<1$.}
  \label{tbl:results}
\end{table}

\subsection{Connections of Asymmetric Query Model to Other Models}

The asymmetric query model is connected and has implications for two previously studied models, namely the communication complexity model
and the symmetric query model (where the algorithm has query access to both strings).
Specifically, the former is less restrictive than our model
(i.e., easier for algorithms) 
while the latter is more restrictive (i.e., harder for algorithms).
Our upper bound
gives an $O(\beta n^\eps)$ one-way communication complexity protocol for
$\DTEP_\beta$ for polylogarithmic approximation.

\paragraph{Communication Complexity.} In this setting, Alice and Bob
each have a string, and they need to solve the $\DTEP_\beta$ problem
by way of exchanging messages. The measure of complexity is the number
of bits exchanged in order to solve $\DTEP_\beta$ with probability at
least $2/3$.

The best non-trivial upper bound known is $2^{\tilde O(\sqrt{\log
    n})}$ approximation with constant communication via 
\cite{OR-edit,KOR00}. The only known lower bound says that approximation $\alpha$
requires $\Omega(\frac{\log n\ /\ \log\log n}{\alpha})$
communication~\cite{AK07, AJP-sketch}.

The asymmetric model is ``harder'', in the sense that the query
complexity is at least the communication complexity, up to a factor of
$\log|\Sigma|$ in the complexity, since Alice and Bob can simulate the
asymmetric query algorithm. In fact, our upper bound implies a communication
protocol for the same $\DTEP_\beta$ problem with the same complexity,
and it is a one-way communication protocol. Specifically, Alice can
just send the $O(\beta n^\eps)$ characters
queried by the query algorithm in the asymmetric query model.
This is the first communication protocol
achieving polylogarithmic approximation for $\DTEP_\beta$ 
under edit distance with $o(n)$ communication.

\paragraph{Symmetric Query Complexity.} 
In another related model, the measure of complexity 
is the number of characters the algorithm has to query in {\em both} strings
(rather than only in one of the strings).
Naturally, the query complexity in this model is at least
as high as the query complexity in the asymmetric model.
This model has been introduced (for the edit distance)
in~\cite{BEK+03}, and its main advantage is that it leads to {\em
  sublinear-time} algorithms for $\DTEP_\beta$. The algorithm 
of~\cite{BEK+03} makes $\tO(n^{1-2\eps}+n^{(1-\eps)/2})$ queries 
(and runs in the same time), and achieves $n^\eps$ approximation. 
However, it only works for $\beta=O(1)$.

In the symmetric query model, the best query lower bound is of 
$\Omega(\sqrt{n/\alpha})$ for any approximation factor $\alpha>1$ for
both edit and Ulam distance~\cite{BEK+03,AN-ulam}.
The lower bound essentially arises
from the birthday paradox.
Hence, in terms of separating edit distance from the Ulam metric,
this symmetric model can give at most a quadratic separation in the query
complexity (since there exists a trivial algorithm with $2n$
queries). In contrast, in our asymmetric model, 
there is no lower bound based on the birthday paradox, and, in fact, the
Ulam metric admits a constant approximation with $O(\log n)$
queries~\cite{EKKRV00,ACCL04}. Our lower bound for edit distance
is exponentially bigger.

\subsection{Techniques}

This section briefly highlights the main techniques and tools 
used in the course of proving our results.
A more informative proof overview for the algorithmic results,
including the near-linear time algorithm and the query upper bounds, 
appears in Section \ref{sec:ubMain}.
The proof overview for the query lower bounds appears in Section \ref{sec:lb_main}.
The complete proofs are on Sections
\ref{apx:upperBoundFull} and \ref{sec:full_lb}, respectively.

\paragraph{Algorithm and Query Complexity Upper Bound.}

A high-level intuition for the near-linear time algorithm is as
follows. The classical dynamic programming for edit distance runs in time
that is the
product of the lengths of the two strings. It seems plausible that, if we manage to
``compress'' one string to size $n^\eps$, we may be able to compute
the edit distance in time only $n^\eps\cdot n$. Indeed, this is exactly
what we accomplish. Specifically, our ``compression'' is achieved via a sampling
procedure, which subsamples $\approx n^\eps$ positions of $x$, and then
computes $\ed(x,y)$ in time
$n^{1+\eps}$. Of course, the main challenge is, by far,
subsampling $x$ so that the above is possible.

Our asymmetric query upper bound has two major components. 
The first component is 
a {\em characterization} of the edit distance by a different ``distance'',
denoted $\EC$, which approximates $\ed(x,y)$ well. The
characterization is parametrized by an integer parameter $b\ge
2$ governing the following tradeoff: a small $b$ leads to a better approximation,
whereas a large $b$ leads to a faster algorithm.
The second component is a {\em sampling algorithm}
that approximates $\EC$ for some settings of the parameter $b$, up to a
constant  factor, by querying a small number of positions in $x$.

Our characterization is based on a hierarchical decomposition of the edit
distance computation, which is obtained by recursively partitioning 
the string $x$, each time into $b$ blocks.
We shall view this decomposition as a $b$-ary tree. Then, intuitively, the
$\EC$-distance at a node is the sum, over all $b$ children, of the
minima of the $\EC$-distances at these children over a certain range of
displacements (possible ``shifts'' with respect to the other strings). At the leaves (corresponding to single characters
of $x$), the $\EC$-distance is simply the Hamming distance to
corresponding positions in $y$.

We show that our characterization is an $O(\tfrac{b}{\log b}\log n)$
approximation to $\ed(x,y)$. Intuitively, the
characterization manages to break-up the edit distance computation
into {\em independent} distance computations on smaller substrings. The
independence is crucial here as it removes the need to  find a {\em
  global} alignment between the two strings, which is one of the main
reasons why computing edit distance is hard. We note that while the
high-level approach of recursively partitioning 
the strings is somewhat similar to the previous
approaches from~\cite{BEK+03, OR-edit, AO-edit}, the
technical development here is quite different. 
The previous hierarchical approaches all relied on the following
recurrence  relation for the approximation factor $\alpha$:
$$
 \alpha(n)=c\cdot \alpha(n/b)+O(b),
$$
for some $c\ge2$. It is easy to see that one obtains
$\alpha(n)\ge 2^{\Omega(\sqrt{\log n})}$ for any choice of $b\ge 2$. In
contrast, our characterization is much more refined and has {\em
  no multiplicative factor loss}, i.e., $c=1$ and hence  $\alpha(n)=O(b\log_bn)$.
We note that our characterization achieves a {\em logarithmic}
approximation for $b=O(1)$ (although, we do not know efficient
algorithms for this setting of $b$).

The second component of our query algorithm is a careful sampling
procedure that approximates $\EC$-distance up to a constant
factor. The basic idea is to prune the above tree 
by subsampling at each node a subset of its children. In particular, 
for a tree with arity $b=(\log n)^{1/\eps}$, the hope is to subsample
$(\log n)^{O(1)}$ children and use Chernoff-type bounds to argue that
the subsample approximates well the $\EC$-distance at that node. We note
that $\Omega(\log n)$ samples of children seem necessary due to the minimum
operation taken at each node. The estimate at
each node has to hold with high probability so that we can apply the union bound.
After such a pruning of the tree, we would be
left with only $(\log n)^{O(\log_b n)}=n^{O(\eps)}$ leaves, 
i.e., $n^{O(\eps)}$ positions of $x$ to query.

However, this natural approach of subsampling $(\log n)^{O(1)}$ children at
each node does not work when $\beta \gg 1$. Instead, we
develop a {\em non-uniform subsampling technique}: for different nodes
we
subsample children at different, carefully-chosen rates. From a
high-level, our deployed technique is somewhat reminiscent of
the hierarchical decomposition and subsampling technique introduced
by Indyk and Woodruff~\cite{IW05} in the context of sketching and streaming algorithms. 

\paragraph{Query Complexity Lower Bound.}
The gist of our lower bound is designing two ``hard
distributions'' $\D_0$
and $\D_1$, on strings in $\Sigma^n$, for which
it is hard to distinguish with only a few queries to $x$
whether $x\in \D_0$ or $x\in D_1$. 
At the same time, every two strings $x,y$ in the support of the same $\D_i$ 
are at a small edit distance: $\ed(x,y)\le n/(\alpha\beta)$;
but for a mixed pair $x\in \D_0$ and $y\in D_1$, the
distance is large: $\ed(x,y)>n/\beta$.

We start by making the following core observation. Take two random
strings $z_0,z_1\in\zo^n$. Each $\D_i$, $i \in \zo$,
is generated by applying a cyclic shift by a random displacement
$r \in [1,n/100]$ to the corresponding $z_i$. We show that in order to discover,
for an input string, from which $\mathcal D_i$ it came from, 
one has to make at least $\Omega(\log n)$ queries. Intuitively, this
follows from the fact that if the number $q$ of queries is small
($q=o(\log n)$) then
the algorithm's view is close to the uniform distribution on $\zo^q$,
no matter which positions are queried. 
Nevertheless, the edit distance between the two random strings is
likely to be large, 
and a small shift will not change this significantly.

We then amplify the above query lower bound by applying the same idea 
recursively.
In a string generated according to $\mathcal D_i$'s, we replace
every symbol $a \in \zo$  by a random
string selected independently from $\mathcal D_a$.
This way we obtain two distributions on strings of length $n' = n^2$,
that require $\Omega(\log^2 n) = \Omega(\log^2 n')$ queries to be told apart. 
We call the above operation of replacing symbols by strings that come from other distributions a \emph{substitution product}.
Strings created this way consist of $n$ blocks of length $n$ each. 
Intuitively, to distinguish from which of the new distributions an
input string comes from, one has to discover
for at least $\Omega(\log n)$ blocks which distribution $\mathcal D_a$
the respective block comes from.
By applying the recursive step multiple times, we obtain a $2^{\Omega(\frac{\log n}{\log\log n})}$ lower bound for a
\polylg approximation factor. 

To formally prove our result, we develop several tools. First, we need
tools for analyzing the behavior of edit distance under the product
substitution. It turns out that to control edit distance under
the substitution product, we need to work with a large alphabet
$\Sigma$. In the final step of the construction, we map the large
alphabet to sufficiently long random binary strings,
thereby extending the lower bound to the binary alphabet as well.

Second, we need tools for analyzing indistinguishability of our
distributions under a small number of queries. For this, we introduce
a notion of {\em similarity} of distributions. This notion smoothly composes with the
substitution product operation, which amplifies the similarity. We also show
that random acyclic shifts of random strings are likely to produce
strings with high similarity. Finally, we show that if an algorithm is
able to distinguish distributions meeting our similarity notion, then it must make
many queries. We believe that these tools and ideas behind them may
find applications in showing query lower bounds for other problems.

\subsection{Future Directions}

We study a new query model that seems to tap into the hardness 
stemming from ``repetitiveness'' of strings,
obtaining eventually the first algorithm that 
computes a polylogarithmic approximation for edit distance in
near-linear time.
We believe that our techniques may pave the way to significantly improved algorithms
for other tasks involving edit distance, such as the nearest neighbor search.
We mention below a few natural goals for future investigation. 

\myparagraph{Symmetric Model}
Extend our results to the symmetric query model. A lower bound would show a
separation between edit and Ulam distances in this model as well.
It seems plausible that a variation of our hard distribution leads to
a lower bound of the form $n^{1/2+\Omega(1/\log\log n)}$ for
polylogarithmic approximation. The current lower bound is of the form
$\Omega(\sqrt{n/\alpha})$. A query upper bound would likely lead to improved
sub-linear time algorithms. 

\myparagraph{Embedding Lower Bounds}
Is there an $\omega(\log n)$ lower bound for the distortion required to 
embed edit distance into $\ell_1$?
Such a lower bound would answer a well-known open question \cite{Mat-open}.
Note that the core component of our hard distribution,
the shift metric (i.e., hamming cube augmented with cyclic shift operations), 
is known to require distortion $\Omega(\log n)$ \cite{KR06}.

\myparagraph{Communication Complexity}
Prove a communication complexity upper bound of $n^{\eps}$ for all
distance regimes, i.e., independent of $\beta$ (instead of the current
$\beta\cdot n^{\eps}$), for
$\DTEP_\beta$ with polylogarithmic approximation.

\myparagraph{Improved Algorithms}
Tighten the asymmetric query complexity upper bound to
$n^{\tfrac{\eps\log\log\log n}{\log\log n}}$ for approximation $(\log n)^{O(1/\eps)}$, 
perhaps by a more careful subsampling procedure. 
In particular, it seems plausible that one may only sample $(\log\log
n)^{O(1)}$ children at each node,  instead of the present $(\log n)^{O(1)}$.
This may ultimately lead to an algorithm that runs in time $n^{1+o(1)}$ 
and approximates edit distance within a factor of, say, $O(\log^2 n)$.

Perhaps more ambitiously, can one directly use our edit distance characterization 
to compute an $O(\log n)$ approximation in subquadratic time?

\section{Outline of Our Results}

We now sketch the proofs of our results.

\subsection{Outline of the Upper Bound}
\label{sec:ubMain}

In this section, we provide an overview of our algorithmic results, in
particular of the proof of Theorem~\ref{thm:introUpperBound}. Full
statements and proofs of the results appear in
Section~\ref{apx:upperBoundFull}.

Our proof has two major components. The first one is a
characterization of edit distance by a different ``distance'',
denoted $\EC$, which approximates edit distance well.  The second
component is a sampling algorithm that approximates $\EC$ 
up to a constant factor by making a
small number of queries into $x$. We describe each of the components
below. In the following, for a string $x$ and integers $s,t\ge 1$,
$x[s:t]$ denotes the substring of $x$ comprising of $x[s],\ldots,x[t-1]$.

\subsubsection{Edit Distance Characterization: the $\EC$-distance}

Our characterization of $\ed(x,y)$ may be viewed as computation on a tree, where the
nodes correspond to substrings $x[s:s+l]$, for some start position $s\in[n]$ and
length $l\in[n]$. The root is the entire string $x[1:n+1]$. For a node
$x[s:s+l]$, we obtain its children by partitioning $x[s:s+l]$ into $b$
equal-length blocks, $x[s+j\cdot l/b:s+(j+1)\cdot
l/b]$, where $j\in\{0,1,\ldots b-1\}$. Hence $b\ge2$ is the arity of the
tree. The height of the tree is $h \eqdef \log_b n$.
We also use the following notation: for level $i\in\{0,1,\ldots h\}$,
let $l_i \eqdef n/b^i$ be the length of strings at that level. Let $B_i \eqdef \{1,l_i+1,2l_i+1,\ldots\}$
be the set of starting positions of blocks at level $i$. 

The characterization is asymmetric in the two strings and is defined
from a node of the tree to a position $u\in[n]$ of the string $y$.
Specifically, if $i=h$, then the {\em $\EC$-distance} of $x[s]$ to a position
  $u$ is 0 only if $x[s]= y[u]$ and $u\in [n]$, and 1 otherwise.
For $i\in\{0,1,\ldots h-1\}$ and $s\in B_i$, we recursively define the
$\EC$-distance $\EC(i,s,u)$ of $x[s:s+l_i]$ to a position
$u$ as follows. Partition
$x[s:s+l_i]$ into $b$ blocks of length $l_{i+1}=l_i/b$, starting at
positions $s+t_j$, where $t_j \eqdef j\cdot l_{i+1}$, $j\in \{0,1,\ldots b-1\}$. 
Intuitively, we would like to define the $\EC$-distance $\EC(i,s,u)$ 
as the summation of the $\EC$-distances of each block 
$x[s+t_j:s+t_j+l_{i+1}]$ to the corresponding position in $y$, i.e., $u+t_j$.
Additionally, we allow each block to be displaced by some shift $r_j$, 
incurring an additional charge of $|r_j|$ in the $\EC$-distance.  
The shifts $r_j$ are chosen such as to minimize the final distance.
Formally, 
\begin{equation}
\label{eqn:ecDefinition}
\EC(i,s,u) \eqdef \sum_{j=0}^{b-1} \min_{r_j\in \mathbb Z} \EC(i+1,s+t_j,u+t_j+r_j)+\left|r_j\right|.
\end{equation}
The $\EC$-distance from $x$ to $y$ is just
the $\EC$-distance from  $x[1:n+1]$ to position 1, i.e., $\EC(0,1,1)$.  

We illustrate the $\EC$-distance for $b=4$ in
Figure~\ref{fig:ecDistance}. 
Notice that without the shifts (i.e., when all $r_j=0$), the
$\EC$-distance is exactly equal to the Hamming distance between the corresponding strings.
Hence the shifts $r_j$ are what differentiates the Hamming distance and $\EC$-distance.

\begin{figure}[htb]
\begin{center}
\includegraphics[scale=0.45]{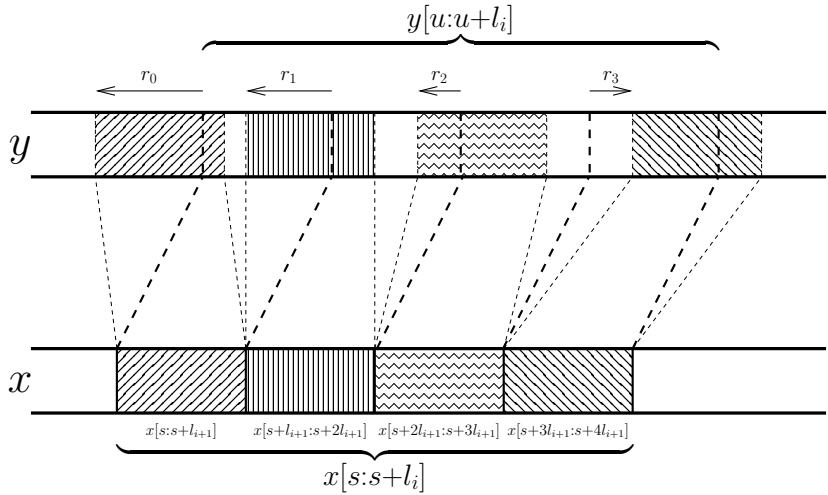}
\end{center}
\caption{Illustration of the $\EC$-distance $\EC(i,s,u)$ for
  $b=4$. The pairs of blocks of the same shading are the blocks whose 
  $\EC$-distance is used for computing $\EC(i,s,u)$.}
\label{fig:ecDistance}
\end{figure}

We prove that the $\EC$-distance is a $O(bh)=O(\tfrac{b}{\log b}\log n)$ approximation to
$\ed(x,y)$ (see Theorem~\ref{thm:ecDistance}). For
$b=2$, the $\EC$-distance is a $O(\log n)$ approximation to
$\ed(x,y)$, but unfortunately, we do not know how to compute it or approximate it well in
better than quadratic time. It is also easy to observe that one can
compute a $1+\eps$
approximation to $\EC$-distance in $\tilde O_\eps(n^2)$ time via a
dynamic programming that considers only $r_j$'s which are
powers of $1+\eps$. 
Instead, we show that, using the query algorithm (described next), we
can compute a $1+\eps$ approximation to 
$\EC$-distance for $b=(\log n)^{O(1/\eps)}$ in $n^{1+\eps}$ time.

\subsubsection{Sampling Algorithm}

We now describe the ideas behind our sampling algorithm. The sampling
algorithm approximates the $\EC$-distance between $x$ and $y$ up to a
constant factor. The query complexity is $Q\le \beta\cdot (\log
n)^{O(h)}=\beta\cdot (\log n)^{\log_b n}$ for distinguishing $\EC(0,1,1)> n/\beta$
from $\EC(0,1,1)\le n/(2\beta)$. For the rest of this overview, it
is instructive to think about the setting where $\beta=n^{0.1}$ and
$b=n^{0.01}$, although our main result actually follows by setting
$b=(\log n)^{O(1/\eps)}$.

The idea of the algorithm is to prune the characterization tree, and in particular prune the
children of each node. If we retain only
$\polylog n$ children for each node, we would obtain the claimed $Q\le (\log
n)^{O(h)}$ leaves at the bottom, which correspond to the sampled
positions in $x$. The main challenge is how to perform this pruning.

A natural idea is to uniformly subsample $\polylog n$ out of $b$ children at each node, and use Chernoff-type concentration bounds to argue that
Equation~\eqref{eqn:ecDefinition} may be approximated only from the
$\EC$-distance estimates of the subsampled children. Note that, 
since we use the minimum operator at each node, we have to 
aim, at each node, for an estimate that holds with high probability.

How much do we have to subsample at each node? The ``rule of thumb''
for a Chernoff-type bound to work well is as follows. Suppose we have
quantities $a_1,\ldots a_m\in [0,\rho]$ respecting an upper bound
$\rho>0$, and let $\sigma=\sum_{j\in [m]} a_j$. 
Suppose we subsample several $j\in[m]$ to form
a set $J$. Then, in order to estimate $\sigma$ well (up
to a small multiplicative factor) from $a_j$ for $j\in J$, we need to subsample essentially
a total of $|J|\approx \tfrac{\rho}{\sigma}\cdot m\log m$ positions $j\in [m]$. We call
this Uniform Sampling Lemma (see Lemma~\ref{lem:uniformSampling}
for complete statement).

With the above ``sampling rule'' in mind, we can readily see that, at the top
of the tree, until a level $i$, where $l_i=n/\beta$, there is no
pruning that may be done (with the notation from above, we have
$\rho=l_i=n/\beta$ and $\sigma=n/\beta$).
However, we hope to prune the tree  at the subsequent levels.

It turns out that such pruning is not possible as
described. Specifically, consider a node $v$ at
level $i$ and its children $v_j$, for $j=0,\ldots b-1$. Suppose each child contributes a
distance $a_j$ to the sum $\EC$ at node $v$ (in
Equation~\eqref{eqn:ecDefinition}, for fixed $u$). Then, because of the bound on length of
the strings, we have that $a_j\le l_{i+1}=(n/\beta)/b$. At the same
time, for an average node $v$, we have $\sum_{j=0}^{b-1} a_j\approx
l_i/\beta=n/\beta^2$. By the Uniform Sampling Lemma from above, we need to take a
subsample of size
$|J|\approx \tfrac{n/(\beta b)}{n/\beta^2}\cdot b\log b=\beta\log b$. If
$\beta$ were constant, we would obtain $|J|\ll b$ and hence prune the
tree (and, indeed, this approach works for $\beta\ll b$). However,
once $\beta\gg b$, such pruning does not seem possible. In fact, one
can give counter-examples where such pruning approach fails to
approximate the $\EC$-distance.

To address the above challenge, we develop a way to 
prune the tree {\em non-uniformly}. Specifically, for different nodes we will
subsample its children at different, well-controlled rates. In fact, for each node we
will assign a ``precision'' $w$ with the requirement that a node $v$,
at level $i$, with precision $w$, must estimate its $\EC$-distances to
positions $u$ up to an {\em additive error} $l_i/w$. The pruning and
assignment of precision will proceed top-bottom, starting with
assigning a precision $4\beta$ to the root node. Intuitively, the higher
the precision of a node $v$, the denser is the subsampling in the subtree
rooted at $v$.

Technically, our main tool is a {\em Non-uniform Sampling
  Lemma}, which we use to assign the necessary precisions to nodes. It
may be stated as follows (see 
Lemma~\ref{lem:nonUniformSampling} for a more complete
statement).  The lemma says that there exists some distribution $\W$ and a
reconstruction algorithm $R$ such that the following two conditions
hold:
\begin{itemize}
\item
Fix some $a_j\in[0,1]$
for $j\in[m]$, with $\sigma=\sum_j a_j$. Also, pick $w_j$ i.i.d.~from
the distribution $\W$ for 
each $j\in[m]$. Let $\hat a_j$ be estimators of $a_j$, up
to an additive error of $1/w_j$, i.e., $|a_j-\hat a_j|\le
1/w_j$. Then the algorithm $R$, given $\hat a_j$ and $w_j$ for $j\in[m]$,
outputs a value that is inside $[\sigma-1,\sigma+1]$, with high probability.
\item
$\EE{w\in \W}{w}=\polylog m$.
\end{itemize}
To internalize this statement, fix 
$\sigma=10$, and consider two extreme cases. At one extreme,
consider some set of 10 $j$'s such that $a_j=1$, and all the others
are 0. In this case, the previous uniform subsampling rule does
not yield any savings (to continue the parallel, uniform sampling can be seen as
having $w_j=m$ for the sampled $j$'s and $w_j=1$ for the
non-sampled $j$'s). Instead, it would suffice to take all
$j$'s, but approximate them up to ``weak'' (cheap) precision (i.e.,
set $w_j\approx 100$ for all $j$'s). At
the other extreme is the case when $a_j=10/m$ for all $j$. In this case,
subsampling would work but then one requires a much ``stronger''
(expensive) precision, of the order of $w_j\approx m$. These 
examples show that one cannot choose all $w_j$ to be equal. If $w_j$'s
are too small, it is impossible to estimate $\sigma$. If $w_j$'s are too
big, the expectation of $w$ cannot be bounded by $\polylog m$, and the subsampling is too expensive.

The above lemma is somewhat inspired by the sketching and streaming
technique introduced by Indyk and Woodruff~\cite{IW05} (and used for the $F_k$ moment
estimation), where one partitions elements 
$a_j$ by weight level, and then performs corresponding subsampling in
each level. Although related, our approach to the above lemma differs:
for example, we avoid any definition of the weight level (which was usually
the source of some additional complexity of the use of the
technique). For completeness, we mention that the distribution $\W$ is
essentially the distribution with probability distribution function
$f(x)=\nu/x^2$ for $x\in[1,m^3]$ and a normalization constant
$\nu$. The algorithm $R$ essentially uses the samples that were
(in retrospect) well-approximated, i.e., $\hat a_j\gg 1/w_j$, in order
to approximate $\sigma$.

In our $\EC$-distance estimation algorithm, we use both uniform and non-uniform
subsampling lemmas at each node to both prune the tree and assign the
precisions to the subsampled children. We note that
the lemmas may be used to 
obtain a multiplicative $(1+\eps')$-approximation for arbitrary small $\eps'>0$
for each node.
To obtain this, it is necessary to
use $\eps \approx \eps'/\log n$, since over $h\approx \log n$ levels, we collect a multiplicative approximation factor of $(1+\eps)^h$, which remains constant only as
long as $\eps=O(1/h)$.

\subsection{Outline of the Lower Bound} 
\label{sec:lb_main}

In this section we outline the proof of Theorem \ref{thm:LB}.
The full proof appears in Section~\ref{sec:full_lb}.
Here, we focus on the main ideas, skipping or simplifying 
some of the technical issues.

As usual, the lower bound is based on constructing ``hard distributions'', 
i.e., distributions (over inputs) that cannot be distinguished using few queries,
but are very different in terms of edit distance.
We sketch the construction of these distributions in Section \ref{sec:sketchHardDist}. The full construction appears in Section \ref{sect:hard_distro}.
In Section \ref{sec:sketchDistinguish}, we sketch the machinery that
we developed to prove that distinguishing these distributions requires 
many queries; the details appear in Section~\ref{sec:tools_distinguish}.
We then sketch in Section \ref{sec:sketchED} the tools needed to prove
that the distributions are indeed very different in terms of edit distance;
the detailed version appears in Section \ref{sec:tools_edit_distance}.

\subsubsection{The Hard Distributions} \label{sec:sketchHardDist}

We shall construct two distributions $\mathcal D_0$ and $\mathcal D_1$ 
over strings of a given length $n$. The distributions satisfy the following properties. 
First, every two strings in the support of the same distribution $\mathcal D_i$, denoted $\supp(\D_i)$, are close in edit distance.
Second, every string in $\supp(\D_0)$ is far in edit distance from every string in $\supp(\D_1)$. 
Third, if an algorithm correctly distinguishes (with probability at least $2/3$) whether its input string is drawn from $\mathcal D_0$ or from $\mathcal D_1$, it must make many queries to the input.

Given two such distributions, we let $x$ be any string from $\supp(\mathcal D_0)$. This string is fully known to the algorithm. The other string $y$, to which the algorithm only has query access, is drawn from either $\mathcal D_0$ or $\mathcal D_1$. Since distinguishing the distributions apart requires many queries to the string, so does approximating edit distance between $x$ and $y$.

\paragraph{Randomly Shifted Random Strings.}
The starting point for constructing these distributions is the following idea. Choose at random two base strings $z_0,z_1\in \zo^n$. These strings are likely to satisfy some ``typical properties'', e.g.\ be far apart in edit distance (at least $n/10$). Now let each $\mathcal D_i$ be the distribution generated by selecting a cyclic shift of $z_i$ by $r$ positions to the right, where $r$ is a uniformly random integer between $1$ and $n/1000$. Every two strings in the same $\supp(\D_i)$ are at distance at most $n/500$, because a cyclic shift by $r$ positions can be produced by $r$ insertions and $r$ deletions. At the same time, by the triangle inequality,
every string in $\supp(D_0)$ and every string in $\supp(\D_1)$ must be at distance at least $n/10 - 2 \cdot n/500 \ge n/20$.

How many queries are necessary to learn whether an input string is drawn from $\mathcal D_0$ or from $\mathcal D_1$? If the number $q$ of queries is small, then the algorithm's view
is close to a uniform distribution on $\zo^q$
under both $\D_0$ and $\D_1$.
Thus, the algorithm is unlikely to distinguish the two distributions with probability significantly higher than $1/2$. This is the case because each base string $z_i$ is chosen at random and because we consider many cyclic shifts of it. Intuitively, even if the algorithm knows $z_0$ and $z_1$, the random shift makes the algorithm's view a nearly-random pattern, because of the random design of $z_0$ and $z_1$. Below we introduce rigorous tools for such an analysis. They prove, for instance, that even an adaptive algorithm for this case, and in particular every algorithm that distinguishes edit distance $\le n/500$ and $\ge n/20$, must make $\Omega(\log n)$ queries.

One could ask whether the $\Omega(\log n)$ lower bound for the number of queries in this construction can be improved.
The answer is negative, because for a sufficiently large constant $C$, by querying any consecutive $C\log n$ symbols of $z_1$, one obtains a pattern that most likely does not occur in $z_0$, and therefore, can be used to distinguish between the distributions.
This means that we need a different construction to show a superlogarithmic lower bound.

\paragraph{Substitution Product.} 
We now introduce the \emph{substitution product}, which plays an important role in our lower bound construction.
Let $\mathcal D$ be a distribution on strings in $\Sigma^m$. For each $a \in \Sigma$, let $\mathcal E_a$ be a distribution on $(\Sigma')^{m'}$, and denote their entire collection by $\mathcal E \eqdef (\mathcal E_a)_{a \in \Sigma}$. Then the substitution product $\mathcal D \sprod \mathcal E$ is the distribution generated
by drawing a string $z$ from $\mathcal D$, and independently replacing every symbol $z_i$ in $z$ by a string $B_i$ drawn from $\mathcal E_{z_i}$.

Strings generated by the substitution product consist of $m$ blocks. Each block is independently drawn from one of the $\mathcal E_a$'s, and a string drawn from $\mathcal D$ decides which $\mathcal E_a$ each block is drawn from.

\paragraph{Recursive Construction.}
We build on the previous construction with two random strings shifted at random, and extend it by introducing recursion. For simplicity, we show how this idea works for two levels of recursion. We select two random strings $z_0$ and $z_1$ in $\zo^{\sqrt{n}}$. We use a sufficiently small positive constant $c$ to construct two distributions $\mathcal E_0$ and $\mathcal E_1$. $\mathcal E_0$ and $\mathcal E_1$ are generated by taking a cyclic shift of $z_0$ and $z_1$, respectively, by $r$ symbols to the right, where $r$ is a random integer between 1 and $c\sqrt{n}$. 
Let $\mathcal E \eqdef (\mathcal E_i)_{i \in \zo}$.

Our two hard distributions on $\zo^n$ are $\mathcal D_0 \eqdef \mathcal E_0 \sprod \mathcal E$, and
$\mathcal D_1 \eqdef \mathcal E_1 \sprod \mathcal E$. 
As before, one can show that distinguishing a string drawn from $\mathcal
E_0$ and a string drawn from $\mathcal E_1$ is likely to require
$\Omega(\log n)$ queries. In other words, the algorithm has to
\emph{know} $\Omega(\log n)$ symbols from a string selected from one
of $\mathcal E_0$ and $\mathcal E_1$. Given the recursive structure of
$\mathcal D_0$ and $\mathcal D_1$, the hope is that distinguishing them
requires at least $\Omega(\log^2 n)$ queries, because at least
intuitively, the algorithm ``must'' know for at least $\Omega(\log n)$
blocks which $\E_i$ they come from, each of the blocks requiring
$\Omega(\log n)$ queries. Below, we describe techniques that we use to
formally prove such a lower bound. It is straightforward to show that every two strings drawn from the same $\mathcal D_i$ are at most $4cn$ apart. It is slightly harder to prove that strings drawn from $\mathcal D_0$ and $\mathcal D_1$ are far apart.
The important ramification is that for some constants $c_1$ and $c_2$, distinguishing edit distance $<c_1n$ and $>c_2n$ requires $\Omega(\log^2 n)$ queries, where one can make $c_1$ much smaller than $c_2$.
For comparison, under the Ulam metric, $O(\log n)$ queries suffice for 
such a task (deciding whether distance between a known string and 
an input string is $<c_1n$ or $>c_2n$, assuming $2c_1 < c_2$ \cite{ACCL04}).

To prove even stronger lower bounds, we apply the substitution product
several times, not just once. Pushing our approach to the limit, we prove that distinguishing edit distance $O(n/\polylog n)$ from $\Omega(n)$ requires $n^{\Omega\left({1}/{\log \log n}\right)}$ queries. In this case, $\Theta\left({\log n}/{\log \log n}\right)$ levels of recursion are used.
One slight technical complication arises in this case. Namely, we need to work with 
a larger alphabet (rather than binary). Our result holds true for the binary
alphabet nonetheless, since we show that one can effectively reduce the larger alphabet to the binary alphabet.

\subsubsection{Bounding the Number of Queries}  \label{sec:sketchDistinguish}

We start with definitions. Let $\mathcal D_0$, \ldots, $\mathcal D_k$ be distributions on the same finite set $\Omega$ with $p_1,\ldots,p_k:\Omega \to [0,1]$ as the corresponding probability mass functions.
We say that the distributions are \emph{$\alpha$-similar}, where $\alpha \ge 0$, if for every $\omega \in \Omega$,
$$(1-\alpha)\cdot\max_{i=1,\ldots,k} p_i(\omega) \le \min_{i=1,\ldots,k} p_i(\omega).$$

For a distribution $\mathcal D$ on $\Sigma^n$ and $Q \subseteq[n]$, we write $\mathcal D|_Q$ to denote the
distribution created by projecting every element of $\Sigma^n$ to its coordinates in $Q$.
Let this time $\mathcal D_1$, \ldots, $\mathcal D_k$ be probability distributions on $\Sigma^n$.
We say that they are \emph{uniformly $\alpha$-similar} if for every subset $Q$ of $[n]$,
the distributions $\mathcal D_1|_Q$, \ldots, $\mathcal D_k|_Q$ are $\alpha|Q|$-similar.
Intuitively, think of $Q$ as a sequence of queries that the algorithm makes. If the distributions
are uniformly $\alpha$-similar for a very small $\alpha$, and $|Q| \ll 1/\alpha$,
then from the limited point of view of the algorithm (even an adaptive one), 
the difference between the distributions is very small.

In order to use the notion of uniform similarity for our construction, we prove the following three key lemmas.

\myparagraph{Uniform Similarity Implies a Lower Bound on the Number of Queries (Lemma~\ref{lemma:SimilarityAlgorithms})}
This lemma formalizes the ramifications of uniform $\alpha$-similarity for a pair of distributions.
It shows that if an algorithm (even an adaptive one) distinguishes the two distributions with probability at least $2/3$, then it has to make at least $1/(6\alpha)$ queries. The lemma implies that it suffices to bound the uniform similarity in order to prove a lower bound on the number of queries.

The proof is based on the fact that for every setting of the algorithm's random bits, the algorithm can be described as a decision tree of depth $q$, if it always makes at most $q$ queries. Then, for every leaf, the probability of reaching it does not differ by more than a factor in $[1-\alpha q,1]$ between the two distributions. This is enough to bound the probability the algorithm outputs the correct answer for both the distributions.

\myparagraph{Random Cyclic Shifts of Random Strings Imply Uniform Similarity (Lemma~\ref{lem:RotationSimilarity})} 
This lemma constructs block-distributions that are uniformly similar
using cyclic shifts of random base strings.
It shows that if one takes $n$ random base strings in $\Sigma^n$ and creates $n$ distributions by shifting each of the strings by a random number of indices in $[1,s]$, then with probability at least $2/3$ (over the choice of the base strings) the created distributions are uniformly $O(1/\log_{|\Sigma|}\frac{s}{\log n})$-similar.

It is easy to prove this lemma for any set $Q$ of size 1. In this case, every shift gives an independent random bit, and the bound directly follows from the Chernoff bound. A slight obstacle is posed by the fact that for $|Q|\ge 2$, sequences of $|Q|$ symbols produced by different shifts are not necessarily independent, since they can share some of the symbols. To address this issue, we show that there is a partition of shifts into at most $|Q|^2$ large sets such that no two shifts of $Q$ in the same set overlap. Then we can apply the Chernoff bound independently to each of the sets to prove the bound.

In particular, using this and the previous lemmas, one can show the result claimed earlier that shifts of two random strings in $\zo^n$ by an offset in $[1,cn]$ produce distributions that require $\Omega(\log n)$ queries to be distinguished. It follows from the lemma that the distributions are likely to be uniformly $O(1/\log n)$-similar.

\myparagraph{Substitution Product Amplifies Uniform Similarity (Lemma~\ref{lemma:SimilarityMultiplication})} 
Perhaps the most surprising property of uniform similarity is that it nicely composes with the substitution product. Let $\mathcal D_1$, \ldots, $\mathcal D_k$ be uniformly $\alpha$-similar distributions on $\Sigma^n$.
Let $\mathcal E = (\mathcal E_a)_{a \in \Sigma}$, where $\E_a$, $a \in \Sigma$, are uniformly $\beta$-similar distributions on $(\Sigma')^{n'}$. The lemma states that $\mathcal D_1 \sprod \mathcal E$, \ldots, $\mathcal D_k \sprod \mathcal E$ are uniformly $\alpha \beta$-similar.

The main idea behind the proof of the lemma is the following.
Querying $q$ locations in a string that comes from $\mathcal D_i \sprod \mathcal E$, we can see a difference between distributions in at most $\beta q$ blocks in expectation. Seeing the difference is necessary to discover which $\mathcal E_j$ each of the blocks comes from. Then only these blocks can reveal the identity of $\mathcal D_i \sprod \mathcal E$, and the difference in the distribution if $q'$ blocks are revealed is bounded by $\alpha q'$. 

The lemma can be used to prove the earlier claim that the two-level construction produces distributions that require $\Omega(\log^2 n)$ queries to be told apart.

\subsubsection{Preserving Edit Distance} \label{sec:sketchED}

It now remains to describe our
tools for analyzing the edit distance between strings generated by our
distributions. All of these tools are collected in Section~\ref{sec:tools_edit_distance}. In most cases we focus in our analysis on $\edd$, which is the version of edit distance that only allows for insertions and deletions. It clearly holds that $\ed(x,y) \le \edd(x,y) \le 2\cdot\ed(x,y)$, and this connection is tight enough for our purposes. An additional advantage of $\edd$ is that for any strings $x$ and $y$, $2\lcs(x,y) + \edd(x,y) = |x| + |y|$.

We start by reproducing a well known bound on the longest common substring of randomly selected strings (Lemma~\ref{lem:RandomStrings}). It gives a lower bound on $\lcs(x,y)$ for two randomly chosen 
strings. The lower bound then implies that the distance between two strings chosen at random is large, especially for a large alphabet.

Theorem~\ref{thm:sprodAdditive} shows how the edit distance between two strings in $\Sigma^n$ changes when we substitute every symbol with a longer string using a function $B:\Sigma \to (\Sigma')^{n'}$.
The relative edit distance (that is, edit distance divided by the length of the strings) shrinks by an additive term that polynomially depends on the maximum relative length of the longest common string between $B(a)$ and $B(b)$ for different $a$ and $b$. It is worth to highlight the following two issues:
\begin{itemize}
 \item We do not need a special version of this theorem for distributions. It suffices to first bound edit distance for the recursive construction when instead of strings shifted at random, we use strings themselves. Then it suffices to bound by how much the strings can change as a result of shifts (at all levels of the recursion) to obtain desired bounds.
 \item The relative distance shrinks relatively fast as a result of substitutions. This implies that we have to use an alphabet of size polynomial in the number of recursion levels. The alphabet never has to be larger than polylogarithmic, because the number of recursion levels is always $o(\log n)$.
\end{itemize}

Finally, Theorem~\ref{thm:sprodFar} and Lemma \ref{lem:randomFar2} 
effectively reduce the alphabet size,
because they show that a lower bound for the binary alphabet 
follows immediately from the one for a large alphabet, 
with only a constant factor loss in the edit distance. It turns out that it suffices to map every element of the large alphabet $\Sigma$ to a random string of length $\Theta(\log |\Sigma|)$ over the binary alphabet.

The main idea behind proofs of the above is that strings constructed 
using a substitution product are composed of rather rigid blocks,
in the sense that every alignment between two such strings,
say $x\sprod\E$ and $y\sprod\E$, must respect (to a large extent) 
the block structure, in which case one can extract from it 
an alignment between the two initial strings $x$ and $y$.

\section{Fast Algorithms via Asymmetric Query Complexity}
\label{apx:upperBoundFull}

In this section we describe our near-linear time algorithm for
estimating the edit distance between two strings. As we mentioned in the introduction, the
algorithm is obtained from an efficient query algorithm.

The main result of this section is the following query complexity
upper bound theorem, which is a full version of
Theorem~\ref{thm:introUpperBound}. It implies our near-linear time
algorithm for polylogarithmic approximation (Theorem~\ref{thm:introAlgorithm}).
\begin{theorem}
\label{thm:upperBound}
Let $n\ge 2$, $\beta=\beta(n)\ge 2$, and integer $b=b(n)\ge2$
be such that $(\log_b n)\in\N$.

There is an algorithm solving
$\DTEP_{\beta}$ with approximation $\alpha=O(b\log_b n)$  and 
$\beta\cdot (\log n)^{O(\log_b n)}$ queries into $x$. The algorithm runs in
$n\cdot (\log n)^{O(\log_b n)}$ time.

For every constant $\beta=O(1)$ and integer $t\ge 2$,
there is an algorithm for solving $\DTEP_\beta$ with $O(n^{1/t})$
approximation and $O(\log n)^{t-1}$ queries. The
algorithm runs in $\tilde O(n)$ time.
\end{theorem}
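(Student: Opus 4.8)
The plan is to prove Theorem~\ref{thm:upperBound} in two parts, mirroring the two-component structure described in the overview. The first part is the characterization theorem: the $\EC$-distance defined in~\eqref{eqn:ecDefinition} with arity $b$ satisfies $\ed(x,y)\le O(1)\cdot\EC(0,1,1)$ and $\EC(0,1,1)\le O(b\log_b n)\cdot\ed(x,y)$. For the lower direction ($\EC$ cannot be too small), I would take an optimal alignment between $x$ and $y$ and, at each node of the characterization tree, choose the shift $r_j$ of each child to track where the corresponding block of $x$ is mapped by that alignment; the key point (the ``no multiplicative loss'' feature) is that the cost $|r_j|$ charged at a node, summed over the tree, telescopes against the number of insertions/deletions the alignment performs, rather than being recharged at every level. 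For the upper direction (an edit distance of $d$ forces $\EC=O(bh\cdot d)$), I would exhibit a choice of shifts realizing the recursion: roughly, distribute the $d$ edit operations down the tree; at each of the $h=\log_b n$ levels, aligning $b$ blocks with their shifted targets costs at most the displacement plus the Hamming cost inside, and the displacements accumulate to at most $O(b)$ per level per unit of edit distance. This gives the approximation factor $\alpha=O(b\log_b n)$.

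The second part is the sampling algorithm that estimates $\EC(0,1,1)$ up to a constant factor (indeed $1+\eps$) using $\beta\cdot(\log n)^{O(\log_b n)}$ queries. Here I would invoke the Uniform Sampling Lemma (Lemma~\ref{lem:uniformSampling}) and the Non-uniform Sampling Lemma (Lemma~\ref{lem:nonUniformSampling}) as black boxes. The algorithm walks down the tree top-to-bottom maintaining for each retained node $v$ at level $i$ a ``precision'' $w_v$, with the invariant that $v$ must report its $\EC$-distances to relevant positions $u$ up to additive error $l_i/w_v$; the root gets precision $\Theta(\beta)$. At each node, the summation over $b$ children in~\eqref{eqn:ecDefinition} (for each candidate $u$, after taking the per-child minima over shifts) is a sum of $m=b$ nonnegative bounded terms, to which I apply the non-uniform lemma to subsample children and assign them fresh precisions drawn from the distribution $\W$, rescaled by the required additive accuracy; the uniform lemma handles the regime where straightforward subsampling already suffices. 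Since $\EX_{w\in\W}[w]=\polylog(b)\le\polylog(n)$ and the root precision is $\beta$, multiplying the branching and precision factors over the $h=\log_b n$ levels yields at most $\beta\cdot(\log n)^{O(h)}$ surviving leaves, i.e., queried positions of $x$; the running time is $n$ times the same factor, since for each retained node we only evaluate $\EC$ at a bounded set of displacements $u$ (those powers of $1+\eps$ within the allowed shift range), and a bottom-up dynamic program over the pruned tree takes near-linear time per level. To keep the total multiplicative error constant over $h$ levels I take the per-level error parameter $\eps'=\Theta(\eps/\log n)$ inside the lemmas, as noted in the overview. Finally, to solve $\DTEP_\beta$: run the estimator to distinguish $\EC(0,1,1)>n/\beta$ from $\EC(0,1,1)\le n/(2\beta)$; by the characterization, $\ed(x,y)>n/\beta$ implies $\EC>n/\beta$, while $\ed(x,y)\le n/(\beta\alpha)$ with $\alpha=\Theta(b\log_b n)$ implies $\EC\le O(b\log_b n)\cdot n/(\beta\alpha)\le n/(2\beta)$ for an appropriate constant in $\alpha$.

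For the second bullet of the theorem ($\beta=O(1)$, integer $t\ge2$, approximation $O(n^{1/t})$, $O(\log^{t-1}n)$ queries, near-linear time), I would specialize the above to $b=n^{1/t}$, so that $h=\log_b n=t$ is constant and $\alpha=O(b\log_b n)=O(n^{1/t})$. Since $\beta=O(1)$, the regime $\beta\ll b$ applies, so the simpler uniform subsampling suffices at every node: retain $O(\log n)$ children per node (the Uniform Sampling Lemma's $\frac{\rho}{\sigma}m\log m$ bound becomes $O(\log b)=O(\log n)$ when $\rho/\sigma=O(1)$, which holds because $\beta$ is constant), giving $O(\log n)^{h}=O(\log n)^{t}$ leaves; a slightly more careful accounting at the root level (which need not be subsampled down to the same rate) shaves this to $O(\log n)^{t-1}$. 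The running time is $\tilde O(n)$ because $h$ is constant.

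The step I expect to be the main obstacle is the analysis of the non-uniform subsampling across levels: one must show that the precision requirements propagated down the tree remain consistent — i.e., that when a node at level $i$ is assigned precision $w_v$, the children it retains get precisions whose induced additive errors, after being summed and after the minima over shifts, actually yield an estimate of $\EC(i,s,u)$ within $l_i/w_v$ with high probability — and that this holds simultaneously for all nodes via a union bound, which is why each node's guarantee must fail with probability $n^{-\Omega(1)}$ and hence why $\polylog$ (not $\poly\log\log$) samples are needed. Controlling how the ``target positions'' $u$ at which a node must be accurate are determined by its parent (they depend on the parent's chosen shifts, which in turn depend on the minima, which depend on the children's estimates) requires care to avoid circular dependence; the fix is that a node is simply required to be accurate for \emph{all} $u$ in a polynomially-bounded candidate set, decoupling it from the parent's choices. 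The interaction between the minimum operator (needing high-probability per-node bounds) and the non-uniform rates is the crux of the argument.
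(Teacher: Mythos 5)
Your overall structure — a characterization by the $\EC$-distance, then a sampling algorithm built around precision assignments and the uniform/non-uniform sampling lemmas, then the near-linear-time implementation — is the paper's approach, and the sampling part of your proposal is faithful to the actual proof: the precision invariant at each node (additive error $l_i/w_v$), drawing precisions from $\W$, using $\eps'\approx\eps/\log n$ per level, requiring high-probability per-node guarantees to survive a union bound over all nodes and all $n$ candidate positions $z$, and specializing to $b=n^{1/t}$ with looser ($\delta=\Theta(1)$) parameters at the top level to shave a factor of $\log n$ from the second bullet all match the paper.

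However, your characterization step has a genuine gap. Both of your two ``directions'' are in fact proofs of the \emph{same} inequality, namely $\EC(0,1,1)\le O(bh)\cdot\ed(x,y)$: in each you start from an alignment (or from $d$ edit operations) and \emph{exhibit} a choice of shifts $r_j$, which, since $\EC$ is a minimum over all shift vectors, can only ever \emph{upper} bound $\EC$. You never establish the complementary inequality $\ed(x,y)\le 2\cdot\EC(0,1,1)$, which is what makes the quantity usable for $\DTEP_\beta$: to declare ``far'' when the estimator says $\EC$ is large, you need to rule out $\ed$ being small while $\EC$ is large. The paper's proof of that direction (Lemma~\ref{lem:Zdist2ed}) goes the opposite way — take the \emph{minimizing} vector $Z$ in the alternative formulation (Claim~\ref{clm:ecDist2Z}), and by induction on levels recover a common subsequence of $x$ and $y$ of length at least $n-\EC(0,1,1)$, whence $\ed(x,y)\le\edd(x,y)\le 2\EC(0,1,1)$. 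A second, smaller imprecision: the cost $\sum|r_j|$ does not ``telescope'' against the alignment's deletions; it contributes up to $\Theta(bL)$ per level and accumulates over $h=\log_b n$ levels (that is exactly why the factor is $O(bh)$ and not $O(b)$). The ``no multiplicative loss'' is the absence of a constant $c\ge 2$ multiplying $\alpha(n/b)$ in the recurrence, not cancellation of per-level costs.

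Two further small points on the time bound: you write that one evaluates $\EC$ only at a bounded set of displacements ``$u$'', but what the paper restricts to $O(\log^2 n)$ geometrically-spaced values is the \emph{shift} $k$ in the inner minimum, while $\tau(v,z)$ is still computed for all $z\in[n]$ — that is where the leading factor of $n$ in the running time comes from. To keep the inner minimum over a range of $k'$ fast, the paper additionally precomputes range minima of $\tau(v',\cdot)$ over dyadic intervals. These are technical but needed to reach $n\cdot(\log n)^{O(\log_b n)}$ rather than something quadratic.
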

In particular, note that we obtain Theorem~\ref{thm:introAlgorithm} by
setting $b=(\log n)^{c/\eps}$ for a suitably high constant $c>1$.

The proof is partitioned in three stages. (The first stage corresponds
to the first ``major component'' mentioned in Introduction, and
Section~\ref{sec:ubMain}, and the next two stages correspond to 
the second ``major component''.)
In the first stage, we describe a characterization of edit
distance by a different quantity, namely
$\EC$-distance, which approximates edit distance well. The
characterization is parametrized by an integer parameter $b\ge
2$. A small $b$ leads to a small approximation factor (in fact, as
small as $O(\log n)$ for $b=2$),
whereas a large $b$ leads to a faster algorithm.
In the second stage, we show how one can design a sampling algorithm
that approximates $\EC$-distance for some setting of the parameter $b$, up to a
constant factor, by making a small
number of queries into $x$. In the third stage, we show how to use the query
algorithm to obtain a near-linear time algorithm for edit distance
approximation.

The three stages are described in the following three sections, and
all together give the proof of Theorem~\ref{thm:upperBound}.

\subsection{Edit Distance Characterization: the $\EC$-distance}

Our characterization  may be viewed as computation on a tree, where the
nodes correspond to substrings $x[s:s+l]$, for some start position $s\in[n]$ and
length\footnote{We remind that the notation $x[s:s+l]$ corresponds to characters
  $x[s],x[s+1],\ldots x[s+l-1]$. More generally, $[s:s+l]$ stands for
  the interval $\{s,s+1,\ldots , s+l-1\}$. This convention simplifies
 subsequent formulas.} $l\in[n]$. The root is the entire string $x[1:n+1]$. For a node
$x[s:s+l]$, the children are blocks $x[s+j\cdot l/b:s+(j+1)\cdot
l/b]$, where $j\in\{0,1,\ldots b-1\}$, and $b$ is the arity of the
tree. The $\EC$-distance for the node $x[s:s+l]$ is defined
recursively as a function of the distances of its children.
Note that the characterization is asymmetric in the two strings.

Before giving the definition we establish further notation. We fix the
arity $b\ge2$ of the tree, and let $h\eqdef\log_b n\in \N$ be the height of the tree.
Fix some tree level $i$ for $0\le i\le h$. Consider some substring $x[s:s+l_i]$
at level $i$, where $l_i \eqdef n/b^i$. Let
$B_i\eqdef\{1,l_i+1,2l_i+1,\ldots\}$ be the set of starting positions of
blocks at level $i$.

\begin{definition}[$\EC$-distance]
\label{def:AdistRecursive}
Consider two strings $x,y$ of length $n\ge2$.
Fix $i\in \{0,1,\ldots h\}$, $s\in B_i$, and a position $u\in \mathbb Z$.

If $i=h$, then the {\em $\EC$-distance} of $x[s:s+l_i]$ to the position
  $u$ is 1 if $u\not\in [n]$ or $x[s]\neq y[u]$, and 0 otherwise.

For $i\in\{0,1,\ldots h-1\}$, we recursively define the {\em $\EC$-distance} $\EC_{x,y}(i,s,u)$ of $x[s:s+l_i]$ to the position $u$ as follows. Partition
$x[s:s+l_i]$ into $b$ blocks of length $l_{i+1}=l_i/b$, starting at
positions $s+jl_{i+1}$, where $j\in \{0,1,\ldots b-1\}$. 
Then
$$
\EC_{x,y}(i,s,u) \eqdef \sum_{j=0}^{b-1} \min_{r_j\in \mathbb Z} \EC_{x,y}(i+1,s+jl_{i+1},u+jl_{i+1}+r_j)+\left|r_j\right|.
$$

The $\EC$-distance from $x$ to $y$ is just the $\EC$-distance from
$x[1:n+1]$ to position 1, i.e., $\EC_{x,y}(0,1,1)$. 
\end{definition}

We illustate the $\EC$-distance for $b=4$ in
Figure~\ref{fig:ecDistance}. Since $x$ and $y$ will be clear from the context, we will just 
use the notation $\EC(i,s,u)$ without indices $x$ and $y$.

The main property of the $\EC$-distance is that it gives a good approximation
to the edit distance between $x$ and $y$, as quantified in the
following theorem, which we prove below.

\begin{theorem}[Characterization]
\label{thm:ecDistance}
For evry $b\ge2$ and two strings $x,y\in \Sigma^n$, the $\EC$-distance between $x$ and $y$ is a
$6\cdot\tfrac{b}{\log b}\cdot\log n$ approximation to the edit
distance between $x$ and $y$. 
\end{theorem}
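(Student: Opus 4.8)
The plan is to prove the two directions separately: the upper bound $\EC(x,y) = O(\tfrac{b}{\log b}\log n)\cdot\ed(x,y)$ and the lower bound $\EC(x,y) \ge \ed(x,y)$ (up to a small constant). The lower bound is the easy direction: given the optimal choice of shifts $\{r_j\}$ realizing $\EC(0,1,1)$, I would argue that the recursive structure of shifts describes an explicit alignment between $x$ and $y$ whose cost is at most $\EC(0,1,1)$; that is, each leaf contributing $1$ corresponds to a substitution (or a character that falls outside $[n]$), and each shift of magnitude $|r_j|$ at an internal node can be simulated by $|r_j|$ insertions and $|r_j|$ deletions applied to the relevant block. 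Summing over the tree, this produces a transformation of $x$ into $y$ (or rather of the relevant aligned substrings) of total edit cost $O(\EC(0,1,1))$, hence $\ed(x,y) = O(\EC(x,y))$.

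The main work is the upper bound: $\EC(x,y) \le 6\cdot\tfrac{b}{\log b}\cdot\log n\cdot\ed(x,y)$. Here I would start from an optimal alignment $A$ between $x$ and $y$ of cost $d = \ed(x,y)$, and inductively bound $\EC(i,s,u)$ at each node in terms of how the alignment $A$ treats the block $x[s:s+l_i]$. The key quantity to track at a node is the ``displacement'' that $A$ induces on the block: if $A$ maps $x[s:s+l_i]$ roughly to $y[u':u'+l_i]$ for some $u'$ close to $u$, then I want $\EC(i,s,u)$ to be bounded by (cost of $A$ restricted to this block) plus a shift term $|u-u'|$, amortized appropriately. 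The crucial point — and the reason the recurrence has $c=1$ rather than $c\ge 2$ — is that when we descend one level, the $b$ sub-blocks can each independently pick up the shift they need (the $\min_{r_j}$ in the definition), so the only genuine loss per level is the additive $O(b)$ coming from the total magnitude of shifts needed to correct for the alignment's behavior across the $b$ block boundaries, and this happens over only $h = \log_b n$ levels. I would prove by induction on the level (going down from the root) a statement of the form: for every node and every target position $u$ within the ``reach'' of the alignment, $\EC(i,s,u) \le O(1)\cdot(\text{edit cost of }A\text{ inside the block}) + |u - \phi_A(s)| + (\text{per-level overhead})$, where $\phi_A(s)$ is the image position assigned by $A$.

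The hard part will be handling the interaction between the shift terms and the block boundaries cleanly, i.e., showing that the shifts $r_j$ chosen greedily at each node never accumulate a multiplicative blowup. Concretely, the subtlety is that a single insertion or deletion in the optimal alignment $A$ causes a shift of $\pm 1$ that propagates to \emph{all} subsequent blocks at that level; naively charging each affected block would multiply the cost by the number of blocks. The resolution is that at the next level down, each block absorbs this accumulated shift via its own $r_j$, and the shift magnitudes across one level telescope: the sum $\sum_j |r_j|$ needed at level $i$ is bounded by (number of insertions/deletions of $A$) $+ O(b)$, not by (number of insertions/deletions) $\times$ (number of blocks). Making this telescoping precise — carefully defining the per-node target positions $u$ so that the inductive hypothesis composes — is the technical crux. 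Once the inductive invariant is set up correctly, summing the per-level overhead $O(b)$ over $h = \log_b n$ levels and the (un-amplified) edit cost gives $\EC(x,y) = O(b\log_b n)\cdot\ed(x,y) = O(\tfrac{b}{\log b}\log n)\cdot\ed(x,y)$, and tracking the constants yields the factor $6$.
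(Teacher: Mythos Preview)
Your high-level plan matches the paper's: prove $\ed(x,y)\le 2\,\EC(0,1,1)$ by turning the optimal shift vector into an alignment, and prove $\EC(0,1,1)\le 3bh\cdot\ed(x,y)$ by turning an optimal alignment $A$ into a choice of shifts. The paper does exactly this, though it first ``unrolls'' the recursion into a single global vector $Z=(z_{i,s})$ (Claim~\ref{clm:ecDist2Z}), constructs $z_{i,s}$ as the $A$-image of the first aligned position in the block $x[s:s+l_i]$, and then bounds $\cost(Z)$ level by level. This global formulation is cleaner than the per-node inductive invariant you sketch, and I would encourage you to adopt it.

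There is, however, a genuine error in your telescoping claim. You assert that at a given node the sum $\sum_j |r_j|$ is bounded by (number of edits of $A$) $+\,O(b)$. This is false: if all the edits in the block sit in the first sub-block, then every subsequent child inherits the same accumulated displacement, and $\sum_j |r_j|$ is $\Theta(b)\times(\text{edits in the block})$, not $(\text{edits})+O(b)$. More concretely, with $r_j=\phi_A(s+jl_{i+1})-\phi_A(s)-jl_{i+1}$ one has $r_j=\sum_{k<j}(r_k-r_{k-1})$, and while each increment $|r_k-r_{k-1}|$ is bounded by the edits in sub-block $k$, the partial sums $|r_j|$ do not telescope. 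So your proposed invariant $\EC(i,s,u)\le O(1)\cdot(\text{local edit cost})+|u-\phi_A(s)|+(\text{overhead})$ cannot hold with an $O(1)$ constant in front of the edit cost; this is precisely where the factor $b$ must enter.

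The paper's fix is to bound, at each node $(i,s)$, the \emph{maximum} child shift $d_{i,s}\eqdef\max_j|r_j|$ rather than the sum, and then use the crude estimate $\sum_j|r_j|\le b\cdot d_{i,s}$. The point is that $d_{i,s}$ itself is at most $X_{i,s}+Y_{i,s}$ (unaligned positions of $x$ and of $y$ inside the block), and these sum across all nodes at level $i$ to at most $2L$ where $L=|A^{-1}(\bot)|$ (Claim~\ref{clm:sumDis}). Hence the total shift cost at level $i$ is at most $2bL$, and summing over $h$ levels gives $\cost(Z)\le 2bhL$; together with the $L$ Hamming mismatches at the leaves this yields $\EC(0,1,1)\le (2bh+1)L\le 3bh\cdot\ed(x,y)$. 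So the factor $b$ arises exactly from the step $\sum_j|r_j|\le b\cdot d_{i,s}$, not from an additive $O(b)$ overhead.
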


We also give an alternative, equivalent
definition of the $\EC$-distance between $x$ and $y$. It is motivated
by considering the matching (alignment) induced by the $\EC$-distance
when computing $\EC(0,1,1)$. In particular, when computing
$\EC(0,1,1)$ recursively, we can consider all the ``matching
positions'' (positions $u+jl_{i+1}+r_j$ for $r_j$'s achieving the minimum).
We denote by $Z$ a vector of integers $z_{i,s}$, indexed by $i\in\{0,1,\ldots h\}$ and
$s\in B_i$, where $z_{0,1}=1$ by convention. The coordinate $z_{i,s}$
should be understood as the
position to which we match the substring $x[s:s+l_i]$ in the
calculation of $\EC(0,1,1)$. Then we define the cost
of $Z$ as
$$
\cost(Z) \eqdef \sum_{i=0}^{h-1} \sum_{s\in B_i} \sum_{j=0}^{b-1} |z_{i,s}+jl_{i+1}-z_{i+1,s+jl_{i+1}}|.
$$

The cost of $Z$ can be seen as the sum of the displacements $|r_j|$
that appear in the calculation of the $\EC$-distance from
Definition~\ref{def:AdistRecursive}. The following claim asserts an
alternative definition of the $\EC$-distance.

\begin{claim}[Alternative definition of $\EC$-distance]
\label{clm:ecDist2Z}
The $\EC$-distance between $x$ and $y$ is the minimum of
\begin{equation}
\label{eqn:ecDistViaZ}
\cost(Z)+\sum_{s\in [n]} \Ham(x[s],y[z_{h,s}])
\end{equation}
over all choices of the vector $Z=(z_{i,s})_{i\in\{0,1,\ldots h\}, s\in
  B_i}$ with $z_{0,1}=1$, where $\Ham(\cdot,\cdot)$ is the Hamming distance,
namely $H(x[s], y[z_{h,s}])$ is 1 if $z_{h,s}\not\in [n]$ or $x[s]\neq
y[z_{h,s}]$, and 0 otherwise.
\end{claim}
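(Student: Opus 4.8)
The claim asserts that the recursive $\EC$-distance equals a global minimization over match-vectors $Z$. This is a standard ``unrolling the recursion'' argument, so I'll plan it as a clean two-direction equality (or rather, show the recursive quantity equals the minimum).

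My plan is to prove, by induction on the level $i$ (from the leaves $i=h$ up to the root $i=0$), the following local version of the statement: for every $i\in\{0,\ldots,h\}$, every $s\in B_i$, and every position $u\in\mathbb Z$,
$$
\EC(i,s,u) = \min_{Z^{(i,s)}} \left[ \cost_{i,s}(Z^{(i,s)}) + \sum_{s'\in[s:s+l_i]} \Ham(x[s'],y[z_{h,s'}]) \right],
$$
where $Z^{(i,s)}$ ranges over all assignments of match-positions $z_{i',s'}$ to the nodes of the subtree rooted at $x[s:s+l_i]$, subject to the boundary condition $z_{i,s}=u$, and $\cost_{i,s}$ is the restriction of the $\cost$ sum to triples $(i',s',j)$ with $x[s':s'+l_{i'}]$ inside that subtree. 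Applying this with $i=0$, $s=1$, $u=1$ gives exactly the claim, since $z_{0,1}=1$ is the prescribed convention and the subtree is the whole tree.

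The base case $i=h$ is immediate: the subtree is a single leaf, $\cost_{h,s}$ is an empty sum, and both sides equal $\Ham(x[s],y[u])$ by the definition of the leaf $\EC$-distance (which is precisely $1$ if $u\notin[n]$ or $x[s]\ne y[u]$, and $0$ otherwise). For the inductive step, fix $i<h$ and a node $x[s:s+l_i]$ with its $b$ children at positions $s+jl_{i+1}$. The key observation is that the minimization over a subtree's $Z$-assignment decomposes: once we fix the root value $z_{i,s}=u$ and the child root values $z_{i+1,s+jl_{i+1}}=:v_j$, the cost splits as $\sum_{j=0}^{b-1}\big(|u+jl_{i+1}-v_j| + (\text{cost within $j$-th subtree, with its root pinned to }v_j)\big)$, and the Hamming terms likewise partition across the $b$ subtrees. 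Hence the minimum over the whole subtree equals $\sum_{j=0}^{b-1}\min_{v_j\in\mathbb Z}\big[|u+jl_{i+1}-v_j| + (\text{min over $j$-th subtree with root }v_j)\big]$; by the induction hypothesis the inner minimum over the $j$-th subtree with root pinned to $v_j$ is exactly $\EC(i+1,s+jl_{i+1},v_j)$, and substituting $v_j = u+jl_{i+1}+r_j$ recovers Definition~\ref{def:AdistRecursive} verbatim. This completes the induction.

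The only ``obstacle'' is really bookkeeping: one must be careful that the index sets partition cleanly — that the blocks $[s+jl_{i+1}:s+(j+1)l_{i+1}]$ for $j=0,\ldots,b-1$ exactly tile $[s:s+l_i]$ (true since $l_i=b\,l_{i+1}$ and $s\in B_i$ implies $s+jl_{i+1}\in B_{i+1}$), and that every triple $(i',s',j)$ contributing to $\cost(Z)$ and every leaf $s'\in[n]$ is assigned to exactly one subtree at each level. I would state the decomposition of $\cost$ and of the leaf-sum as a one-line observation and then let the induction do the work; no genuine difficulty arises. The interchangeability of the outer sum and the per-child minima (valid because the $b$ minimizations are over disjoint sets of variables $v_j$) is the one point worth spelling out explicitly.
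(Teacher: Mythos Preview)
Your proposal is correct and follows the same approach as the paper: the paper simply remarks that \eqref{eqn:ecDistViaZ} ``unravels the recursive formula'' by identifying $|z_{i,s}+jl_{i+1}-z_{i+1,s+jl_{i+1}}|$ with the $|r_j|$ terms and the Hamming sum with the leaf values $\EC(h,s,z_{h,s})$. Your induction on the level $i$ is exactly this unrolling made rigorous, with the decomposition of $\cost$ and the leaf-sum across subtrees spelling out what the paper leaves implicit.
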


\begin{proof}
The quantity~\eqref{eqn:ecDistViaZ} simply unravels the recursive
formula from Definition~\ref{def:AdistRecursive}.
The equivalence between them follows from the fact that
$|z_{i,s}+jl_{i+1}-z_{i+1,s+jl_{i+1}}|$ directly corresponds to
quantities $|r_j|$
in the $\EC_{x,y}(i,s,z_{i,s})$ definition, which appear in the computation on the tree,
and the $\sum_{s\in [n]} \Ham(x[s],y[z_{h,s}])$ term corresponds to
the summation of $\EC_{x,y}(h,s,z_{h,s})$ over all $s \in [n]$.
\end{proof}

We are now ready to prove Theorem~\ref{thm:ecDistance}.

\begin{proof}[Proof of Theorem~\ref{thm:ecDistance}]
Fix $n,b\ge2$ and let $h \eqdef \log_b n$. 
We break the proof into two parts, an upper bound and a lower bound 
on the $\EC$-distance (in terms of edit distance).
They are captured by the following two lemmas, which we shall prove shortly.

\begin{lemma}
\label{lem:edLEecdist}
The $\EC$-distance between $x$ and $y$ is
at most $3hb\cdot\ed(x,y)$.
\end{lemma}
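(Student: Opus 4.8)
The plan is to show that any sequence of edit operations transforming $x$ into $y$ can be ``simulated'' by a choice of the displacement vector $Z$ in the alternative definition (Claim~\ref{clm:ecDist2Z}), paying at most a factor of $3hb$ over $\ed(x,y)$. First I would fix an optimal alignment (monotone partial matching) between $x$ and $y$ realizing $\ed(x,y)=d$; equivalently, fix an optimal sequence of insertions, deletions and substitutions. For a position $s\in[n]$ that is matched (not deleted, not substituted), let $\pi(s)$ be the position of $y$ it is matched to; the key quantity is the ``shift'' $\pi(s)-s$. The idea is to define $z_{h,s}\eqdef\pi(s)$ for matched positions, and for positions that are deleted or substituted, set $z_{h,s}$ to be whatever neighboring value keeps the Hamming term under control (it costs $1$ anyway, so its exact value only matters through the $\cost(Z)$ term). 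Then propagate these leaf values up the tree: for an internal node $x[s:s+l_i]$, set $z_{i,s}$ to be (roughly) the shift of its ``first'' or ``most representative'' descendant leaf, so that $z_{i,s}+jl_{i+1}-z_{i+1,s+jl_{i+1}}$ measures how much the shift changes when we descend into the $j$-th child.

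The core estimate is then a ``telescoping / bounded-variation'' argument. For a single level $i$, consider one node and its $b$ children; the sum of the $b$ terms $|z_{i,s}+jl_{i+1}-z_{i+1,s+jl_{i+1}}|$ over $j$ is controlled by the total variation of the shift function $s\mapsto\pi(s)-s$ across that block, plus at most $b$ (or $O(1)$ per child) for the discrete rounding when choosing representatives. Summing over all nodes at level $i$, the total-variation contributions telescope to at most the global variation of the shift function, which in an optimal alignment is at most $2d$ (each insertion/deletion changes the running shift by $1$, and there are at most $d$ of them; substitutions do not change the shift). The ``$+O(1)$ per node'' rounding error summed over level $i$ contributes $O(b^i)\cdot O(1)$... — this is where one must be slightly careful, since the number of nodes grows geometrically down the tree. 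The right way to bound it is: at level $i$ there are $b^i$ nodes, but a rounding error is only incurred at a node whose block actually contains a ``shift jump'', i.e.\ an insertion or deletion; there are at most $d$ such jumps total, hence at most $d$ nodes per level contribute rounding error, each contributing $O(b)$ (bounding the per-child rounding by $l_{i+1}$ is too weak; instead bound it by $1$ per child for a total of $b$). That gives $O(bd)$ per level, and $O(hbd)$ over all $h$ levels. The Hamming term $\sum_s\Ham(x[s],y[z_{h,s}])$ is at most the number of substituted-or-deleted positions, which is at most $d$. Collecting constants yields the bound $3hb\cdot d$.

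The main obstacle I anticipate is making the ``representative shift'' bookkeeping precise enough that the per-node rounding genuinely sums to $O(bd)$ per level rather than $O(b^{i})$ — i.e.\ ensuring that nodes whose entire block of leaves has a \emph{constant} shift contribute exactly $0$ to $\cost(Z)$ at the level above them. Concretely, one should choose $z_{i,s}$ so that whenever all descendant leaves of $x[s:s+l_i]$ share a common shift value $r$, we get $z_{i,s}=s+r$ (in the root-relative normalization $z_{0,1}=1$, appropriately translated), which forces every internal term in that subtree to vanish. With that invariant in hand, only the (at most $d$) ``boundary'' nodes straddling an insertion/deletion incur cost, and the telescoping goes through cleanly. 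A secondary technical point is handling positions near the string boundary and the out-of-range convention ($z_{h,s}\notin[n]$ costing $1$), but since those only ever help the Hamming count stay $\le d$, they pose no real difficulty. Once the vector $Z$ is constructed with these properties, plugging into~\eqref{eqn:ecDistViaZ} and the arithmetic above completes the proof.
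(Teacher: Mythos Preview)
Your overall strategy matches the paper's: construct $Z$ from an optimal alignment by taking $z_{i,s}$ to be the image of the first aligned position in the block $x[s:s+l_i]$, then bound $\cost(Z)$ level by level. The paper does exactly this (with the convention that if the block is entirely unaligned, $z_{i,s}$ inherits the parent's value plus the offset $s-s'$).

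However, your per-node accounting is off. You claim $\sum_{j} |z_{i,s}+jl_{i+1}-z_{i+1,s+jl_{i+1}}|$ is bounded by the variation of the shift function across the block plus an $O(b)$ rounding term, and that the variation parts then ``telescope'' to the global variation $\le 2d$. This fails: if $k$ insertions all land in the first child, $z_{i,s}$ agrees with child $0$ but every later child has shift $k$, so the sum is $(b-1)k$, not $k+O(b)$. The correct per-node bound is $b$ times the variation in the block --- equivalently, $b\cdot d_{i,s}$ where $d_{i,s}\eqdef\max_j|z_{i,s}+jl_{i+1}-z_{i+1,s+jl_{i+1}}|$. The paper (Claim~\ref{clm:sumDis}) shows $d_{i,s}\le X_{i,s}+Y_{i,s}$, the number of unaligned $x$-positions in the block plus the number of unaligned $y$-positions in the corresponding $y$-range; since $\sum_{s\in B_i} X_{i,s}=L$ and $\sum_{s\in B_i} Y_{i,s}\le L$ (disjointness in $y$), one gets $\sum_s d_{i,s}\le 2L$ per level and hence $\cost(Z)\le \sum_i b\cdot 2L = 2hbL$. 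Adding the Hamming term $L$ yields $2hbL+L\le 3hb\cdot\ed(x,y)$. Your separate ``rounding error'' worry is then moot, and the invariant you correctly identify (blocks with constant shift contribute zero) falls out automatically since $d_{i,s}=0$ there.
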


\begin{lemma}
\label{lem:Zdist2ed}
The edit distance $\ed(x,y)$ is at most twice the
$\EC$-distance between $x$ and $y$.
\end{lemma}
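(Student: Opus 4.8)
The plan is to turn the $\EC$-distance computation into an explicit alignment of $x$ to $y$, built recursively from the leaves of the tree up to the root, so that at a node $(i,s)$ matched to position $u$ we transform the substring $x[s:s+l_i]$ into the length-$l_i$ window of $y$ starting at $u$ at edit cost at most $2\,\EC(i,s,u)$; evaluating this at the root then yields the lemma. Throughout I extend $y$ to a bi-infinite string $\bar y$ by padding positions outside $[n]$ with a fresh blank symbol $\bot$, so that the leaf rule ``$1$ if $u\notin[n]$'' matches ``$x[s]\neq\bar y[u]$'' and the windows $\bar y[u:u+l_i]$ are always well-defined; note $\bar y[1:n+1]=y$.

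First I would record two elementary facts about edit distance. \emph{Block subadditivity}: if $x=x^{(1)}\cdots x^{(k)}$ and $w=w^{(1)}\cdots w^{(k)}$ are concatenations of (arbitrary-length) blocks, then $\ed(x,w)\le\sum_j \ed(x^{(j)},w^{(j)})$, by concatenating optimal per-block alignments. \emph{Shift bound}: for all $a,a'\in\mathbb Z$ and $l\ge 1$, $\ed(\bar y[a:a+l],\bar y[a':a'+l])\le 2|a-a'|$; indeed, if $|a-a'|\ge l$ delete all $l$ symbols and re-insert $l$ symbols, and otherwise the two windows share the middle block $\bar y[\max(a,a'):\min(a,a')+l]$, so it suffices to delete $|a-a'|$ symbols on one side and insert $|a-a'|$ symbols on the other.

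Then I would prove, by downward induction on $i$ from $h$ to $0$, the statement: for every $i\in\{0,\dots,h\}$, $s\in B_i$, and $u\in\mathbb Z$, $\ed(x[s:s+l_i],\bar y[u:u+l_i])\le 2\,\EC(i,s,u)$. The base case $i=h$ is a single character: $\ed(x[s],\bar y[u])$ is $0$ when $u\in[n]$ and $x[s]=y[u]$, and at most $1\le 2\cdot 1$ otherwise, exactly tracking $\EC(h,s,u)$. For the inductive step at $(i,s,u)$, choose for each child $j\in\{0,\dots,b-1\}$ a minimizer $r_j$ in the definition of $\EC(i,s,u)$ and set $v_j:=u+jl_{i+1}+r_j$. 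By the induction hypothesis, $x[s+jl_{i+1}:s+(j+1)l_{i+1}]$ can be transformed into $\bar y[v_j:v_j+l_{i+1}]$ at cost at most $2\,\EC(i+1,s+jl_{i+1},v_j)$; concatenating these transformations (block subadditivity) turns $x[s:s+l_i]$ into $\bar y[v_0:v_0+l_{i+1}]\cdots\bar y[v_{b-1}:v_{b-1}+l_{i+1}]$. It remains to transform this concatenation, block by block, into $\bar y[u:u+l_i]=\bar y[u:u+l_{i+1}]\cdots\bar y[u+(b-1)l_{i+1}:u+bl_{i+1}]$; by the shift bound the $j$-th block costs at most $2|v_j-(u+jl_{i+1})|=2|r_j|$. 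Summing (block subadditivity again), the total is at most $\sum_{j}\bigl(2\,\EC(i+1,s+jl_{i+1},v_j)+2|r_j|\bigr)=2\,\EC(i,s,u)$, completing the induction. Applying this at $i=0$, $s=1$, $u=z_{0,1}=1$ and using $\bar y[1:n+1]=y$ gives $\ed(x,y)\le 2\,\EC(0,1,1)$, as claimed.

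I expect the only genuine subtlety to be boundary bookkeeping — keeping the blank-padding convention for $\bar y$ consistent with the leaf rule of the $\EC$-distance and checking that it disappears at the root — while the heart of the argument, namely ``align the children, then pay the shifts to glue their targets into the parent's target'', is essentially forced, and its factor-$2$ loss is exactly the (unavoidable) cost of realizing a shift by $d$ with $d$ deletions and $d$ insertions.
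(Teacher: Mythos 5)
Your proof is correct and, at bottom, is the same bottom-up induction over the $b$-ary tree as the paper's: at each node you align the $b$ children to their (shifted) target windows and then pay for the shifts. The packaging is a bit different, and a couple of those differences are worth noting. The paper first rewrites $\EC(0,1,1)$ via the $Z$-vector formulation (Claim~\ref{clm:ecDist2Z}), fixes an optimal $Z$, and then proves by induction that $\lcs\bigl(x[s:s+l_i],\,y[z_{i,s}:z_{i,s}+l_i]\bigr)\ge l_i-\delta_{i,s}$; the factor~$2$ enters only once at the very end, via $\ed\le\edd=2n-2\lcs$. You instead prove the stronger, more uniform statement $\ed\bigl(x[s:s+l_i],\bar y[u:u+l_i]\bigr)\le 2\,\EC(i,s,u)$ for \emph{every} integer $u$, working directly from the recursion~\eqref{eqn:ecDefinition} and paying $2|r_j|$ per shift at every level via the ``shift bound''. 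This makes the induction self-contained (no appeal to the $Z$-characterization, no need to speak of an optimizer $Z$), at the small cost of introducing the $\bot$-padding of $y$ so that all windows $\bar y[u:u+l_i]$ are well-defined and so that the leaf rule of $\EC$ coincides with single-character edit distance. Your block-subadditivity and shift-bound facts are standard and correctly stated; in particular the shift bound is exactly the source of the factor~$2$, mirroring the paper's one-time use of $\edd=2(n-\lcs)$. Both routes give the same bound with the same essential ideas; yours is perhaps slightly more elementary and avoids quantifying over $Z$, while the paper's LCS phrasing sidesteps the boundary bookkeeping you flag.
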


Combining these two lemmas gives
$\tfrac{1}{2}\ed(x,y)\le \EC_{x,y}(0,1,1)\le 5hb\cdot\ed(x,y)$,
which proves Theorem~\ref{thm:ecDistance}.
\end{proof}

We proceed to prove these two lemmas.

\begin{proof}[Proof of Lemma~\ref{lem:edLEecdist}]
Let $A:[n]\to [n]\cup \{\bot\}$ be an optimal alignment from $x$ to
$y$. Namely $A$ is such that:
\begin{itemize}
 \item If $A(s)\neq \bot$, then $x[s]=y[A(s)]$.
 \item If $A(s_1)\neq \bot$, $A(s_2)\neq \bot$, and $s_1 < s_2$, then $A(s_1) < A(s_2)$.
 \item $L \eqdef |A^{-1}(\bot)|$ is minimized.
\end{itemize}
Note that $n-L$ is the length of the Longest Common
Subsequence (LCS) of $x$ and $y$. It clearly holds that
$\tfrac{1}{2}\ed(x,y)\le L\le \ed(x,y)$.

To show an upper bound on the $\EC$-distance, we use the alternative
characterization from Claim~\ref{clm:ecDist2Z}. Specifically, we show
how to construct a vector $Z$ proving that the $\EC$-distance is small.

At each level $i\in \{1,2,\ldots h\}$, for each block $x[s:s+l_i]$ where
$s\in B_i$, we
set $z_{i,s} \eqdef A(j)$, where $j$ is the smallest
integer $j\in[s:s+l_i]$ such that $A(j)\neq \bot$ (i.e., to match a block
we use the first in it that is aligned under the alignment
$A$). If no such $j$ exists, then $z_{i,s} \eqdef z_{i-1,s'}+(s-s')$,
where $s' \eqdef l_{i-1}\cdot \lfloor (s-1)/l_{i-1}\rfloor+1$,
that is, $s'$ is such that $x[s':s'+l_{i-1}]$ is the parent of $x[s:s+l_{i}]$ in the tree.

Note that it follows from the definition of $z_{h,s}$ and $L$ that $\sum_{s\in[n]} \Ham(x[s],
y[z_{h,s}])=L$. It remains to bound the other term 
$\cost(Z)$ in the alternative definition of $\EC$-distance.

To accomplish this, for every $i \in \{0,1,2,\ldots,h-1\}$ and $s \in B_i$, we define $d_{i,s}$ as the maximum of 
$|z_{i,s}+jl_{i+1}-z_{i+1,s+jl_{i+1}}|$ over $j \in \{0,\ldots, b-1\}$. Although
we cannot bound each $d_{i,s}$ separately, we bound the sum of
$d_{i,s}$ for each level $i$.

\begin{claim}
\label{clm:sumDis}
For each $i\in\{0,1,\ldots h\}$, we have that $\sum_{s\in B_i} d_{i,s}\le 2L$. 
\end{claim}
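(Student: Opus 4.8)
The plan is to analyze, for a fixed level $i$, how the matching positions $z_{i,s}$ at level $i$ relate to the matching positions at level $i+1$, and to charge the total displacement $\sum_{s\in B_i} d_{i,s}$ against the $L$ unmatched positions of the optimal alignment $A$. First I would fix $i$ and a node $x[s:s+l_i]$ with $s\in B_i$, together with its $b$ children starting at $s+jl_{i+1}$. By construction, $z_{i,s}=A(p)$ where $p\in[s:s+l_i]$ is the first aligned position in the block (or, if none exists, $z_{i,s}$ is a ``synthetic'' position propagated from the parent, and in that degenerate case I will argue the contribution is zero — see below). Similarly $z_{i+1,s+jl_{i+1}}=A(p_j)$ where $p_j$ is the first aligned position in child $j$, if one exists. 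The quantity $|z_{i,s}+jl_{i+1}-z_{i+1,s+jl_{i+1}}|=|A(p)+jl_{i+1}-A(p_j)|$; since $A$ is monotone and increases by at least $1$ per aligned position, and $p_j-p$ lies between $jl_{i+1}$ and $jl_{i+1}+l_{i+1}$, the displacement is controlled by the number of \emph{unaligned} positions of $x$ strictly between $p$ and $p_j$. Concretely, if $A$ were a ``perfect'' alignment on the interval $[p:p_j]$ (aligning every position) then $A(p_j)-A(p)=p_j-p$; each of the $L_{\text{block}}$ positions in $[p:p_j)$ that $A$ sends to $\bot$ can shift $A(p_j)$ by at most $1$ in either direction, and the leftover slack $p_j-p-jl_{i+1}\le l_{i+1}$ is absorbed because $p_j$ is the \emph{first} aligned position in its child, so all positions in $[s+jl_{i+1}:p_j)$ are unaligned and counted too. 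Thus $d_{i,s}\le$ (constant times) the number of $\bot$-positions of $A$ inside $[s:s+l_i]$.

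Next I would sum over all $s\in B_i$. The intervals $[s:s+l_i]$ for $s\in B_i$ partition $[n]$, so each unaligned position of $x$ is counted in exactly one such block, giving $\sum_{s\in B_i}(\#\bot\text{-positions in block }s)=L$. Being slightly careful with the constant (the maximum over $j$ of the displacement in a single block is bounded by the count of $\bot$-positions in that block plus possibly the $\bot$-positions straddling the boundary between the first-aligned-position and the block start, which are still inside the block), one obtains $\sum_{s\in B_i} d_{i,s}\le 2L$. For the degenerate sub-case where a block $x[s:s+l_i]$ contains no aligned position at all: then $z_{i,s}=z_{i-1,s'}+(s-s')$ where $s'$ starts the parent; one should check by a downward induction on $i$ that whenever a block has no aligned position, \emph{all} of its children also have no aligned position (trivially true, since the children's positions are a subset), and hence $z_{i+1,s+jl_{i+1}}=z_{i,s}+jl_{i+1}$ exactly, so such a block contributes $d_{i,s}=0$ and may simply be ignored in the sum.

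The main obstacle I anticipate is making the ``each $\bot$-position shifts $A(p_j)$ by at most $1$'' argument precise while simultaneously accounting for the slack between the block-start $s+jl_{i+1}$ and the first aligned position $p_j$ inside it — these two sources of displacement must be bundled together and charged to the same pool of $\bot$-positions without double counting, and one must verify the pool for block $s$ at level $i$ is contained in $[s:s+l_i]$ so that the blocks-partition-$[n]$ step goes through cleanly. A clean way to organize this is: set $p=p(i,s)$ to be the first aligned position in $x[s:s+l_i]$, prove the telescoping-type bound $|z_{i,s}+jl_{i+1}-z_{i+1,s+jl_{i+1}}| \le |A(p) - A(p_j) + (p_j - p)| + (p_j - p - jl_{i+1}) \le 2\cdot|\{t\in[p:p_j): A(t)=\bot\}| + 2 l_{i+1}$... — actually, since $p_j \le s + jl_{i+1} + l_{i+1}$ and $p \ge s$, and both lie in $[s:s+l_i]$, the whole interval $[p : p_j)$ is inside $[s:s+l_i]$, so the $\bot$-count is bounded by the block's total $\bot$-count; handling the additive $l_{i+1}$ term requires noting it too is ``paid for'' by the unaligned prefix of child $j$. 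I would present this with the indices laid out explicitly and the monotonicity of $A$ invoked at each inequality, rather than a slick one-liner, to keep the constant honest.
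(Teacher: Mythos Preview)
Your plan has a genuine gap: you attempt to bound each $d_{i,s}$ solely by the number of $\bot$-positions of $A$ inside the $x$-block $[s:s+l_i]$, but this is simply false. The statement ``if $A$ were a perfect alignment on $[p:p_j]$ then $A(p_j)-A(p)=p_j-p$'' is incorrect: even when every $x$-position in $[p,p_j]$ is aligned, $A(p_j)-A(p)$ can exceed $p_j-p$ by an arbitrary amount, because unaligned positions in $y$ (not in $x$) get skipped over. Concretely, take $n=16$, $b=2$, and an alignment with $A(1)=1,\ A(2)=5,\ A(3)=6,\ldots$ so that $y[2],y[3],y[4]$ are unaligned. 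Then at level $i=1$, block $s=1$, we have $p=1$, $p_1=5$, no $\bot$'s in $[1:9]$ whatsoever, yet $\Delta_{1,1,1}=A(1)+4-A(5)=1+4-8=-3$, so $d_{1,1}=3$ while $X_{1,1}=0$. Your per-block bound $d_{i,s}\le c\cdot(\#\,\bot\text{'s in }[s:s+l_i])$ therefore fails for every constant $c$.

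What is missing is a second charge against unaligned positions in $y$. The paper splits into two cases according to the sign of $\Delta_{i,s,j}\eqdef z_{i,s}+jl_{i+1}-z_{i+1,s+jl_{i+1}}$. When $\Delta_{i,s,j}>0$ (the child sits to the \emph{left} of its natural slot) the argument you sketched works and bounds $|\Delta_{i,s,j}|$ by $X_{i,s}$, the number of $\bot$'s of $x$ in the block. When $\Delta_{i,s,j}<0$ (the child sits to the \emph{right}), one instead bounds $|\Delta_{i,s,j}|$ by $Y_{i,s}$, the number of unaligned $y$-positions between $z_{i,s}$ and $A(j')$ where $j'$ is the last aligned index in the block. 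The summation step then uses that the $y$-intervals $[z_{i,s},A(j')]$ are pairwise disjoint across $s\in B_i$ (by monotonicity of $A$), so $\sum_s Y_{i,s}\le L$; combined with $\sum_s X_{i,s}=L$ this gives the $2L$. Your degenerate-case analysis (blocks with no aligned position contribute $d_{i,s}=0$) is correct and is also how the paper disposes of that case.
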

\begin{proof}
We shall prove that each $d_{i,s}$ is bounded by $X_{i,s} + Y_{i,s}$, where
$X_{i,s}$ and $Y_{i,s}$ are essentially the number
of unmatched positions in $x$ and in $y$, respectively, that contribute to
$d_{i,s}$. We then argue that both $\sum_{s\in B_i} X_{i,s}$ and $\sum_{s\in B_i}
Y_{i,s}$ are bounded by $L$, thus completing the proof of the claim.

Formally, let $X_{i,s}$ be the number of positions $j\in
[s:s+l_i]$ such that $A(j)=\bot$. If $X_{i,s}=l_i$, then clearly
$d_{i,s}=0$. 
It is also easily verified that if $X_{i,s}=l_i-1$, then $d_{i,s}\le l_i-1$. 
In both cases, $d_{i,s}\le X_{i,s}$, and we also set $Y_{i,s} \eqdef 0$.

If $X_{i,s}\le l_i-2$, let $j'$ be
the largest integer $j'\in[s:s+l_i]$ for which $A(j')\neq \bot$ (note
that $j'$ exists and it is different from the smallest such possible integer, 
which was called $j$ when we defined $z_{i,s}$, because $X_{i,s}\le l_i-2$).
In this case, let $Y_{i,s}$ be $A(j')-z_{i,s}+1-(l_i-X_{i,s})$,
which is the number of positions in $y$
between $z_{i,s}$ and $A(j')$ (inclusive) that are not aligned under $A$.
Let $\Delta_{i,s,j} \eqdef z_{i,s}+jl_{i+1}-z_{i+1,s+jl_{i+1}}$ for $j \in \{0,\ldots,b-1\}$.
By definition, it holds $d_{i,s} = \max_j |\Delta_{i,s,j}|$.
Now fix $j$.
If $\Delta_{i,s,j} \ne 0$, then there is
an index $k \in [s+jl_{i+1}:s+(j+1)l_{i+1}]$ such that $A(k) = z_{i+1,s+jl_{i+1}}$.
If $\Delta_{i,s,j} > 0$ (which corresponds to a shift to the left), then at least $\Delta_{i,s,j}$ indices $j'\in[s:k]$
are such that $A(j') = \bot$, and therefore, $|\Delta_{i,s,j}| \le X_{i,s}$.
If $\Delta_{i,s,j} < 0$ (which corresponds to a shift to the right), then at least $|\Delta_{i,s,j}|$ positions
in $y$ between $z_{i,s}$ and $z_{i+1,s+jl_{i+1}}$ are not aligned in $A$.
Thus, $|\Delta_{i,s,j}| \le Y_{i,s}$.

In conclusion, for every $s \in B_i$, $d_{i,s} \le X_{i,s}+Y_{i,s}$.
Observe that $\sum_{s\in B_i} X_{i,s} =L$ and $\sum_{s\in B_i} Y_{i,s} \le L$ 
(because they correspond to distinct positions in $x$ and in $y$ that are
not aligned by $A$). 
Hence, we obtain that $\sum_{s\in
  B_i} d_{i,s}\le \sum_{s\in B_i} X_{i,s} + Y_{i,s}\le 2L$.
\end{proof}

We now claim that $\cost(Z)\le 2hbL$. Indeed, consider a
block $x[s:s+l_i]$ for some $i\in\{0,1,\ldots h-1\}$ and $s\in
B_i$, and one of its children $x[s+jl_{i+1}:s+(j+1)l_{i+1}]$ for $j\in
\{0,1,\ldots b-1\}$. The contribution of this child to the sum $\cost(Z)$ is
$
|z_{i,s}+jl_{i+1}-z_{i+1,s+jl_{i+1}}|\le d_{i,s}$ by definition.
Hence, using Claim~\ref{clm:sumDis}, we conclude that
$$
\cost(Z)
\le 
\sum_{i=0}^{h-1}\sum_{s\in B_i} \sum_{j=0}^{b-1} d_{i,s}
\le 
\sum_{i=0}^{h-1}\sum_{s\in B_i} d_{i,s}\cdot b
\le
h\cdot 2L\cdot b.
$$

Finally, by Claim~\ref{clm:ecDist2Z}, we have that the $\EC$-distance
between $x$ and $y$ is at most $2hbL+L\le 2hb\cdot
\ed(x,y)+\ed(x,y)\le 3hb\cdot \ed(x,y)$.
\end{proof}

\begin{proof}[Proof of Lemma~\ref{lem:Zdist2ed}]
We again use the alternative characterization given by 
Claim~\ref{clm:ecDist2Z}. Let $Z$ be the vector obtaining the minimum
of Equation~\eqref{eqn:ecDistViaZ}. Define, for $i\in\{0,1,\ldots
h\}$ and $s\in B_i$,
$$
\delta_{i,s}\eqdef
\sum_{s'\in [s:s+l_i]}\Ham(x[s'],y[z_{h,s'}])
+
\sum_{i':i\le i'< h} \ 
\sum_{s'\in B_{i'}\cap [s:s+l_i]}\ 
\sum_{j=0}^{b-1} \left|z_{i',s'}+jl_{i'+1}-z_{i'+1,s'+jl_{i'+1}}\right|
.
$$
Note that $\delta_{0,1}$ equals the $\EC$-distance by Claim~\ref{clm:ecDist2Z}. Also, we
have the following inductive equality for $i\in\{0,1,\ldots h-1\}$ and
$s\in B_i$:
\begin{equation}
\label{eqn:recursiveDelta}
\delta_{i,s}=\sum_{j=0}^{b-1} \left(\delta_{i+1,s+jl_{i+1}}+|z_{i,s}+jl_{i+1}-z_{i+1,s+jl_{i+1}}|\right).
\end{equation}

We now prove inductively for
$i\in \{0,1,2\ldots h\}$ that for each $s\in B_i$, the length of the LCS
of $x[s:s+l_i]$ and $y[z_{i,s}:z_{i,s}+l_i]$ is at least $l_i-\delta_{i,s}$.

For the base case, when $i=h$, the inductive hypothesis is
trivially true. If $x[s]=y[z_{i,s}]$, then the LCS is of length 1
and $\delta_{h,s}=0$. If $x[s]\neq y[z_{i,s}]$, then
the LCS is of length 0 and $\delta_{h,s}=1$.

Now we prove the inductive hypothesis for $i\in \{0,1,\ldots h-1\}$,
assuming it holds for $i+1$.  Fix a string $x[s:s+l_i]$, and let $s_j=s+jl_{i+1}$ for
$j\in\{0,1,\ldots b-1\}$. By the inductive hypothesis, for each
$j\in\{0,1,\ldots b-1\}$, the length of the LCS between
$x[s_j:s_j+l_{i+1}]$ and $y[z_{i+1,s_j}:z_{i+1,s_j}+l_{i+1}]$ is at least $l_{i+1}-\delta_{i+1,s_j}$. 
In this case, the substring in $y$ starting at $z_{i,s}+jl_{i+1}$, namely
$y[z_{i,s}+jl_{i+1}:z_{i,s}+(j+1)l_{i+1}]$, has an LCS with
$x[s_j:s_j+l_{i+1}]$ of length at least
$l_{i+1}-\delta_{i+1,s_j}-|z_{i,s}+jl_{i+1}-z_{i+1,s_j}|$. Thus, by Equation~\eqref{eqn:recursiveDelta},
the LCS of $x[s:s+l_i]$ and $y[z_{i,s}:z_{i,s}+l_i]$ is of length at
least
$$
\sum_{j=0}^{b-1} \left( l_{i+1}-\delta_{i+1,s_j}-|z_{i,s}+jl_{i+1}-z_{i+1,s_j}| \right)
=l_i-\delta_{i,s},
$$
which finishes the proof of the inductive step.

For $i=0$, this implies that $\ed(x,y)\le
2\delta_{0,1}=2\EC_{x,y}(0,1,1)$.
\end{proof}

\subsection{Sampling Algorithm}

We now describe the sampling and estimation algorithms that are used
to obtain our query complexity upper bounds. In particular, our
algorithm approximates the $\EC$-distance defined in the previous
section. The guarantee of our algorithms is that the output $\hat\EC$
satisfies $(1-o(1))\EC(0,1,1)-n/\beta \le \hat\EC\le (1+o(1))\EC(0,1,1)+n/\beta$.
This is clearly sufficient to distinguish between $\EC(0,1,1)\le
n/\beta$ and $\EC(0,1,1)\ge 4n/\beta$. After presenting the algorithm,
we prove its correctness and prove that it only samples $\beta\cdot
n^{O(\eps)}$ positions of $x$ in order to make the decision.

\subsubsection{Algorithm Description}

We now present our sampling algorithm, as well as the estimation
algorithm, which given $y$ and the sample of $x$, decides $\DTEP_\beta$.

\paragraph{Sampling algorithm.}
To subsample $x$, we start by partitioning $x$ recursively into blocks as defined
in Definition~\ref{def:AdistRecursive}. In particular, we fix a tree
of arity $b$, indexed by pairs $(i,s)$ for $i\in\{0,1,\ldots h\}$, and
$s\in B_i$.
At each level $i=0,\ldots h$, we have a subsampled set $C_i\subseteq
B_i$ of vertices at that level of
the tree. The set $C_i$ is obtained from the previous one by
extending $C_{i-1}$ (considering all the children), and a careful
subsampling procedure. In fact, for each element
in $C_i$, we also assign a number $w\ge 1$, representing a
``precision'' and
describing how well we want to estimate the $\EC$ distance at
that node, and hence governing the subsampling of the subtree rooted
at the node.

Our sampling algorithm works as follows. We use a (continuous)
distribution $\W$ on $[1,n^3]$, which we define later, in
Lemma~\ref{lem:nonUniformSampling}. 
\begin{algorithm}[!h]
\caption{Sampling Algorithm}
\label{alg:sampling}
Take $C_0$ to be the root vertex (indexed $(i,s)=(0,1)$),
with precision $w_{(0,1)}=\beta$.
\\
\For{each level $i=1,\ldots,h$, we construct $C_i$ as follows}{
Start with $C_i$ being empty. 
\\
\For{each node $v=(i-1,s)\in C_{i-1}$}{
Let $w_v$ be its precision, and set $p_v=\tfrac{w_v}{b}\cdot O(\log^3 n)$. 
\\
If $p_v\ge 1$, then set $J_v=\{(i,s+jl_i)\mid 0\le j<b\}$ to be the
set of all the $b$ children of $v$, and add them to $C_{i}$, each with precision $p_v$.
\\
Otherwise, when $p_v<1$, sample each of the $b$ children of $v$
with probability $p_v$, to form a set $J_v\subseteq \{i\}\times ([s:s+l_{i-1}]\cap B_{i})$.
For each $v'\in J_v$, draw $w_{v'}$ i.i.d. from $\cal W$, and add node
$v'$ to $C_i$ with precision $w_{v'}$.
\\
}
}
Query the characters $x[s]$ for all $(h,s)\in C_h$ --- this
is the output of the algorithm.
\end{algorithm}

\paragraph{Estimation Algorithm.} We compute a value $\tau(v,z)$, for
each node $v\in \cup_i C_i$ and position $z\in[n]$, such that $\tau(v,z)$
is a good approximation ($1+o(1)$ factor) to the $\EC$-distance of the node
$v$ to position $z$.

We also use a ``reconstruction algorithm'' $R$, defined
in Lemma~\ref{lem:nonUniformSampling}. It takes as input at most $b$
quantities, their precision, and outputs a positive number.
\begin{algorithm}
\caption{Estimation Algorithm}
\label{alg:estimation}
For each sampled leaf $v=(h,s)\in C_h$ and $z\in[n]$ we set $\tau(v,z)=\Ham(x[s],y[z])$.
\\
\For{each level $i=h-1,j-2,\ldots,0$, position $z\in[n]$, and node $v\in C_i$ with
  precision $w_v$} {
We apply the following procedure $P(v,z)$ to obtain
$\tau(v,z)$.
\\
For each $v'\in J_v$, where $v'=(i+1,s+jl_{i+1})$ for
some $0\le j<b$, let 
$$
\delta_{v'}\eqdef \min_{k:|k|\le n} \tau(v', z+jl_{i+1}+k)+|k|.
$$
\\
If $p_v\ge 1$, then let $\tau(v,z)=\sum_{v'\in J_v} \delta_{v'}$.
\\
If $p_v<1$, set $\tau(v,z)$ to be the output of the algorithm $R$ on the vector
$(\tfrac{\delta_{v'}}{l_{i+1}})_{v'\in J_v}$ with precisions
$(w_{v'})_{v'\in J_v}$, multiplied by $l_{i+1}/p_v$.
\\
}
The output of the algorithm is $\tau(r,1)$ where $r=(0,1)$ is the root of the tree.
\end{algorithm}

\subsubsection{Analysis Preliminaries: Approximators and a Concentration Bound}

We use the following approximation notion that captures both an
additive and a multiplicative error. 
For convenience, we work with factors $e^\eps$ instead of usual
$1+\eps$.

\begin{definition}
\label{def:approximator}
Fix $\rho>0$ and some $f\in[1,2]$.
For a quantity $\tau\ge 0$, we call its {\em $(\rho,
  f)$--approximator} any quantity $\hat \tau$ such that
$\tau/f-\rho\le \hat \tau\le f\tau+\rho$.
\end{definition}

It is immediate to note the following {\em additive property}: if
$\hat\tau_1, \hat\tau_2$ are $(\rho,f)$-approximators to
$\tau_1,\tau_2$ respectively, then $\hat\tau_1+\hat\tau_2$ is a
$(2\rho,f)$-approximator for $\tau_1+\tau_2$. Also, there's a
composion property: if $\hat \tau'$ is an $(\rho',f')$-approximator to
$\hat\tau$, which itself is a $(\rho,f)$-approximator to $\tau$, then
$\hat\tau'$ is a $(\rho'+f'\rho,ff')$-aproximator to $\tau$.

The definition is motivated by the following concentration statement
on the sum of random variables. The statement is an
immediate application of the standard Chernoff/Hoeffding bounds.

\begin{lemma}[Sum of random variables]
\label{lem:sumConcentration}
Fix $n\in \N$, $\rho>0$, and error probability $\delta$. Let
$Z_i\in [0,\rho]$ be independent random variables,
and let $\zeta>0$ be a sufficiently large absolute constant.
Then for every $\eps\in(0,1)$,
the summation $\sum_{i\in [n]} Z_i$ is a $(\zeta\rho\tfrac{\log 1/\delta}{\eps^2},
e^\eps)$-approximator to $\EE{}{\sum_{i\in[n]} Z_i}$, with probability $\ge1-\delta$.
\end{lemma}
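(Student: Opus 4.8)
The plan is to reduce to the unit-scale case and then invoke two one-sided Chernoff bounds. By homogeneity, dividing all the $Z_i$ (and the claimed additive error) by $\rho$, it suffices to treat $Z_i\in[0,1]$; write $S=\sum_{i\in[n]}Z_i$, $\mu=\EE{}{S}$, and let $M\eqdef\zeta\tfrac{\log 1/\delta}{\eps^2}$ be the target additive slack. We must show that with probability at least $1-\delta$,
$$
e^{-\eps}\mu-M\ \le\ S\ \le\ e^{\eps}\mu+M .
$$
We may assume $\delta\le \tfrac12$ (otherwise the conclusion is essentially vacuous), and we will bound each of the two failure events by $\delta/2$ and take a union bound.

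For the upper tail, write $e^{\eps}\mu+M=(1+t)\mu$, so that $t\ge e^{\eps}-1\ge\eps$ and, crucially, $t\mu=(e^{\eps}-1)\mu+M\ge M$; this single inequality is what lets the additive error be absorbed into the multiplicative-deviation parameter, handling small $\mu$ uniformly. By the standard Chernoff bound for sums of $[0,1]$-valued independent variables, $\Pr[S\ge(1+t)\mu]\le\exp(-t^2\mu/3)$ when $t\le1$ and $\Pr[S\ge(1+t)\mu]\le\exp(-t\mu/3)$ when $t>1$. In the first case the exponent is $\tfrac{t}{3}\cdot t\mu\ge\tfrac{\eps}{3}M$, and in the second it is $\tfrac13 t\mu\ge\tfrac13 M\ge\tfrac{\eps}{3}M$ since $\eps<1$; either way it is at least $\tfrac{\eps}{3}M=\tfrac{\zeta}{3\eps}\log(1/\delta)\ge\tfrac{\zeta}{3}\log(1/\delta)$, so the probability is at most $\delta^{\zeta/3}\le\delta/2$ once $\zeta$ is a large enough absolute constant.

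For the lower tail, if $e^{-\eps}\mu-M\le0$ there is nothing to prove since $S\ge0$; otherwise write $e^{-\eps}\mu-M=(1-t)\mu$ with $t\in(0,1)$, and note $t\ge 1-e^{-\eps}\ge\eps/2$ and again $t\mu=(1-e^{-\eps})\mu+M\ge M$. The standard lower-tail Chernoff bound $\Pr[S\le(1-t)\mu]\le\exp(-t^2\mu/2)$ then gives exponent $\tfrac{t}{2}\cdot t\mu\ge\tfrac{\eps}{4}M=\tfrac{\zeta}{4\eps}\log(1/\delta)\ge\tfrac{\zeta}{4}\log(1/\delta)$, hence probability at most $\delta^{\zeta/4}\le\delta/2$ for $\zeta$ large enough. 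Adding the two bounds and multiplying back by $\rho$ yields the lemma. There is no real obstacle here — the statement is genuinely an immediate corollary of Chernoff/Hoeffding; the only points requiring a moment of care are the reparametrization $t\mu\ge M$ that merges the additive and multiplicative regimes, and keeping the constant $\zeta$ large enough that $\delta^{\Omega(\zeta)}\le\delta/2$ across the relevant range of $\delta$.
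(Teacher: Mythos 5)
Your proof is correct, and it takes a genuinely different route from the paper's. The paper argues by a case split on the size of $\mu=\EE{}{\sum_i Z_i}$: when $\mu > \frac{\zeta}{4}\cdot\frac{\log(1/\delta)}{\eps^2}$ it invokes the multiplicative Chernoff bound directly to get an $e^\eps$-factor approximation, and when $\mu$ is below that threshold it instead applies an absolute-scale Hoeffding-type inequality of the form $\Pr[\sum_i Z_i\ge t]\le e^{-(t-2\mu)}$ with $t=\zeta\frac{\log(1/\delta)}{\eps^2}$ to conclude $\sum_i Z_i< t$ outright. You avoid the case split altogether by reparametrizing the deviation: setting $(1+t)\mu=e^\eps\mu+M$ (and similarly for the lower tail) forces $t\mu\ge M$ unconditionally, which lets you bound the Chernoff exponent from below by $\Omega(\eps M)=\Omega(\zeta\log(1/\delta)/\eps)$ in a single stroke that covers both the small-$\mu$ and large-$\mu$ regimes. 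Your approach is arguably cleaner — one lemma, no threshold, no auxiliary Hoeffding variant — while the paper's approach trades that unification for using only off-the-shelf statements. The only loose thread in yours is the caveat about $\delta>\tfrac12$ (and, pedantically, the implicit $\mu>0$ when dividing to define $t$, though $\mu=0$ is trivial since then $S=0$ surely), but the lemma is only ever invoked with $\delta=n^{-\Theta(1)}$, so this is harmless.
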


\begin{proof}[Proof of Lemma~\ref{lem:sumConcentration}]
By rescaling, it is sufficient to prove the claim for $\rho=1$.
Let $\mu=\EE{}{\sum_{i\in [n]} Z_i}$. If $\mu>\tfrac{\zeta}{4}\cdot\tfrac{\log
 1/\delta}{\eps^2}$, then, a standard application of the Chernoff implies
that $\sum_i Z_i$ is a $e^\eps$ approximation to $\mu$, with
$\ge1-\delta$ probability, for some sufficiently high $\zeta>0$.

Now assume that $\mu\le\tfrac{\zeta}{4}\cdot\tfrac{\log 1/\delta}{\eps^2}$.
We use the following variant of the Hoeffding inequality, which can be
derived from~\cite{Hoeffding63}.
\begin{lemma}[Hoeffding bound]\label{lem:hoeffding}
Let $Z_i$ be $n$ independent random variables such that $Z_i\in[0,1]$,
and $\EE{}{\sum_{i\in[n]} Z_i}=\mu$. Then, for any $t>0$, we have
that
$
\Pr\left[\sum_i Z_i\ge t\right]\le e^{-(t-2\mu)}.
$
\end{lemma}
We apply the above lemma for $t=\zeta\cdot\tfrac{\log
  1/\delta}{\eps^2}$. We obtain that $\Pr[\sum_i Z_i\ge t]\le
e^{-t/2}=e^{-\Omega(\log 1/\delta)}<\delta$, which completes the proof that $\sum_i
Z_i$ is a $(\zeta\tfrac{\log 1/\delta}{\eps^2}, e^\eps)$-approximator to
$\mu$ (when $\rho=1$).
\end{proof}

\subsubsection{Main Analysis Tools: Uniform and Non-uniform Sampling Lemmas}

We present our two main subsampling lemmas that are applied, recursively, at each node of 
the tree.  The first lemma, on Uniform Sampling, is a simple
Chernoff bound in a suitable regime. 

The second lemma, called Non-uniform Sampling Lemma, is the heart of
our sampling, and is inspired by a sketching/streaming technique introduced in~\cite{IW05}
for optimal estimation of $F_k$ moments in a stream. Although a
relative of their method, our lemma is different both in
intended context and actual technique.
We shall use the constant $\zeta>0$ coming from 
Lemma \ref{lem:sumConcentration}.

\begin{lemma}[Uniform Sampling]
\label{lem:uniformSampling}
Fix $b\in \N$, $\eps>0$, and error probability $\delta>0$.
Consider some $a_j$, $j\in[b]$, such that
$a_j\in[0,1/b]$. For arbitrary $w\in [1,\infty)$, construct the set $J\subseteq [b]$ by
subsampling each $j\in[b]$ with probability $p_w=\min\{1,\tfrac{w}{b}\cdot
\zeta\tfrac{\log 1/\delta}{\eps^2}\}$.
Then, with probability at least $1-\delta$, the value
$\tfrac{1}{p_w}\sum_{j\in J} a_j$ is a
$(1/w, e^\eps)$-approximator to $\sum_{j\in [b]} a_j$, and
$|J|\le O(w\cdot \tfrac{\log 1/\delta}{\eps^2})$.
\end{lemma}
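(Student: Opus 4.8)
The plan is to treat this as a direct application of the Chernoff/Hoeffding concentration bound of Lemma~\ref{lem:sumConcentration}, taking care of two cases depending on whether the subsampling probability $p_w$ saturates at $1$.

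\emph{Case $p_w = 1$.} Here $\tfrac{w}{b}\cdot\zeta\tfrac{\log 1/\delta}{\eps^2}\ge 1$, so $J=[b]$ and $\tfrac1{p_w}\sum_{j\in J}a_j = \sum_{j\in[b]}a_j$ exactly, which is certainly a $(1/w,e^\eps)$-approximator to itself; moreover $|J|=b\le w\cdot\zeta\tfrac{\log 1/\delta}{\eps^2} = O(w\tfrac{\log 1/\delta}{\eps^2})$, as required. No randomness is involved.

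\emph{Case $p_w < 1$.} Define $Z_j \eqdef \tfrac{1}{p_w} a_j \cdot \mathbf{1}[j\in J]$ for $j\in[b]$. These are independent, and since $a_j\le 1/b$ we have $Z_j\in[0,\tfrac{1}{p_w b}]$, so we may invoke Lemma~\ref{lem:sumConcentration} with the range parameter $\rho \eqdef \tfrac{1}{p_w b}$, error probability $\delta$, and the given $\eps$. We have $\EE{}{\sum_j Z_j} = \sum_{j\in[b]} a_j$, so the lemma gives that $\tfrac{1}{p_w}\sum_{j\in J}a_j = \sum_j Z_j$ is, with probability at least $1-\delta$, a $(\zeta\rho\tfrac{\log 1/\delta}{\eps^2},\,e^\eps)$-approximator to $\sum_{j\in[b]}a_j$. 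It remains to check that the additive error $\zeta\rho\tfrac{\log 1/\delta}{\eps^2}$ is at most $1/w$: indeed $\rho = \tfrac{1}{p_w b}$ and $p_w = \tfrac{w}{b}\zeta\tfrac{\log 1/\delta}{\eps^2}$ in this case, so $\zeta\rho\tfrac{\log 1/\delta}{\eps^2} = \tfrac{\zeta}{p_w b}\cdot\tfrac{\log 1/\delta}{\eps^2} = \tfrac{1}{w}$, giving exactly the claimed $(1/w,e^\eps)$-approximator.

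\emph{Bounding $|J|$ in the random case.} Note $\EE{}{|J|} = p_w b = w\zeta\tfrac{\log 1/\delta}{\eps^2}$. Since $|J|$ is a sum of independent Bernoulli$(p_w)$ variables, a standard Chernoff bound shows $|J| = O(p_w b + \log 1/\delta) = O(w\tfrac{\log 1/\delta}{\eps^2})$ with probability at least $1-\delta$ (using $w\ge1$ so that $p_w b \ge \zeta\tfrac{\log 1/\delta}{\eps^2} = \Omega(\log 1/\delta)$, which makes the expectation term dominate). Combining this with the approximation event via a union bound (and adjusting the constant $\zeta$, or the failure probability by a factor of $2$), both conclusions hold simultaneously with probability at least $1-\delta$, which completes the proof.

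The only mild subtlety — not really an obstacle — is making sure the additive error calibrates to exactly $1/w$ rather than something like $O(1/w)$; this is why $p_w$ is defined with precisely the constant $\zeta$ from Lemma~\ref{lem:sumConcentration} baked in, so the arithmetic cancels cleanly. The $|J|$ bound is routine Chernoff and I would state it without detailed computation.
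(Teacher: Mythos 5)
Your proof is correct and follows essentially the same route as the paper's: split on whether $p_w$ saturates at $1$, then in the non-trivial case define $Z_j = a_j/p_w \cdot \mathbf{1}[j\in J]$, bound $Z_j \le 1/(p_w b)$, and apply Lemma~\ref{lem:sumConcentration} so that the additive error term cancels to exactly $1/w$, with a separate Chernoff bound on $|J|$. The only cosmetic difference is that you are slightly more explicit about the union bound between the two failure events, which the paper glosses over.
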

\begin{proof}
If $p_w=1$, then $J=[b]$ and there is nothing to prove; so assume that
$p_w=\tfrac{w}{b}\cdot \zeta\tfrac{\log 1/\delta}{\eps^2}<1$ for the
rest.  

The bound on $|J|$ follows from a standard application of the Chernoff
bound: 
$\EE{}{|J|}=p_wb\le O(w\cdot\tfrac{\log 1/\delta}{\eps^2})$,
hence the probability that $|J|$ exceeds twice the quantity
is at most $e^{-\Omega(\log 1/\delta)}\leq \delta/2$.

We are going to apply Lemma~\ref{lem:sumConcentration} to the
variables $Z_j=a_j/p_w\cdot \chi[j\in J]$, where the indicator variable $\chi[j\in J]$ is 1
iff $j\in J$. Note that $0\le Z_j\le \tfrac{\eps^2}{w\cdot \zeta\log 1/\delta}$. We thus obtain that $\sum_{j\in [b]} Z_j$ is
a $(\tfrac{\zeta\eps^{-2} \log 1/\delta}{w\cdot \zeta\eps^{-2}\log 1/\delta}, e^\eps)$-approximator, and hence $(1/w, e^\eps)$-approximator, to $\EE{}{\sum_j
  Z_j}=\sum_{j\in [b]} p_w\cdot \tfrac{a_j}{p_w}=\sum_{j\in [b]} a_j$.
\end{proof}

We now present and prove the Non-uniform Sampling Lemma.

\begin{lemma}[Non-uniform Sampling]
\label{lem:nonUniformSampling}
Fix integers $n\le N$, approximation $\eps>0$, factor $1<f<1.1$, error
probability $\delta>0$, and an
``additive error bound'' $\rho>6n/\eps/N^3$. There exists a
distribution $\cal W$ on the real interval $[1,N^3]$ with $\EE{w\in
  \W}{w}\le O(\tfrac{1}{\rho}\cdot \tfrac{\log 1/\delta}{\eps^3}\cdot
\log N)$, as well as a ``reconstruction algorithm'' $R$,  with the following property.

Take arbitrary $a_i\in[0,1]$, for $i\in[n]$, and let $\sigma=\sum_{i\in[n]}
a_i$. Suppose one draws $w_i$ i.i.d. from $\cal W$, for each
$i\in[n]$, and let $\hat a_i$ be a
$(1/w_i,f)$-approximator of $a_i$.
Then, given $\hat a_i$ and $w_i$ for all
$i\in[n]$, the algorithm $R$  generates a
$(\rho,f\cdot e^{\eps})$-approximator to
$\sigma$, with probability at least $1-\delta$. 
\end{lemma}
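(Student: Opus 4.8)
The plan is to define $\W$ as (essentially) the distribution with density proportional to $1/w^2$ on the interval $[1,N^3]$, so that its expectation is logarithmic in $N$ (after the rescaling by $\frac{1}{\rho}\cdot\frac{\log 1/\delta}{\eps^3}$ that makes the additive-error promise work out), and then describe the reconstruction algorithm $R$ that, given the pairs $(\hat a_i,w_i)$, keeps only those indices for which $\hat a_i$ is ``reliable'', i.e.\ where $\hat a_i$ is substantially larger than $1/w_i$, and forms a suitable inverse-probability estimator of $\sigma$ from them. Concretely, I would group indices dyadically by the scale of $a_i$: for each level $\ell$, the indices with $a_i\approx 2^{-\ell}$ are ``seen'' precisely by $R$ exactly when $w_i\gtrsim 2^{\ell}/\eps$, which under $\W$ happens with probability $\approx \eps\cdot 2^{-\ell}$ (up to the normalization factor). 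So a level-$\ell$ index is retained with a known probability $q_\ell$, and $R$ outputs $\sum_{\ell}\frac{1}{q_\ell}\sum_{i\text{ retained at level }\ell}\hat a_i$, or a single global-rescaling variant thereof; the key point is that $R$ can compute $q_\ell$ from the observed $w_i$ without knowing the $a_i$.

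The main steps, in order, are: (1) fix the density of $\W$ and verify the expectation bound $\EE{w\in\W}{w}\le O(\tfrac1\rho\cdot\tfrac{\log 1/\delta}{\eps^3}\cdot\log N)$ by direct integration of $\int_1^{N^3} w\cdot \nu/w^2\,dw = \nu\ln(N^3)$ and choosing $\nu$ to absorb $\tfrac1\rho\cdot\tfrac{\log 1/\delta}{\eps^3}$; (2) specify $R$ precisely, including the dyadic levels $\ell=0,1,\ldots,O(\log N)$ and the retention rule $\hat a_i \ge c/(w_i\eps)$ for an appropriate constant, and check that a level-$\ell$ index with $a_i$ in its range is retained with probability $q_\ell$ up to a $(1\pm O(\eps))$ factor (using the $(1/w_i,f)$-approximator guarantee with $f<1.1$, so the multiplicative slack does not move $a_i$ across a dyadic boundary by more than one level); (3) bound the multiplicative error: within each level the retained $\hat a_i$ are each within factor $f$ of $a_i$, and since all retained $a_i$ at level $\ell$ lie in $[2^{-\ell-1},2^{-\ell}]$, the bucket size $\rho_\ell := 2^{-\ell}/q_\ell$ is comparable for all of them, so Lemma~\ref{lem:sumConcentration} applied per level (with that $\rho_\ell$, error probability $\delta/\mathrm{poly}(\log N)$, and approximation $\eps$) gives an $(\rho_\ell\cdot\zeta\tfrac{\log 1/\delta}{\eps^2}, e^\eps)$-approximator to the level-$\ell$ contribution to $\sigma$; (4) control the additive error by noting that the levels contributing more than $\rho$ in total are few (there are only $O(\log N)$ of them, and each contributes a per-level additive term $\rho_\ell\cdot O(\tfrac{\log 1/\delta}{\eps^2}) = O(\tfrac{1}{q_\ell}\cdot 2^{-\ell}\cdot\tfrac{\log1/\delta}{\eps^2})$, which is $O(\rho/\log N)$ by the choice $q_\ell = \Theta(\eps\cdot 2^{-\ell}\cdot\rho^{-1}\cdot\tfrac{\log 1/\delta}{\eps^2})$ coming from the normalization $\nu$), while the tail levels with $2^{-\ell} < \rho\eps/(n\cdot\mathrm{polylog})$ contribute at most $n\cdot 2^{-\ell} < \rho$ in absolute value to $\sigma$ and can simply be ignored — this is exactly where the hypothesis $\rho > 6n/\eps/N^3$ is used, to guarantee the smallest meaningful scale $2^{-\ell}\approx 1/N^3$ still exceeds the ignorable threshold; (5) union-bound the $O(\log N)$ per-level concentration failures and the per-level ``one index straddles a boundary'' events, then combine the per-level estimators additively, collecting the $(1+o(1))$ multiplicative factors into $e^\eps$ (absorbing $O(\log N)$ of them only costs another $e^\eps$ after replacing $\eps$ by $\eps/\log N$, which the statement's $\eps^3$ budget can afford), and fold the composition with the input approximators $\hat a_i$ (factor $f$) into the claimed $f\cdot e^\eps$.

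The main obstacle I expect is step (3)–(4) together: getting the per-level additive errors to sum to $O(\rho)$ \emph{simultaneously with} the expectation bound $\EE{}{w}=O(\tfrac1\rho\cdot\tfrac{\log1/\delta}{\eps^3}\cdot\log N)$. These pull in opposite directions — a finer retention threshold (larger $q_\ell$) shrinks the additive error per level but raises $\EE{}{w}$ — so the constant $\nu$ in the density and the retention constant $c$ must be chosen in tandem, and one must be careful that the number of ``active'' levels is genuinely $O(\log N)$ and not $O(\log n \cdot \log N)$ or worse. A secondary nuisance is the interaction of the multiplicative slack $f<1.1$ in the approximators $\hat a_i$ with the dyadic bucketing: an index whose true $a_i$ sits near a power-of-two boundary can be counted at level $\ell$ or $\ell\pm1$ depending on the slack, so I would either widen the buckets to overlapping windows or charge such boundary indices to both adjacent levels and absorb the resulting factor-$2$ overcount into constants. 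Everything else — the integration defining $\W$, the Chernoff/Hoeffding applications via Lemma~\ref{lem:sumConcentration}, and the final union bound — is routine.
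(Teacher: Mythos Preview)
Your overall intuition is right---retain the indices where $\hat a_i$ is large relative to $1/w_i$ and reweight---but the construction of $\W$ and the concentration argument both have genuine gaps.

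First, the distribution $\W$. You write ``choosing $\nu$ to absorb $\tfrac{1}{\rho}\cdot\tfrac{\log 1/\delta}{\eps^3}$,'' but $\nu$ is a normalization constant, not a free parameter: for density $\nu/w^2$ on $[1,N^3]$ one has $\nu\approx 1$ and hence $\Pr[w\ge T]\approx 1/T$. With that tail, a level-$\ell$ index is retained with probability $q_\ell\approx \eps\cdot 2^{-\ell}$, so the expected number of retained indices at level $\ell$ is $\approx \eps\cdot\sigma_\ell$, which can be $\ll 1$ when $\sigma_\ell$ is as small as $\rho$; no Chernoff bound will then give you an additive-$\rho$ estimate. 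Your step~(4) arithmetic confirms this: with $q_\ell=\Theta(\eps\cdot 2^{-\ell}\cdot\tfrac{\log 1/\delta}{\rho\eps^2})$ the per-level additive term $\tfrac{2^{-\ell}}{q_\ell}\cdot\tfrac{\log 1/\delta}{\eps^2}$ equals $\Theta(\rho/\eps)$, not $O(\rho/\log N)$; and to force $q_\ell$ to be that large you would have to push $w$ outside $[1,N^3]$. The paper's fix is to let $\W$ be the \emph{maximum of $k$ i.i.d.\ copies} of the $\nu/w^2$ law with $k=\Theta(\tfrac{\log 1/\delta}{\rho\,\eps^3})$. This keeps the support in $[1,N^3]$, gives $\EX[w]\le k\cdot O(\log N)$ (exactly the claimed bound), and, crucially, lets the algorithm regenerate the $k$ underlying samples $w_{i,1},\ldots,w_{i,k}$ conditioned on their max being $w_i$.

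Second, even with the right $\W$, your per-level Chernoff would sum $O(\log N)$ additive errors of size $\Theta(\rho)$ each, and your proposed cure (``replace $\eps$ by $\eps/\log N$'') blows up $k$ and hence $\EX[w]$ by another $\log N$ factor, overshooting the lemma's bound. The paper avoids bucketing altogether: it sets $t=3/\eps$, defines a single indicator $s_{i,j}=\tfrac{1}{k}\cdot\mathbb{1}[\hat a_i\ge t/w_{i,j}]$ for each of the $nk$ regenerated samples, and outputs $\hat\sigma = \tfrac{t}{\nu}\sum_{i,j}s_{i,j}$. Because $\Pr_{w\sim\W_1}[w\ge c/a_i]\approx a_i/c$, each $s_{i,j}$ has mean $\approx a_i/(kt)$, so the sum estimates $\sigma/t$ directly---no level structure needed. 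The indicators are not independent (they share $\hat a_i$), but they are sandwiched between truly independent upper and lower indicators depending only on $a_i$ and $w_{i,j}$; a \emph{single} application of Lemma~\ref{lem:sumConcentration} to $nk$ variables bounded by $1/k$ then gives additive error $O(\tfrac{1}{k}\cdot\tfrac{\log 1/\delta}{\eps^2})\cdot t = O(\rho)$ in one shot. This is the missing idea: the max-of-$k$ trick both fixes the tail of $\W$ and supplies the $nk$ quasi-independent indicators that make one global Chernoff suffice.
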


For concreteness, we mention that $\W$ is the maximum of
$O(\tfrac{1}{\rho}\cdot\tfrac{\log
  1/\delta}{\eps^3})$ copies of the (truncated) distribution
$1/x^2$ (essentially equivalent to a distribution of $x$ where
the logarithm of $x$ is distributed geometrically).

\begin{proof}
We start by describing the distribution $\W$ and the algorithm
$R$. Fix $k=\tfrac{2\zeta}{\rho}\cdot\tfrac{\log
  1/\delta}{(\eps/2)^3}$. We first describe a related 
distribution: let $\W_1$ be distribution on $x$
such that the pdf function is $p_1(x)=\nu/x^2$ for $1\le x\le
N^3$ and $p_1(x)=0$ otherwise, where $\nu=(\int_{1}^\infty
p_1(x)\dx)^{-1}=(1-1/N^3)^{-1}$ is a normalization constant. Then $\W$
is the distribution of $x$ where we choose $k$ i.i.d. variables $x_1,\ldots
x_k$ from $\W_1$ and then set $x=\max_{i\in[k]} x_i$. Note that the
pdf of $\W$ is $p(x)=\nu^k \tfrac{k}{x^2}(1-1/x)^{k-1}$.

The algorithm $R$ works as follows. For each $i\in[n]$, we define $k$ 
``indicators''
$s_{i,j}\in\{0,1/k\}$ for $j\in[k]$. Specifically, we generate the set of
random variables $w_{i,j}\in \W_1$, $j\in[k]$, conditioned on the fact
that $\max_{j\in[k]}w_{i,j}=w_i$.
Then, for each $i\in[n],j\in[k]$, we set $s_{i,j}=1/k$
if $\hat a_i\ge t/w_i$ for $t=3/\eps$, and $s_{i,j}=0$
otherwise. Finally, we set
$s=\sum_{i\in [n], j\in[k]} s_{i,j}$ 
and the algorithm outputs $\hat\sigma=st/\nu$
(as an estimate for $\sigma$). 

We note that the variables $w_{i,j}$ could be thought as being chosen
i.i.d. from $\W_1$. For each, the value $\hat a_i$ is an
$(1/w_{i,j},f)$-approximator to $a_i$ since $\hat a_i$ is a
$(1/\max_j w_{i,j},f)$-approximator to $a_i$. 

It is now easy to bound $\EE{w\in\W}{w}$.
Indeed, we have $\EE{w\in \W_1}{w}= \int_{1}^{N^3}
x\cdot \nu/x^2\dx\le O(\log N)$. Hence
$\EE{w\in \W}{w}\le \sum_{j\in [k]} \EE{w\in
  \W_1}{w}\le O(k\log N)=O(\tfrac{1}{\rho}\cdot\tfrac{\log
  1/\delta}{\eps^3}\cdot \log N)$.

We now need to prove that $\hat\sigma$ is an approximator to $\sigma$, with
probability at least $1-\delta$. We first compute the
expectation of $s_{i,j}$, for each $i\in [n], j\in[k]$. This
expectation depends on the approximator values $\hat a_i$,
which itself may depend on $w_i$. Hence we can only give upper and lower
bounds on the expectation $\EE{}{s_{i,j}}$. Later, we want to apply a
  concentration bound on
the sum of $s_{i,j}$. Since $s_{i,j}$ may be interdependent, we
will apply the concentration bound on the upper/lower bounds of
$s_{i,j}$ to give bounds on $s=\sum s_{i,j}$.

Formally, we define random variables $\overline{s}_{i,j},
\underline{s}_{i,j}\in\{0,1/k\}$. We set $\overline{s}_{i,j}=1/k$ iff $w_{i,j}\ge
(t-1)/(fa_i)$, and 0 otherwise. Similarly, we set
$\underline{s}_{i,j}=1/k$ iff $w_{i,j}<f(t+1)/a_i$, and 0 otherwise.
We now claim that 
\begin{equation}
\label{eqn:sUBLB}
\underline{s}_{i,j}\le s_{i,j}\le \overline{s}_{i,j}.
\end{equation}
Indeed, if $s_{i,j}=1/k$, then $\hat a_{i}\ge t/w_{i,j}$, and hence,
using the fact that $\hat a_i$ is a $(1/w_{i,j}, 
f)$-approximator to $a_i$, we have $w_{i,j}\ge (t-1)/(fa_i)$, or
$\overline{s}_{i,j}=1/k$. Similarly, if $s_{i,j}=0$, then $\hat a_{i}<
t/w_{i,j}$, and hence $w_{i,j}<f(t+1)/a_i$, or
$\underline{s}_{i,j}=0$. Note that each collection $\{\overline{s}_{i,j}\}$
 and $\{\underline{s}_{i,j}\}$ is a collection of independent random variables.

We now bound $\EE{}{\overline{s}_{i,j}}$ and
$\EE{}{\underline{s}_{i,j}}$. For the first quantity, we have:
$$
\EE{}{\overline{s}_{i,j}}= \int_{(t-1)/(fa_i)}^{N^3} \tfrac{1}{k}p_1(x)\dx\le
\tfrac{fa_i}{k(t-1)}\int_1^{\infty}\nu/x^2\dx=\nu/k\cdot\tfrac{fa_i}{t-1}.
$$
For the second quantity, we have:
$$
\EE{}{\underline{s}_{i,j}}= \int_{f(t+1)/a_i}^{N^3} p_1(x)\dx=\nu/k\cdot(\tfrac{a_i/f}{t+1}-1/N^3).
$$

Finally, using Eqn.~\eqref{eqn:sUBLB} and the fact that
$\EE{}{s}=\sum_{i,j} \EE{}{s_{i,j}}$, we can bound 
$\EE{}{\hat\sigma}=\EE{}{st/\nu}$ as follows:
$$
\tfrac{t}{f(t+1)}\sum_{i\in[n]} a_i-nt/N^3
\le
\tfrac{t}{\nu}\sum_{i,j} \EE{}{\underline{s}_{i,j}}
\le
\EE{}{ts/\nu}
\le
\tfrac{t}{\nu}\sum_{i,j} \EE{}{\overline{s}_{i,j}}
\le
 f\sum_{i\in[n]}a_i\cdot \tfrac{t}{t-1}.
$$
Since each $\overline{s}_{i,j},\underline{s}_{i,j}\in[0,1/k]$ for
$k=O(\tfrac{t}{\rho}\cdot\tfrac{\log 1/\delta}{\eps^2})$, 
we can apply Lemma~\ref{lem:sumConcentration} to obtain a high
concentration bound. For the upper bound, we obtain, with probability
at least $1-\delta/2$:
$$
ts/\nu
\le 
e^{\eps/2}\cdot \EE{}{t/\nu\cdot \sum_{i,j} s_{i,j}}+\rho
\le
e^{\eps/2}\cdot f\sum a_i\cdot \tfrac{t}{t-1}+\rho
\le e^\eps\cdot f\cdot\sigma+\rho.
$$
Similarly, for the lower bound, we obtain, with probability at least
$1-\delta/2$:
$$
ts/\nu\ge e^{-\eps/2}\cdot (\sum a_i\cdot \tfrac{t}{f(t+1)}-nt/N^3)-\rho/2
\ge e^{-\eps}/f\cdot \sigma-\rho,
$$
using that $\rho/2\ge nt/N^3$.
This completes the proof that $\hat\sigma$ is a
$(\rho,f\cdot e^\eps)$-approximator to $\sigma$, with probability at least $1-\delta$.
\end{proof}

\subsubsection{Correctness and Sample Bound for the Main Algorithm}

Now, we prove the correctness of the algorithms
\ref{alg:sampling},~\ref{alg:estimation} and bound its
query complexity. We note that we use Lemmas~\ref{lem:uniformSampling}
and~\ref{lem:nonUniformSampling} with $\delta=1/n^3$, $\eps=1/\log n$,
and $N=n$ (which in particular, completely determine the distribution $\W$
and algorithm $R$ used in the algorithms~\ref{alg:sampling} and~\ref{alg:estimation}).

\begin{lemma}[Correctness]
\label{lem:correctness}
For $b=\omega(1)$, the output of the Algorithm~\ref{alg:estimation} (Estimation), is a
$(n/\beta, 1+o(1))$-approximator to 
the $\EC$-distance from $x$ to $y$, w.h.p.
\end{lemma}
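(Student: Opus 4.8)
The plan is to prove Lemma~\ref{lem:correctness} by induction on the tree level, showing that for every sampled node $v \in C_i$ with precision $w_v$ and every position $z \in [n]$, the computed value $\tau(v,z)$ is a $(\rho_i, f_i)$-approximator to the true $\EC$-distance $\EC(i,s,z)$, where the additive error bound $\rho_i$ depends on the level $i$ and the node's precision as $\rho_i \approx l_i/w_v$, and the multiplicative factor $f_i = e^{O(i/\log n)} = 1+o(1)$ accumulates one factor of $e^{\eps}$ (with $\eps = 1/\log n$) per level. The base case $i=h$ is immediate: $\tau(v,z) = \Ham(x[s],y[z]) = \EC(h,s,z)$ exactly, so it is trivially a $(0,1)$-approximator.

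For the inductive step, I would fix a node $v = (i,s) \in C_i$ with precision $w_v$ and a position $z$, and condition on the (high-probability) event that all the inductive guarantees hold for the children in $C_{i+1}$. First, observe that for each retained child $v' = (i+1, s+jl_{i+1}) \in J_v$, the quantity $\delta_{v'} = \min_{|k|\le n} \tau(v', z+jl_{i+1}+k) + |k|$ is a $(\rho_{i+1}, f_{i+1})$-approximator to $a_j := \min_{r_j} \EC(i+1, s+jl_{i+1}, z+jl_{i+1}+r_j) + |r_j|$ — this is because the min/$+|k|$ operation is $1$-Lipschitz and preserves approximators (the additive error passes through unchanged, and the multiplicative error only helps). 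Now $\EC(i,s,z) = \sum_{j=0}^{b-1} a_j$, and each $a_j \in [0, l_{i+1}]$ by the length bound, so after rescaling by $l_{i+1}$ we are exactly in the setting of the sampling lemmas with $a_j/l_{i+1} \in [0,1]$. In the case $p_v \ge 1$ (no subsampling), I use the additive property of approximators to conclude $\tau(v,z) = \sum_{v'} \delta_{v'}$ is a $(b\rho_{i+1}, f_{i+1})$-approximator; here one must check that the precision $p_v = \frac{w_v}{b}\cdot O(\log^3 n)$ assigned to the children makes $b\rho_{i+1} = b \cdot l_{i+1}/p_v = l_i \cdot O(\log^3 n)/w_v$ small enough relative to $l_i/w_v$ — actually this requires tracking that the Non-uniform Sampling Lemma's guarantee is what gives the $l_i/w_v$ bound, so the two cases must be reconciled carefully. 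In the case $p_v < 1$, I apply Lemma~\ref{lem:nonUniformSampling} with $\rho = 1/w_v$ (appropriately scaled), $\delta = 1/n^3$, $\eps = 1/\log n$, $N = n$: the children's precisions $w_{v'}$ were drawn i.i.d.\ from $\W$, and $\hat a_j := \delta_{v'}/l_{i+1}$ is a $(1/w_{v'}, f_{i+1})$-approximator to $a_j/l_{i+1}$ (strictly, a $(\rho_{i+1}/l_{i+1}, f_{i+1})$-approximator, and one needs $\rho_{i+1}/l_{i+1} \le 1/w_{v'}$, which the recursive precision assignment $p_{v'} = \frac{w_{v'}}{b}\cdot O(\log^3 n)$ and the inductive bound $\rho_{i+1} \le l_{i+1}/w_{v'}$ are designed to ensure), so the reconstruction algorithm $R$ outputs a $(1/w_v, f_{i+1}\cdot e^{\eps})$-approximator to $\sigma = \sum_j a_j/l_{i+1}$; multiplying back by $l_{i+1}/p_v$ — wait, by $l_{i+1}$ and the appropriate scaling — gives the claimed $(l_i/w_v, f_i)$-approximator with $f_i = f_{i+1} e^{\eps}$.

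Unrolling the induction from $i=h$ down to $i=0$: the root has precision $w_{(0,1)} = \beta$ and $l_0 = n$, so $\tau(r,1)$ is a $(n/\beta, f_0)$-approximator to $\EC(0,1,1)$ with $f_0 = e^{h\eps} = e^{(\log_b n)/\log n} \le e^{1} $, but in fact since $b = \omega(1)$ we have $h = \log_b n = o(\log n)$, so $f_0 = e^{o(1)} = 1+o(1)$, which is exactly the claimed guarantee. For the failure probability, there are at most $(\log n)^{O(h)} \cdot n$ invocations of the two sampling lemmas (one per sampled node per position $z$, and the number of sampled nodes is bounded in the companion sample-complexity lemma), each failing with probability at most $\delta = 1/n^3$; by a union bound the total failure probability is $o(1)$, i.e.\ the guarantee holds with high probability.

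The main obstacle I anticipate is the bookkeeping around reconciling the two branches ($p_v \ge 1$ versus $p_v < 1$) so that both yield the \emph{same} clean inductive invariant "$\tau(v,z)$ is an $(l_i/w_v, e^{(h-i)\eps})$-approximator", and in particular verifying that the children's precision assignment — either the flat $p_v = \frac{w_v}{b} O(\log^3 n)$ in the deterministic branch, or the i.i.d.\ draw from $\W$ in the random branch — always guarantees $\rho_{i+1} \le l_{i+1}/w_{v'}$ so that the hypothesis of Lemma~\ref{lem:nonUniformSampling} (that $\hat a_j$ is a $(1/w_j, f)$-approximator) is actually met. A secondary subtlety is that in the random branch, $\tau(v,z)$ is defined as $R(\cdot)$ times $l_{i+1}/p_v$, not just $l_{i+1}$ — so I must check that the scaling in Lemma~\ref{lem:nonUniformSampling} (where $R$ estimates $\sigma = \sum a_i$ directly, not a subsampled sum) matches the algorithm's use of it; plausibly the $1/p_v$ factor is an artifact of an intermediate uniform-subsampling step that I would need to fold into the analysis via Lemma~\ref{lem:uniformSampling} as well. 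Handling the interaction of the (possibly adaptive-looking, but actually oblivious) dependence of $\hat a_j$ on $w_j$ is already taken care of inside Lemma~\ref{lem:nonUniformSampling} via its $\overline s, \underline s$ sandwiching, so that does not add difficulty here.
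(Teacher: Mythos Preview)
Your proposal is correct and follows essentially the same approach as the paper. The paper packages the same induction slightly differently---it defines a ``hybrid distance'' $(C_0,\ldots,C_i,f)$-$\EC$-distance (where the nodes at level $i$ are assumed to have arbitrary $(l_i/w_v,f)$-approximators, and one then runs procedure $P$ up to the root) and inducts top-down on $i$, whereas you induct bottom-up directly on $\tau$; but the core technical step is identical and appears in the paper as Claim~\ref{clm:stepP}, which establishes exactly your inductive invariant that $\tau(v,z)$ is an $(l_i/w_v, f\cdot e^{2/\log n})$-approximator to $\EC(v,z)$. Your two anticipated obstacles are precisely the points the paper addresses there: in the $p_v<1$ branch one first applies Lemma~\ref{lem:uniformSampling} to show $\tfrac{1}{p_v}\sum_{v_j\in J_v}\delta_{v_j}$ is a $(l_i/2w,e^{1/\log n})$-approximator to $\EC(v,z)$, then applies Lemma~\ref{lem:nonUniformSampling} (with $\rho=1$, after normalizing by $l_{i+1}$) to show the output of $R$ times $l_{i+1}/p_v$ is a $(l_{i+1}/p_v, f\cdot e^{1/\log n})$-approximator to the former, and finally composes the two approximator guarantees using $l_{i+1}/p_v\le l_i/(3w)$---so the $1/p_v$ factor is indeed there to undo the uniform subsampling, exactly as you guessed.
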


\begin{proof}
From a high level view, we prove inductively from $i=0$ to $i=h$ that
expanding/subsampling the current $C_i$ gives a
good approximator, namely a $e^{O((h-i)/\log n)}$ factor
approximation, with probability at least
$1-i/n^{\Omega(1)}$. 
Specifically, at each step of the
induction,  we expand and subsample each node
from the current $C_i$ to form the set $C_{i+1}$ and use
Lemmas~\ref{lem:uniformSampling} 
and~\ref{lem:nonUniformSampling} to show that we don't loose on the
approximation factor by more than $e^{O(1/\log n)}$.

In order to state our main inductive hypothesis, we define a
hybrid distance, where the $\EC$-distance of nodes at high levels (big
$i$) is
computed standardly (via Definition~\ref{def:AdistRecursive}), and the
$\EC$-distance of the low-level nodes is estimated via sets $C_i$.
Specifically, for fixed $f\in[1,1.1]$, and $i\in \{0,1,\ldots h\}$, we
define the following {\em $(C_0,C_1\ldots
C_i,f)$-$\EC$-distance}. For each vertex $v=(i,s)$ such that $v\in
C_i$ has precision $w_v$, and $z\in[n]$,
let $\tau_i(v,z)$ to be some $(l_{i}/w_v,f)$--approximator to the
distance $\EC(i,s,z)$. Then, iteratively for $i'=i-1,i-2,\ldots,0$, 
for all $v\in
C_{i'}$ and $z\in[d]$, we compute $\tau_i(v,z)$ by applying the
procedure $P(v,z)$ (defined in the Algorithm~\ref{alg:estimation}),
using $\tau_i$ instead of $\tau$.

We prove the following inductive hypothesis, for some
suitable constants $t=2$ and $r=\Theta(1)$ (sufficiently high $r$ suffices).
\begin{itemize}
\item[$\IH_i$:]
For any $f\in[1,1.1]$, the $(C_0,C_1,\ldots C_i,f)$--$\EC$--distance is a
$(n/\beta, f\cdot e^{i\cdot t/\log
  n})$-approximator to the $\EC$--distance from $x$ to $y$, with
probability at least $1-i\cdot e^{-r\log n}$.
\end{itemize}

Base case is $i=0$, namely that $(C_0,f)$-$\EC$-distance is a
$(n/\beta,f)$-approximator to the $\EC$-distance between $x$ and $y$.
This case follows immediately from the
definition of the $(C_0,f)$-$\EC$-distance and the initialization step
of the Sampling Algorithm.

Now we prove the inductive hypothesis $\IH_{i+1}$, assuming $\IH_i$ holds for
some given $i\in\{0,1,\ldots h-1\}$. We remind that we defined the quantity
$\tau_{i+1}(v,z)$, for all $v\in
C_{i+1}\subseteq \overline C_i$, where $\overline
C_i=\{(i+1,s+jl_{i+1})\mid (i,s)\in C_i,j\in\{0,\ldots b-1\}\}$ and
$z\in [n]$, to be a $(l_{i+1}/w_v,f)$--approximator of the corresponding
$\EC$-distance, namely $\EC(v,z)$. The plan is to prove that, for all
$v\in C_i$ with precision $w_v$, the quantity $\tau_{i+1}(v,z)$ is a
$(l_{i}/w_v,f\cdot e^{2/\log n})$--approximator to
$\EC(v,z)$ with good probability --- which we do in the claim below. Then, by definition of
$\tau_i$ and $\IH_i$, this implies that $\tau_{i+1}((0,1),1)$ is equal
to the $(C_0,\ldots C_i,f\cdot e^{2/\log n}\cdot e^{i\cdot t/\log
  n})$--$\EC$--distance, and hence is
a $(n/\beta,f\cdot e^{(2+it)/\log n})$--approximator to the
$\EC$--distance from $x$ to $y$. This will complete the proof of $\IH_{i+1}$.
We now prove the main technical step of the above plan.

\begin{claim}
\label{clm:stepP}
Fix $v\in C_i$ with precision $w\eqdef w_v$, where $v=(i,s)$, and some $z\in[n]$. For $j\in
\{0,\ldots b-1\}$, let $v_j$ be the $j^{th}$ child of
$v$; i.e., $v_j=(i+1,s+jl_{i+1})$. For
$v_j\in C_{i+1}$ with precision $w_j\eqdef w_{v_j}$, and $z'\in[n]$, let
$\tau_{i+1}(v_j, z')$ be  
a $(l_{i+1}/w_j,f)$--approximator to $\EC(v_j,z')$.

Apply procedure $P(v,z)$ using $\tau_{i+1}(v_j,z')$ estimates, and let $\delta$ be the
output. Then $\delta$ is a
$(l_i/w, fe^{2/\log n})$--approximator to $\EC(v,z)$, with probability at least
$1-e^{-\Omega(\log n)}$.
\end{claim}

\begin{proof}
For each $v_j\in J_v$, where $J_v$ is as defined in
Algorithm~\ref{alg:sampling}, we define the following quantities: 
$$
\delta_{v_j}\eqdef\min_{k:|k|\le n} \EC(v_j,z+jl_{i+1}+k)+|k|
\qquad
\hat\delta_{v_j}\eqdef\min_{k:|k|\le n} \tau_{i+1}(v_j,z+jl_{i+1}+k)+|k|.
$$

It is immediate to see that $\hat\delta_{v_j}$ is a
$(l_{i+1}/w_j,f)$--approximator to $\delta_{v_j}$ by the definition of
$\tau_{i+1}$.

If $p_v\ge 1$, then we have that $w_j=\tfrac{w}{b}\cdot O(\log^3 n)$ for all
$v_j\in J_v$. Then, by the additive property of $(l_{i+1}/w_j,
f)$--approximators, $\delta=\sum_{v_j\in J_v}\hat\delta_{v_j}$ 
is a
$(l_i/w,f)$--approximator to $\sum_{v_j\in
  J_v}\delta_{v_j}=\EC(v,z)$. 

Now suppose $p_v<1$. Then, by Lemma~\ref{lem:uniformSampling},
$\delta'=\tfrac{1}{p_v}\sum_{v_j\in J_v} \delta_{v_j}$ is a
$(l_i/2w,e^{1/\log n})$--approximator to $\sum_{j=0}^{b-1} \delta_{v_j}=\EC(v,z)$, with high
probability. Furthermore, by Lemma~\ref{lem:nonUniformSampling} for
$\rho=1$, 
since $w_j\in \W$ are i.i.d. and $\tfrac{\hat\delta_{v_j}}{l_{i+1}}$ are
each an
$(1/w_j,f)$--approximator to $\tfrac{\delta_{v_j}}{l_{i+1}}$ 
respectively, then $R$ outputs a value
$\delta''$ that is a
$(1,f\cdot e^{1/\log n})$--approximator to 
$\sum_{v_j\in J_v} \tfrac{\delta_{v_j}}{l_{i+1}}=\tfrac{p_v}{l_{i+1}}\delta'$.
In other words,
$\delta=\tfrac{l_{i+1}}{p_v}\delta''$ is a $(l_{i+1}/p_v,f\cdot
e^{1/\log n})$--approximator to
$\delta'$. Since $l_{i+1}/p_v\le l_i/(3w)$, combining the two
approximator guarantees, we obtain that
$\delta$ is a $(l_i/w,f\cdot e^{2/\log n})$--approximator to
$\EC(v,z)$, w.h.p.
\end{proof}

We now apply a union bound over all $v\in C_i$ and $z\in[n]$, and use
the above Claim~\ref{clm:stepP}.
We now apply $\IH_i$ to deduce that $\tau_{i+1}((0,1),1)$ is a
$(n/\beta,f\cdot e^{ti/\log n}\cdot e^{2/\log n})$--approximator with
probability at least
$$
1-ie^{-r\log n}-e^{-\Omega(\log n)}\ge 1-(i+1)e^{-r\log n},
$$
for some suitable $r=\Theta(1)$. This proves $\IH_{i+1}$.

Finally we note that $\IH_{h}$ implies that $(C_0,\ldots
C_h,f)$--$\EC$--distance is a $(n/\beta, f\cdot e^{th/\log n})$--approximator
to the $\EC$--distance between $x$ and $y$. We conclude the lemma with
the observation that our Estimation Algorithm~\ref{alg:estimation}
outputs precisely the $(C_0,\ldots C_h,1)$--$\EC$--distance.
\end{proof}

It remains to bound the number of positions that 
Algorithm~\ref{alg:estimation} queries into $x$.

\begin{lemma}[Sample size]
\label{lem:sampleSize}
The Sampling Algorithm queries $Q_b=\beta (\log n)^{O(\log_b n)}$ positions
of $x$, with probability at least $1-o(1)$.
When $b=n^{1/t}$ for fixed constant $t\in \N$ and $\beta=O(1)$, we
have $Q_b=(\log n)^{t-1}$ with probability at least 2/3.
\end{lemma}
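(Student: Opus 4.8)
The plan is to track the growth of the sets $C_i$ level by level, using the precisions $w_v$ as the key bookkeeping device. Recall that each node $v\in C_i$ with precision $w_v$ spawns children at the next level either in the "full" case $p_v\ge 1$ (all $b$ children enter $C_{i+1}$, each with precision $p_v = \tfrac{w_v}{b}\cdot O(\log^3 n)$) or in the "sampled" case $p_v<1$ (each child enters $C_{i+1}$ independently with probability $p_v$, with a fresh precision drawn from $\W$). The first observation I would record is that in either case, $\EE{}{\,\text{(number of children in }C_{i+1}) \mid v\,} = b p_v = w_v\cdot O(\log^3 n)$ when $p_v<1$, and is exactly $b$ (with each new precision $p_v = w_v\cdot O(\log^3 n)/b$) when $p_v\ge 1$; crucially, in both cases the \emph{expected sum of precisions of $v$'s children}, conditioned on $v$, is $w_v\cdot O(\log^3 n)$ — in the sampled case because $\EE{w\in\W}{w}\le O(\tfrac{\log 1/\delta}{\eps^3}\log n) = O(\log^5 n)$ by Lemma~\ref{lem:nonUniformSampling} with our parameter settings $\delta=1/n^3$, $\eps=1/\log n$, so each of the $bp_v = w_v\,O(\log^3 n)$ expected children carries expected precision $O(\log^5 n)$, giving $w_v\cdot O(\log^8 n)$; in the full case it is $b\cdot (w_v O(\log^3 n)/b) = w_v O(\log^3 n)$. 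Either way, writing $\Phi_i \eqdef \sum_{v\in C_i} w_v$ for the potential at level $i$, we get $\EE{}{\Phi_{i+1}\mid C_i} \le (\log n)^{O(1)}\cdot \Phi_i$, hence $\EE{}{\Phi_i} \le \beta\cdot(\log n)^{O(i)}$ since $\Phi_0 = w_{(0,1)} = \beta$.

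Next I would relate the final query count to $\Phi_h$. Since every node in $C_h$ has precision $\ge 1$, we have $|C_h|\le \Phi_h$, so $\EE{}{|C_h|}\le \EE{}{\Phi_h}\le \beta(\log n)^{O(h)} = \beta(\log n)^{O(\log_b n)}$, which is exactly $Q_b$. A Markov bound then gives $|C_h|\le Q_b\cdot\omega(1)$ with probability $1-o(1)$; absorbing the $\omega(1)$ factor into the $O(\cdot)$ in the exponent (or, more carefully, using that the constant in the exponent is not tight) yields the stated bound $|C_h| = \beta(\log n)^{O(\log_b n)}$ with probability $1-o(1)$. One subtlety to handle cleanly: the expectations above are taken over \emph{all} the internal randomness (both the Bernoulli subsampling choices and the $\W$-draws), which is fine because the recursion for $\Phi_{i+1}$ in terms of $\Phi_i$ holds conditionally and we simply iterate expectations.

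For the second statement, with $\beta=O(1)$ and $b = n^{1/t}$ for a fixed integer $t$, we have $h = \log_b n = t$, so there are only $t$ levels. Here I would argue more sharply: since $\beta=O(1)$, at the root $p_{(0,1)} = \tfrac{\beta}{b}\cdot O(\log^3 n) = O(\log^3 n)/n^{1/t}$, which is $<1$ for large $n$, so \emph{every} node is in the sampled regime at every level (the precisions drawn from $\W$ are $\le n^3$, so $p_v = \tfrac{w_v}{b}O(\log^3 n)$ could in principle exceed $1$, but one checks that with high probability the precisions stay $(\log n)^{O(1)}$ throughout these $t$ levels, since at each of $O(1)$ levels the maximum precision among polynomially many nodes is $(\log n)^{O(1)}$ with high probability by a union bound over the tail of $\W$, whose density is $\le k\nu^k/x^2$). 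Granting that all precisions are $(\log n)^{O(1)}$, we get $\EE{}{|C_i|}\le \EE{}{\Phi_i}\le \beta\cdot(\log n)^{O(i)}$, but now I want the constant in the exponent to be essentially $1$ per level of actual branching: from $C_{i}$ to $C_{i+1}$ the expected number of children per node is $bp_v = w_v\cdot O(\log^3 n) = (\log n)^{O(1)}$, and since $\Phi_0 = \beta = O(1)$ and $|C_0|=1$, a careful count gives $\EE{}{|C_i|}\le (\log n)^{O(i)}$; pushing through the constants and using that the $i=1$ level already forces one $\Theta(\log n)$ factor from the $\log 1/\delta = 3\log n$ in $p_v$ while each subsequent level multiplies by another $(\log n)^{O(1)}$, one lands at $\EE{}{|C_h|} = \EE{}{|C_t|} = (\log n)^{t-1}$ after optimizing the constants (this is where the $-1$ comes from: the root contributes a sampling with $p = O(\log^3 n)/b$ giving $bp = O(\log^3 n)$ children but they sit at level $1$, and the $t$-th level leaves inherit precision but do not branch further). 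Finally Markov gives the bound with probability $2/3$.

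\textbf{Main obstacle.} The delicate point, and where I expect the real work to be, is controlling the \emph{precisions} along the recursion: $\W$ has a heavy ($1/x^2$-type) tail truncated only at $N^3 = n^3$, so a naive bound would let precisions — and hence subsampling rates $p_v$ — blow up. I would need to show that with probability $1-o(1)$ every node encountered has precision $(\log n)^{O(1)}$: since the number of nodes ever created is itself $\beta(\log n)^{O(\log_b n)} = n^{o(1)}$ (a fact one bootstraps from the expectation bound plus Markov), a union bound over the $\W$-tail $\Pr[w > (\log n)^{C}] \le O(k)\cdot(\log n)^{-C+1}$ suffices once $C$ is chosen large relative to the hidden constants. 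This circular-looking dependency (bounding node count needs precision control, precision control needs node count) is resolved by doing the expectation argument first — which needs no high-probability precision bound, only $\EE{w\in\W}{w} = (\log n)^{O(1)}$ — deriving the $n^{o(1)}$ node-count bound in expectation, and only then invoking the union bound over the tail. The second (sharp, integer-$t$) bound is more arithmetically fussy because one must not lose even a single extra $\log n$ factor per level, so the constants in Lemmas~\ref{lem:uniformSampling} and~\ref{lem:nonUniformSampling} have to be threaded carefully rather than absorbed into $O(\cdot)$.
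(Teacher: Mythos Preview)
Your plan for the first statement is essentially the paper's proof. Tracking the potential $\Phi_i=\sum_{v\in C_i} w_v$, showing $\EE{}{\Phi_{i+1}\mid C_i}\le (\log n)^{O(1)}\Phi_i$ via $\EE{w\in\W}{w}=(\log n)^{O(1)}$ and the definition of $p_v$, and then using $|C_h|\le \Phi_h$ plus Markov is exactly what the paper does (it phrases it as a simultaneous induction on $\EE{}{|C_i|}$ and $\EE{}{\sum_{v\in C_i}w_v}$, but that is the same thing). Your ``main obstacle'' about heavy $\W$-tails is, as you yourself conclude, a non-issue for this part: only the expectation of $w$ enters, never a high-probability bound on individual precisions.

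For the second statement your plan has a real gap. You attempt to squeeze out $(\log n)^{t-1}$ from the \emph{same} algorithm with the \emph{same} parameters $\eps=1/\log n$, $\delta=1/n^3$, by ``optimizing the constants''. That cannot work: already the subsampling factor $p_v=\tfrac{w_v}{b}\cdot\zeta\tfrac{\log 1/\delta}{\eps^2}$ carries a $(\log n)^3$, and $\EE{w\in\W}{w}=O(\tfrac{1}{\rho}\tfrac{\log 1/\delta}{\eps^3}\log n)$ carries another $(\log n)^{\Theta(1)}$, so each of the $t$ levels costs a fixed $(\log n)^c$ with $c\ge 3$, yielding $(\log n)^{\Theta(t)}$ rather than $(\log n)^{t-1}$. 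Your heuristic for where the ``$-1$'' comes from does not survive this arithmetic. The paper's fix is to \emph{change the parameters} in this regime: take $\eps=\Theta(1)$ (so the $\eps^{-2}$ and $\eps^{-3}$ factors become $O(1)$; this is acceptable for correctness because there are only $t=O(1)$ levels) and take $\delta=\Theta(1)$ at the first level (so $\log 1/\delta=O(1)$ there), keeping $\delta=1/n^3$ at the remaining $t-1$ levels. With these choices the root spawns $O(\beta)=O(1)$ children in expectation, and each subsequent level contributes a single $\Theta(\log n)$ factor, giving $(\log n)^{t-1}$.
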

\begin{proof}
We prove by induction, from $i=0$ to $i=h$, that $\EE{}{|C_i|}\le \beta\cdot
(\log n)^{ic}$, and $\EE{}{\sum_{v\in C_i} w_v}\le \beta\cdot
(\log n)^{ic+5}$ for a suitable $c=\Theta(1)$. The base case of $i=0$ is immediate by the
initialization of the Sampling Algorithm~\ref{alg:sampling}. Now we prove the inductive
step for $i$, assuming the inductive hypothesis for $i-1$. 
By Lemma~\ref{lem:uniformSampling}, $\EE{}{|C_{i}|}\le
\EE{}{\sum_{v\in C_{i-1}} w_v}\cdot O(\log^3 n)\le \beta (\log n)^{ic}$ by
the inductive hypothesis. Also, by Lemma~\ref{lem:nonUniformSampling},
$\EE{}{\sum_{v\in C_{i}} w_v}\le \EE{}{|C_i|}\cdot O(\log^4 n)+\EE{}{\sum_{v\in
    C_{i-1}} w_v}\cdot O(\log^3 n)\le \beta(\log n)^{ic+5}$. 
The bound then follows from an application of the Markov bound.

The second bound follows from a more careful use of the parameters of
the two sampling lemmas, Lemmas~\ref{lem:uniformSampling}
and~\ref{lem:nonUniformSampling}. In fact, it suffices to apply these
lemmas with $\eps=e^{\Theta(1/t)}$ and $\delta=0.1$ for the
first level and $\delta=1/n^3$ for subsequent levels. 
\end{proof}

These lemmas, \ref{lem:correctness} and \ref{lem:sampleSize}, together
with the characterization theorem \ref{thm:ecDistance},
almost complete the proof of Theorem~\ref{thm:upperBound}. It remains
to bound the run time of the resulting estimation algorithm, which we
do in the next section.

\subsection{Near-Linear Time Algorithm}

We now discuss the time complexity of the algorithm, and show that the
Algorithm~\ref{alg:estimation} (Estimation) may be implemented in $n\cdot (\log
n)^{O(h)}$ time. We note that as currently described in
Algorithm~\ref{alg:estimation}, our reconstruction technique takes
time $\tilde O(hQ_b \cdot n)$ time, where $Q_b=\beta (\log n)^{O(\log_b
  n)}$ is the sample complexity upper bound from
Lemma~\ref{lem:sampleSize} (note that, combined with the algorithm
of~\cite{LMS98}, this already gives a $n^{4/3+o(1)}$ time algorithm). The
main issue is the computation of the
quantities $\delta_{v'}$, as, naively, it requires to iterate over all $k\in
[n]$.

To reduce the time complexity of the Algorithm~\ref{alg:estimation},
we define the following quantity, which replaces the quantity
$\delta_{v'}$ in the description of the algorithm:
$$
\delta'_{v'}=\min_{k=e^{i/\log
    n}: i\in[\log n\cdot \ln (3n/\beta)]}
\left(|k|+\min_{k':|k'|\le k}\tau(v',z+jl_{i+1}+k')\right).
$$

\begin{lemma}
If we use $\delta'_{v'}$ instead of $\delta_{v'}$ in
Algorithm~\ref{alg:estimation}, the new algorithm outputs at most a
$1+o(1)$ factor higher value than the original algorithm.
\end{lemma}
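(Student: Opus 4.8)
The plan is to show that replacing $\delta_{v'}$ by $\delta'_{v'}$ changes each of the minimizations over shifts $k\in[-n,n]$ only multiplicatively, and that these small errors accumulate only mildly over the $h$ levels of the tree. First, I would observe that the inner quantity $\delta_{v'} = \min_{k:|k|\le n}\bigl(\tau(v',z+jl_{i+1}+k)+|k|\bigr)$ is a nonincreasing-then-nonnegative-cost minimization: writing $g(k) = \min_{k':|k'|\le k}\tau(v',z+jl_{i+1}+k')$, we have $\delta_{v'} = \min_{0\le k\le n}\bigl(k + g(k)\bigr)$, and $g$ is monotone nonincreasing in $k$. The quantity $\delta'_{v'}$ is exactly the same minimization but restricted to the geometric grid $k\in\{e^{i/\log n} : 0\le i\le \log n\cdot\ln(3n/\beta)\}$, which is an $e^{1/\log n}$-net of the interval $[1,3n/\beta]$. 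Since $g$ is monotone, rounding the optimal $k^\star$ up to the next grid point $k^\star e^{1/\log n}$ increases the ``shift cost'' term by a factor $e^{1/\log n}$ and cannot increase $g$; hence $\delta'_{v'}\le e^{1/\log n}\,\delta_{v'}$, provided the optimal $k^\star$ lies in the range $[1,3n/\beta]$. (For $k^\star=0$ the grid contains $1$, contributing at most an additive $1$, which is absorbed into the additive-error budget $n/\beta$; and one checks from the correctness analysis that any shift beyond $3n/\beta$ already incurs cost exceeding the additive threshold $n/\beta$ we are comparing against, so restricting to this range is harmless when distinguishing $\EC(0,1,1)\le n/\beta$ from $\ge 4n/\beta$.)

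Next I would propagate this bound up the tree. Using the language of $(\rho,f)$-approximators from Definition~\ref{def:approximator}: if each $\tau_{i+1}(v_j,\cdot)$ used inside $P(v,z)$ were replaced by its grid-restricted variant, then each $\hat\delta_{v_j}$ gets multiplied by at most $e^{1/\log n}$, i.e.\ passes from a $(\rho,f)$-approximator to a $(\rho,f\cdot e^{1/\log n})$-approximator of the true $\delta_{v_j}$. The additive and composition properties of approximators noted after Definition~\ref{def:approximator} then show that, after the sum / reconstruction step at node $v$, the output $\tau(v,z)$ loses only an extra $e^{1/\log n}$ multiplicative factor relative to the run of Algorithm~\ref{alg:estimation} with the exact $\delta_{v'}$. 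Iterating over the $h=\log_b n = O(\log n)$ levels, the total multiplicative slack introduced by the grid is $e^{h/\log n}=e^{O(1)}$ in the worst case — but in fact, mirroring the $\eps=1/\log n$ bookkeeping already used in the proof of Lemma~\ref{lem:correctness}, if one instead charges $e^{1/\log^2 n}$ per level (by taking the grid spacing $e^{1/(\,\log n\cdot h)}$, which only changes the number of grid points by a constant factor and keeps the claimed runtime), the cumulative factor is $e^{O(1/\log n)}=1+o(1)$. So the final output is at most $1+o(1)$ times the original output, which is exactly the claim.

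The main obstacle, and the step I would be most careful about, is the handling of the \emph{boundary of the shift range}: the original algorithm minimizes over all $|k|\le n$, whereas $\delta'_{v'}$ only considers $k$ up to roughly $3n/\beta$ (and only positive $k$). I would justify the truncation by arguing that in any configuration where the answer we must certify is $\EC(0,1,1)\le n/\beta$, no shift in an optimal/near-optimal vector $Z$ (in the sense of Claim~\ref{clm:ecDist2Z}) needs to exceed $O(n/\beta)$ in magnitude — since a single shift of size $s$ already contributes $s$ to $\cost(Z)$ — and for the ``far'' case we only need an upper bound on the estimator anyway, so shrinking the feasible set of shifts can only decrease it, and the decrease is bounded below by the same truncation-is-harmless argument. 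Combined with the monotone-net rounding above, this yields $\delta'_{v'}\le (1+o(1))\,\delta_{v'}$ at every node, and the level-by-level composition finishes the proof. The remaining (routine) point is to confirm that evaluating $\delta'_{v'}$ costs only $\tilde O(1)$ per $(v',z)$ pair via a running prefix-minimum of $\tau(v',\cdot)$, giving the promised $n\cdot(\log n)^{O(h)}$ total time; I would defer that to the running-time discussion that follows this lemma.
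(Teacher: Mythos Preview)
Your geometric-grid rounding argument (round the optimal shift $k^\star$ up to the next power of $e^{1/\log n}$, paying a factor $e^{1/\log n}$ per level) is exactly the paper's argument, and your treatment of the truncation at $3n/\beta$ is close enough. The real gap is in the accumulation step.

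You write $h=\log_b n=O(\log n)$, conclude that the compounded factor is $e^{h/\log n}=e^{O(1)}$, and then propose to \emph{change the algorithm} (refine the grid spacing to $e^{1/(\log n\cdot h)}$) to salvage $1+o(1)$. This does not prove the lemma as stated, since $\delta'_{v'}$ is defined with the specific spacing $e^{1/\log n}$. Moreover, your side claim that the refined grid ``only changes the number of grid points by a constant factor'' is false in the regime you are worried about: if $h=\Theta(\log n)$, the number of grid points on $[1,3n/\beta]$ grows by a factor $\Theta(h)=\Theta(\log n)$, not $O(1)$.

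The fix is simpler and requires no modification. Throughout this section (as already assumed in Lemma~\ref{lem:correctness}) we have $b=\omega(1)$, hence $h=\log_b n=o(\log n)$. Then the per-level factor $e^{1/\log n}$ accumulates over $h$ levels to $e^{h/\log n}=e^{o(1)}=1+o(1)$ directly. That single observation replaces your entire second paragraph; the approximator bookkeeping you invoke is correct but unnecessary here, since the paper just multiplies the $h$ per-level factors.
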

\begin{proof}
First we note that it is sufficient to consider only $k\in
[-3n/\beta,3n/\beta]$, since, if the algorithm uses some $k$ with
$|k|>3n/\beta$, then the resulting output is guaranteed to be
$>3n/\beta$. Also, the estimate may only increase if one restricts the
set of possible $k$'s.

Second, if we consider $k$'s that are integer powers of $e^{1/\log
  n}$, we increase the estimate by only a factor $e^{1/\log n}$. 
Over
$h=O(\log_b n)$ levels, this factor accumulates to only $e^{h/\log
  n}\le 1+o(1)$.
\end{proof}

Finally, we mention that computing all $\delta'_{v'}$ may be performed
in $O(\log^2 n)$ time after we perform the following (standard)
precomputation on the values $\tau(v',z')$ for $z'\in[n]$ and $v'\in C_{i+1}$. 
For each dyadic interval $I$, compute
$\min_{z\in I} \tau(v,z)$. Then, for each (not necessarily dyadic)
interval $I'\subset[n]$, computing $\min_{z'\in I'}\tau(v',z')$ may be
done in $O(\log n)$ time. Hence, since we consider only $O(\log
n)$ values of $k$, we obtain $O(\log^2 n)$ time per computation of
$\delta'_{v'}$.

Total running time becomes $O(h Q_b\cdot n\cdot \log^2
n)=n\cdot (\log n)^{O(\log_b n)}$.

A more technical issue that we swept under the carpet is that
distribution $\W$ defined in Lemma~\ref{lem:nonUniformSampling} is a
continuous distribution on $[1,n^3]$. However this is not an issue since a
$n^{-\Omega(1)}$ discretization suffices to obtain the same result,
with only $O(\log n)$ loss in time complexity.

\section{Query Complexity Lower Bound}\label{sec:full_lb}

We now give a full proof of our lower bound, Theorem \ref{thm:LB}. After some preliminaries, this section contains three rather technical parts: tools for analyzing indistinguishability, tools for analyzing edit distance behavior, and a finally a part where we put together all elements of the proof. 
The precise and most general forms of our lower bound appear in that final part
as Theorem~\ref{thm:main_lb} and Theorem~\ref{thm:lbMorePrecise}.

\subsection{Preliminaries}

We assume throughout that $|\Sigma|\ge 2$.
Let $x$ and $y$ be two strings. 
Define $\edd(x,y)$ to be the minimum number of character
insertions and deletions needed to transform $x$ into $y$.
Character substitution are not allowed, in contrast to $\ed(x,y)$, 
but a substitution can be simulated by a deletion followed by an insertion, 
and thus $\ed(x,y) \le \edd(x,y) \le 2\ed(x,y)$.
Observe that 
\begin{equation}  \label{eq:edLCS}
 \edd(x,y) = |x|+|y| - 2\lcs(x,y),
\end{equation}
where $\lcs(x,y)$ is the length of the longest common subsequence 
of $x$ and $y$.

\paragraph{Alignments.}
For two strings $x,y$ of length $n$,
an {\em alignment} is a function $A:[n]\to[n]\cup \{\bot\}$
that is monotonically increasing on $A^{-1}([n])$
and satisfies $x[i]=y[A(i)]$ for all $i\in A^{-1}([n])$.
Observe that an alignment between $x$ and $y$ corresponds exactly
to a common subsequence to $x$ and $y$.

\paragraph{Projection.}
For a string $x\in \Sigma^n$ and $Q \subseteq [n]$, 
we write $x|_Q$ for the string that is the projection of $x$ 
on the coordinates in $Q$. 
Clearly, $x|_Q\in \Sigma^{|Q|}$.
Similarly, if $\D$ is a probability distribution over strings in $\Sigma^n$,
we write $\D|_Q$ for the distribution that is the projection of $\D$ 
on the coordinates in $Q$. 
Clearly, $\D|_Q$ is a distribution over strings in $\Sigma^{|Q|}$. 

\paragraph{Substitution Product.}
Suppose that we have a ``mother'' string $x\in\Sigma^n$ and 
a mapping $B:\Sigma\to (\Sigma')^{n'}$ of the original alphabet 
into strings of length $n'$ over a new alphabet $\Sigma'$.
Define the \emph{substitution product} of $x$ and $B$,
denoted $x\sprod B$, to be the concatenation of $B(x_1),\cdots,B(x_n)$.
Letting $B_a=B(a)$ for each $a\in\Sigma$ 
(i.e., $B$ defines a collection of $|\Sigma|$ strings), we have
$$ x\sprod B \eqdef B_{x_1}B_{x_2}\cdots B_{x_n} \in (\Sigma')^{nn'}.$$
Similarly, for each $a \in \Sigma$, let $\D_a$ be a probability 
distribution over strings in $(\Sigma')^{n'}$.
The \emph{substitution product} of $x$ and $\D\eqdef(\D_a)_{a\in\Sigma}$, denoted $x\sprod\D$,
is defined as the probability distribution over strings in $(\Sigma')^{nn'}$ 
produced by replacing every symbol $x_i$, $1 \le i \le n$, in $x$ by an independent 
sample $B_i$ from $\D_{x_i}$.

Finally, let $\E$ be a ``mother'' probability distribution over strings
in $\Sigma^n$, and for each $a\in\Sigma$, let $\D_a$ be a probability 
distribution over strings in $(\Sigma')^{n'}$.
The \emph{substitution product} of $\E$ 
and $\D\eqdef(\D_a)_{a\in\Sigma}$, denoted $\E\sprod\D$,
is defined as the probability distribution over strings in $(\Sigma')^{nn'}$ 
produced as follows: first sample a string $x\sim \E$, 
then independently for each $i\in[n]$ sample $B_i\sim\D_{x_i}$, 
and report the concatenation $B_1B_2\ldots B_n$.

\paragraph{Shift.}
For $x \in \Sigma^n$ and integer $r$, 
let $S^r(x)$ denote a cyclic shift of $x$ (i.e.\ rotating $x$) to the left
by $r$ positions. Clearly, $S^r(x)\in\Sigma^n$.
Similarly, let $\S_s(x)$ the distribution over strings in $\Sigma^n$
produced by rotating $x$ by a random offset in $[s]$,
i.e.\ choose $r\in[s]$ uniformly at random and take $S^r(x)$.

For integers $i,j$, define $i+_n j$ to be the unique $z\in [n]$
such that $z=i+j \pmod n$.
For a set $Q$ of integers, let $Q+_n j=\{i+_n j:\ i\in Q\}$.

\begin{fact}
Let $x\in \Sigma^n$ and $Q\subset[n]$.
For every integer $r$, we have $S^r(x)|_Q = x|_{Q+_n r}$.
Thus, for every integer $s$, the probability distribution $\S_s(x)|_Q$ 
is identical to $x|_{Q+_n r}$ for a random $r\in[s]$.
\end{fact}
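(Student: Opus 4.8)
The plan is a straightforward unwinding of the definitions; I do not expect a genuine obstacle, only a small amount of care about how the coordinates of a projection are indexed.

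First I would record the effect of the cyclic left-shift on each individual coordinate. Unpacking the definition of $S^r$ (rotate $x$ to the left by $r$ positions) together with the notation $i+_n j$ fixed above, one has, for every $i\in[n]$, that the $i$-th symbol of $S^r(x)$ is the symbol sitting $r$ places to the right of position $i$ read cyclically, i.e.\ $S^r(x)[i]=x[i+_n r]$. This single identity is really the whole content of the fact.

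Next I would restrict to $Q$. For each $q\in Q$, the $q$-th coordinate of $S^r(x)$ equals $x[q+_n r]$ by the identity above. Since $i\mapsto i+_n r$ is a bijection of $[n]$, the image $\{q+_n r:\ q\in Q\}$ is exactly $Q+_n r$ and has the same size $|Q|$; hence the tuple of symbols of $S^r(x)$ at the positions of $Q$ is the same as the tuple of symbols of $x$ at the positions of $Q+_n r$, under the correspondence $q\leftrightarrow q+_n r$. This yields $S^r(x)|_Q=x|_{Q+_n r}$, where (this is the one point to be careful about, and it is the convention under which the fact is later used) the coordinates of the right-hand projection are listed in the order transported from $Q$ by the shift.

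Finally I would obtain the distributional statement by pushing this pointwise identity through the definition of $\S_s$. By definition $\S_s(x)$ is the law of $S^R(x)$ for $R$ uniform on $[s]$; since $z\mapsto z|_Q$ is a fixed deterministic map, $\S_s(x)|_Q$ is the law of $S^R(x)|_Q$, and substituting $S^R(x)|_Q=x|_{Q+_n R}$ shows this equals the law of $x|_{Q+_n R}$ for a uniformly random $R\in[s]$, exactly as claimed. In short, the mathematical step is the single identity $S^r(x)[i]=x[i+_n r]$, and the only thing that deserves a second look is the indexing/ordering bookkeeping in the projection.
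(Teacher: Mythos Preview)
Your proposal is correct; the single identity $S^r(x)[i]=x[i+_n r]$ is indeed the entire content, and your handling of the projection and the push-forward to the distribution $\S_s(x)$ is fine. The paper itself states this as a Fact without proof, so there is nothing to compare against --- your unwinding of the definitions is exactly what is intended.
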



\subsection{Tools for Analyzing Indistinguishability}\label{sec:tools_distinguish}

In this section, we introduce tools for analyzing indistinguishability of distributions we construct.
We introduce a notion of uniform similarity, show what it implies for query complexity, give quantitative bounds on it for random cyclic shifts of random strings, and show how it composes under the substitution product.

\subsubsection{Similarity of Distributions}

We first define an auxiliary notion of similarity. Informally, a set of distributions on the same set are similar if the probability of every element in their support is the same up to a small multiplicative factor.

\begin{definition}
Let $\mathcal D_1$, \ldots, $\mathcal D_k$ be probability distributions on a finite set $\Omega$.
Let $p_i:\Omega \to [0,1]$, $1 \le i \le k$, be the probability mass function for $\mathcal D_i$.
We say that the distributions are \emph{$\alpha$-similar} if for every $\omega \in \Omega$,
$$ (1-\alpha) \cdot \max_{i=1,\ldots,k}p_i(\omega) \le \min_{i=1,\ldots,k}p_i(\omega).$$
\end{definition}

We now define uniform similarity for distributions on strings. Uniform similarity captures how the similarity between distributions on strings changes as a function of the number of queries.

\begin{definition}
Let $\mathcal D_1$, \ldots, $\mathcal D_k$ be probability distributions on $\Sigma^n$.
We say that they are \emph{uniformly $\alpha$-similar} if for every subset $Q$ of $[n]$,
the distributions $\mathcal D_1|_Q$, \ldots, $\mathcal D_k|_Q$ are $\alpha|Q|$-similar.
\end{definition}

Finally, we show that if two distributions on strings are uniformly similar, then an algorithm distinguishing strings drawn from them has to make many queries.

\begin{lemma}\label{lemma:SimilarityAlgorithms}
Let $\mathcal D_0$ and $\mathcal D_1$ be uniformly $\mu$-similar distributions on $\Sigma^n$. Let $\mathcal A$ be a randomized algorithm that makes $q$ (adaptive) queries to symbols of a string selected according to either $\mathcal D_0$ or $\mathcal D_1$, and outputs either {\tt 0} or {\tt 1}. Let $p_j$, for $j\in\zo$, be the probability that $\mathcal A$ outputs {\tt j} when the input is selected according to $\mathcal D_j$. 
Then $$\min \{p_0,p_1\} \le \frac{1 + \mu q}{2}.$$
\end{lemma}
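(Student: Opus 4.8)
The plan is to fix the algorithm's internal randomness and analyze the resulting deterministic decision tree, then average over the randomness at the end. Concretely, condition on a choice of random bits for $\mathcal{A}$; the conditional algorithm is a deterministic adaptive procedure making at most $q$ queries, so it is described by a decision tree $T$ of depth $\le q$ whose internal nodes are labeled by query positions, whose edges are labeled by symbols of $\Sigma$, and whose leaves are labeled {\tt 0} or {\tt 1}. Each leaf $\ell$ of $T$ corresponds to a partial assignment: a set $Q_\ell \subseteq [n]$ of queried positions (with $|Q_\ell| \le q$) together with a string $w_\ell \in \Sigma^{|Q_\ell|}$ that the algorithm must have seen on those coordinates to reach $\ell$. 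The probability that the (conditioned) algorithm reaches leaf $\ell$ when the input is drawn from $\mathcal{D}_j$ is exactly $\Pr_{x \sim \mathcal{D}_j}[x|_{Q_\ell} = w_\ell] = p^{(j)}_{\mathcal{D}_j|_{Q_\ell}}(w_\ell)$, the probability mass that $\mathcal{D}_j|_{Q_\ell}$ assigns to the string $w_\ell$.

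Next I would apply uniform $\mu$-similarity: since $\mathcal{D}_0$ and $\mathcal{D}_1$ are uniformly $\mu$-similar, the projected distributions $\mathcal{D}_0|_{Q_\ell}$ and $\mathcal{D}_1|_{Q_\ell}$ are $\mu|Q_\ell|$-similar, hence $\mu q$-similar (as $|Q_\ell| \le q$). Therefore, for each leaf $\ell$,
\[
  \bigl|\Pr_{x\sim\mathcal D_0}[\text{reach }\ell] - \Pr_{x\sim\mathcal D_1}[\text{reach }\ell]\bigr| \le \mu q \cdot \max_j \Pr_{x\sim\mathcal D_j}[\text{reach }\ell].
\]
Let $L_1$ be the set of leaves labeled {\tt 1}. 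Then $\Pr_{x\sim\mathcal D_1}[\mathcal A=\texttt{1}] = \sum_{\ell\in L_1}\Pr_{x\sim\mathcal D_1}[\text{reach }\ell]$, and similarly for $\mathcal D_0$. Summing the per-leaf bound over $\ell \in L_1$ and using $\max_j(\cdot) \le \Pr_{x\sim\mathcal D_0}[\cdot] + \Pr_{x\sim\mathcal D_1}[\cdot]$ together with the fact that the reach-probabilities sum to $1$ under each distribution, I get (for the conditioned algorithm)
\[
  \Pr_{x\sim\mathcal D_1}[\mathcal A=\texttt{1}] \le \Pr_{x\sim\mathcal D_0}[\mathcal A=\texttt{1}] + \mu q = \bigl(1 - \Pr_{x\sim\mathcal D_0}[\mathcal A=\texttt{0}]\bigr) + \mu q.
\]
Hence $\Pr_{x\sim\mathcal D_0}[\mathcal A=\texttt{0}] + \Pr_{x\sim\mathcal D_1}[\mathcal A=\texttt{1}] \le 1 + \mu q$ for every fixing of the random bits, so the same holds after averaging over the randomness: $p_0 + p_1 \le 1 + \mu q$. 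Since $\min\{p_0,p_1\} \le \tfrac{1}{2}(p_0+p_1)$, the claimed bound $\min\{p_0,p_1\} \le \tfrac{1+\mu q}{2}$ follows.

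The only genuinely delicate point is making sure the leaf-level accounting is airtight: the events "reach leaf $\ell$" partition the probability space under each $\mathcal{D}_j$, the reach-probability of $\ell$ depends on the input only through $x|_{Q_\ell}$ (so it is literally a point mass of a projected distribution, which is what the uniform-similarity definition talks about), and one must be careful that summing the slack term $\mu q \cdot \max_j(\cdot)$ over only the {\tt 1}-leaves still yields a bound of $\mu q$ — this works because $\sum_{\ell} \max_j \Pr_{\mathcal{D}_j}[\text{reach }\ell] \le \sum_\ell (\Pr_{\mathcal D_0}[\cdot]+\Pr_{\mathcal D_1}[\cdot]) = 2$, but one actually wants the tighter route of comparing $\Pr_{\mathcal D_1}[\mathcal A=\texttt 1]$ directly to $\Pr_{\mathcal D_0}[\mathcal A=\texttt 1]$ leaf by leaf, giving slack $\mu q \sum_{\ell \in L_1}\max_j(\cdot) \le \mu q \cdot 1$ after noting $\max_j \le \mathcal D_0 + \mathcal D_1$ is too lossy and instead one should use that on {\tt 1}-leaves the relevant comparison is one-directional. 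I would handle this by writing $\Pr_{\mathcal D_1}[\mathcal A=\texttt 1] - \Pr_{\mathcal D_0}[\mathcal A=\texttt 1] \le \sum_{\ell\in L_1}(1-(1-\mu q))\Pr_{\mathcal D_1}[\text{reach }\ell] \le \mu q$, using $\alpha$-similarity in the form $(1-\alpha)\Pr_{\mathcal D_1}[\cdot]\le\Pr_{\mathcal D_0}[\cdot]$ (after possibly swapping which of the two is the max), which cleanly gives the bound with slack exactly $\mu q$.
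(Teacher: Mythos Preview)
Your proposal is correct and follows essentially the same approach as the paper: fix the random coins to obtain a depth-$q$ decision tree, use uniform $\mu$-similarity to bound each leaf's reach probability under one distribution by $(1-\mu q)$ times its reach probability under the other, sum over the leaves with a fixed output label to get $p_0+p_1\le 1+\mu q$, and average over the coins. Your final paragraph is a bit circuitous, but the resolution you land on---the one-directional bound $(1-\mu q)\Pr_{\mathcal D_1}[\text{reach }\ell]\le\Pr_{\mathcal D_0}[\text{reach }\ell]$ summed over the \texttt{1}-leaves---is exactly what the paper does (with the roles of \texttt{0} and \texttt{1} swapped), and no ``swapping of which is the max'' is needed since $\alpha$-similarity already gives the inequality in both directions.
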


\begin{proof}
Once the random bits of $\mathcal A$ are fixed, $\mathcal A$ can be seen as a decision tree with depth $q$ the following properties. Every internal node corresponds to a query to a specific position in the input string.
Every internal node has $|\Sigma|$ children, and the $|\Sigma|$ edges outgoing to the children are labelled with distinct symbols from $\Sigma$. Each leaf is labelled with either {\tt 0} or {\tt 1}; this is the algorithm's output, i.e. the computation ends up in a leaf if and only if the sequence of queries on the path from the root to the leaf gives the sequence described by the edge labels on the path. 

Fix for now $\mathcal A$'s random bits. Let $t$ be the probability that $\mathcal A$ outputs {\tt 0} when the input is chosen from $\mathcal D_0$, and let $t'$ be defined similarly for $\mathcal D_1$. We now show an upper bound on $t-t'$. $t$ is the probability that the computation ends up in a leaf $v$ labelled {\tt 0} for an input chosen according to $\mathcal D_0$. Consider a specific leaf $v$ labelled with {\tt 0}. The probability of ending up in the leaf equals the probability of obtaining a specific sequence of symbols for a specific sequence of at most $q$ queries. Let $t_v$ be this probability when the input is selected according to $\mathcal D_0$. The same probability for $\mathcal D_1$ must be at least $(1-q\mu)t_v$, due to the uniform $\mu$-similarity of the distributions. By summing over all leaves $v$ labelled with {\tt 0}, we have $t' \ge (1-\mu q)t$, and therefore, $t-t' \le q\mu \cdot t \le q\mu$. 

Note that $p_0$ is the expectation of $t$ over the choice of $\mathcal A$'s random bits. Analogously, $1-p_1$ is the expectation of $t'$. Since $t - t'$ is always at most $\mu q$, we have $p_0 - (1-p_1) \le \mu q$. This implies that $p_0 + p_1 \le 1 + \mu q$, and $\min\{p_0,p_1\} \le \frac{1+\mu q}{2}$.
\end{proof}

\subsubsection{Random Shifts}

In this section, we give quantitative bounds on uniform similarity between distributions created by random cyclic shifts of random strings.

Making a query into a cyclic shift of a string is equivalent to
querying the original string in a position that is shifted,
and thus, it is important to understanding what happens to 
a fixed set of $q$ queries that undergoes different shifts.
Our first lemma shows that a sufficiently large set of shifts of $q$ queries 
can be partitioned into at most $q^2$ large sets,
such that no two shifts in the same set intersect 
(in the sense that they query the same position).

\begin{lemma}\label{lemma:coloring}
Let $Q$ be a subset of $[n]$ of size $q$,
and let $Q_i\eqdef Q+_n i$ be its shift by $i$ modulo $n$.
Every $\mathcal I \subset [n]$ of size $t \ge 16q^4\ln q$
admits a $q^2$-coloring $C:\mathcal I \to [q^2]$ 
with the following two properties:
\begin{itemize}
\compactify 
 \item For all $i \ne j$ with $Q_i \cap Q_j \ne \emptyset$, 
we have $C(i) \ne C(j)$.
 \item For all $i \in [n]$, we have $|C^{-1}(i)| \ge  n /  (2q^4)$.
\end{itemize}
\end{lemma}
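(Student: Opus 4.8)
The plan is to translate the statement into a graph-colouring problem. Define the \emph{conflict graph} $G$ with vertex set $\mathcal I$, joining $i\neq j$ by an edge exactly when $Q_i\cap Q_j\neq\emptyset$. The first thing to observe is that $Q_i\cap Q_j\neq\emptyset$ iff there exist $a,b\in Q$ with $a+_n i=b+_n j$, i.e.\ iff $(i-j)\bmod n$ belongs to the difference set $\Delta Q\eqdef\{(a-b)\bmod n:a,b\in Q\}$; since the only pairs giving $0$ are the $q$ diagonal ones $(a,a)$, we have $|\Delta Q\setminus\{0\}|\le q^2-q$, and since for a fixed $i$ each value of $(i-j)\bmod n$ determines $j$, the graph $G$ has maximum degree at most $q^2-q<q^2$. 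With this, the first required property of $C$ is exactly ``$C$ is a proper colouring of $G$'', and the existence of a proper colouring with at most $q^2$ colours is immediate from the degree bound (greedy/degeneracy). So the entire content of the lemma is the \emph{balance} condition (the second property), and that is where the work lies.

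To get balance, I would start from any proper $q^2$-colouring and \emph{rebalance it by exchanges}, tracking the potential $\Phi(C)=\sum_{c\in[q^2]}|C^{-1}(c)|^2$. The one real obstacle is that one is \emph{not} free to move a vertex from a heavy colour class to a light one, since that could create a monochromatic edge. The saving observation is a counting bound: the number of vertices of a class $C^{-1}(a)$ having a neighbour in a class $C^{-1}(b)$ is at most $(q^2-q)\,|C^{-1}(b)|$, because each vertex of $C^{-1}(b)$ has degree $\le q^2-q$. Hence whenever $|C^{-1}(a)|>(q^2-q)\,|C^{-1}(b)|$ there is a vertex $v\in C^{-1}(a)$ with no neighbour in $C^{-1}(b)$, and recolouring $v$ with colour $b$ keeps the colouring proper; moreover such an inequality forces $|C^{-1}(a)|\ge|C^{-1}(b)|+2$ (the corner cases $|C^{-1}(b)|\in\{0,1\}$ handled directly, using $t\ge 16q^4\ln q$ to exclude an empty class coexisting with only singleton classes), so the move strictly decreases $\Phi$. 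As $\Phi$ is a bounded non-negative integer, this rebalancing terminates.

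It then remains to read off the bound from a terminal colouring. At termination no pair of colours $a,b$ satisfies $|C^{-1}(a)|>(q^2-q)|C^{-1}(b)|$; taking $b$ to be a smallest class, of size $\beta$, we get $|C^{-1}(a)|\le(q^2-q)\beta$ for every colour $a$, and $\beta\ge1$ (an empty class together with this inequality would force every class to be a singleton, contradicting $|\mathcal I|=t\ge16q^4\ln q$). Summing over the at most $q^2$ colours gives $t\le q^2(q^2-q)\beta<q^4\beta$, so every colour class has size at least $t/q^4$, which by the hypothesis is at least $16\ln q$ — precisely the size needed so that the colour classes can serve as blocks of independent shifts in the Chernoff argument of Lemma~\ref{lem:RotationSimilarity} — and in particular at least $\tfrac{n}{2q^4}$ whenever $t\ge n/2$. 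Unwinding the graph language (a balanced proper colouring of $G$ is exactly a $C:\mathcal I\to[q^2]$ with the two stated properties) completes the argument. The main difficulty, as indicated, is not the existence of the colouring but carrying out the exchange-based rebalancing while respecting properness; the factor $q^4$ in the final bound is the price paid, one $q^2$ for the number of colours and another $q^2$ because properness only lets us equalise class sizes up to a factor $q^2-q$.
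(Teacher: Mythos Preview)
Your argument is correct and takes a genuinely different route from the paper's. The paper proves the lemma by the probabilistic method: it processes the elements of $\mathcal I$ in order, assigning each a uniformly random colour among those not yet used by any neighbour, and then observes that each fixed colour $c$ is \emph{available} at least $t/q^2$ times (each selection of $c$ excludes it for at most $q(q-1)$ future vertices) and is picked with probability at least $1/q^2$ each time it is available; a Chernoff bound over these $\ge t/q^2$ trials shows $|C^{-1}(c)|\ge t/(2q^4)$ with probability $>1-1/q^2$, and a union bound over the $q^2$ colours finishes. Your approach instead starts from an arbitrary proper $q^2$-colouring (guaranteed by the degree bound $\le q^2-q$) and rebalances it deterministically via the potential $\Phi$, exploiting the same degree bound to guarantee that whenever classes are too unbalanced a proper recolouring step is available.

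Both are short, and each has its virtues: the paper's argument is one paragraph and shows directly that a \emph{random} coloring works, which fits the probabilistic spirit of the surrounding section; your argument is constructive and in fact yields the slightly stronger bound $|C^{-1}(c)|>t/q^4$ rather than $t/(2q^4)$. You also correctly flag that the ``$n/(2q^4)$'' in the statement should read ``$t/(2q^4)$'' --- the paper's own proof establishes $t/(2q^4)$, and the lemma is applied with $\mathcal I=[s]$, $t=s$, where the bound $s/(2q^4)$ is exactly what is used downstream.
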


\begin{proof}
Let $x \in [n]$. There are exactly $q$ different indices $i$ such that $x \in Q_i$. For every $Q_i$ such that $x \in Q_i$, $x$ is an image of a different $y \in Q$ after a cyclic shift. Therefore, each $Q_i$ can intersect with at most $q(q-1)$ other sets $Q_j$.

Consider the following probabilistic construction of $C$. For
consecutive $i \in \mathcal I$, we set $C(i)$ to be a random color in
$[q^2]$ among those that were not yet assigned to sets $Q_j$ that
intersect $Q_i$. Each color $c \in [q^2]$ is considered at least
$t/q^2$ times: each time $c$ is selected it makes $c$ not be considered for at most $q(q-1)$ other $i \in \mathcal I$. Each time $c$ is considered, it is selected with probability at least $1/q^2$. 
By the Chernoff bound, the probability that a given color is selected less than $t/(2q^4)$ times is less than
$$\exp\left(-\frac{t}{q^4} \cdot \frac{1}{2^2} \cdot \frac{1}{2}\right) \le \frac{1}{q^2}.$$
By the union bound, the probability of selecting the required coloring is greater than zero, so it exists.
\end{proof}

\begin{fact}\label{fact:binomial_sum}
Let $n$ and $k$ be integers such that $1 \le k \le n$. Then
$\sum_{i=1}^{k}\binom{n}{i} \le n^k$.
\end{fact}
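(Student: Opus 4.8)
The plan is to give a short combinatorial argument via an injection. First I would observe that the left-hand side $\sum_{i=1}^{k}\binom{n}{i}$ is exactly the number of nonempty subsets $S\subseteq[n]$ with $|S|\le k$. So the statement reduces to exhibiting an injection from this family of subsets into $[n]^k$, the set of length-$k$ tuples over $[n]$, which has cardinality $n^k$.

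For the injection, given a nonempty subset $S=\{s_1<s_2<\cdots<s_a\}$ with $a\le k$, I would map it to the tuple obtained by listing its elements in increasing order and then padding to length $k$ by repeating the largest element, i.e.\ $\phi(S)=(s_1,\ldots,s_a,s_a,\ldots,s_a)\in[n]^k$. This is well-defined since $a\le k$ and $S$ is nonempty. Injectivity is immediate: the set of distinct coordinate values occurring in $\phi(S)$ is precisely $S$, so $S\ne T$ forces $\phi(S)\ne\phi(T)$. Hence $\sum_{i=1}^{k}\binom{n}{i}=\bigl|\{S\subseteq[n]:1\le|S|\le k\}\bigr|\le\bigl|[n]^k\bigr|=n^k$.

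I do not expect any real obstacle here; the only points needing a little care are that the sum starts at $i=1$ (matching the ``nonempty'' restriction, which is also what makes ``largest element'' meaningful) and that the padding step genuinely preserves recoverability of $S$. An equally routine alternative is induction on $k$: the base case $k=1$ is $\binom{n}{1}=n$, and for $k\ge2$ one combines the inductive bound $\sum_{i=1}^{k-1}\binom{n}{i}\le n^{k-1}$ with $\binom{n}{k}\le n^k/k!\le n^k/2$ and $n^{k-1}\le n^k/2$ (valid since $n\ge k\ge 2$) to get $\sum_{i=1}^{k}\binom{n}{i}\le n^{k-1}+n^k/2\le n^k$.
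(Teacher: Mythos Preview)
Your proof is correct. Both the injection argument and the inductive alternative are valid and cleanly written; in the injection, the key observation that the set of distinct coordinate values of $\phi(S)$ equals $S$ is exactly what guarantees injectivity, and in the induction the two halves $n^{k-1}\le n^k/2$ and $\binom{n}{k}\le n^k/2$ combine as you say.

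There is nothing to compare against: the paper states this as a \texttt{Fact} and does not supply a proof. Your argument would serve perfectly well if one were wanted.
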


The following lemma shows that random shifts of random strings are likely to result in uniformly similar distributions.

\begin{lemma}\label{lem:RotationSimilarity}
Let $n \in \mathbb Z_+$ be greater than 1. Let $k \le n$ be a positive integer. Let $x_i$, $1 \le i \le k$, be uniformly and independently selected strings in $\Sigma^n$, where $2 \le |\Sigma| \le n$.
With probability $2/3$ over the selection of $x_i$'s, 
the distributions $\S_s(x_1)$, \ldots, $\S_s(x_k)$ are uniformly $\tfrac1A$-similar,
for $A \eqdef \max\left\{\log_{|\Sigma|} \sqrt[6]{\frac{s}{400 \ln n}},1\right\}$.
\end{lemma}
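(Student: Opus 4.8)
The plan is to fix an arbitrary query set $Q\subseteq[n]$ of size $q\eqdef|Q|$ and an arbitrary target pattern $w\in\Sigma^q$, and to show that with high probability over the choice of the base strings $x_1,\dots,x_k$, the quantity $\Pr_{r\in[s]}[x_i|_{Q+_n r}=w]$ is (almost) the same for all $i$, up to a multiplicative factor $1-q/A$; a union bound over $Q$ and $w$ then yields uniform $\tfrac1A$-similarity. Concretely, for a single base string $x$ and a single pattern $w$, let $N_i(w)\eqdef |\{r\in[s]: x|_{Q+_n r}=w\}|$, so that the probability mass function of $\S_s(x)|_Q$ at $w$ is $N_i(w)/s$. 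I would show that with high probability, for all $i$ and all $w$ that are ``heavy enough'', $N_i(w)$ is concentrated around $s/|\Sigma|^q$; the light patterns are handled separately, since $\alpha$-similarity is only a constraint when $\max_i p_i(w)$ is non-negligible, and one checks that if some $p_i(w)$ is large then all the $N_j(w)$ must be large too.

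The main obstacle — and the reason Lemma~\ref{lemma:coloring} is invoked — is that the $s$ shifted copies $x|_{Q+_n r}$ are \emph{not} independent: two shifts $r,r'$ with $(Q+_n r)\cap(Q+_n r')\ne\emptyset$ share symbols of $x$, so $N_i(w)$ is a sum of dependent indicators and a naive Chernoff bound does not apply. The fix is exactly Lemma~\ref{lemma:coloring}: taking $\mathcal I=[s]$ (which has size $s\ge 16q^4\ln q$ in the regime where $A>1$, after we restrict attention to $q\le A$, since for $q>A$ the similarity constraint $q/A\ge1$ is vacuous), we obtain a partition of the shifts into at most $q^2$ color classes, each of size $\ge s/(2q^4)$, such that within a color class the shifts query pairwise-disjoint positions of $x$. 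Hence within each color class the events $\{x|_{Q+_n r}=w\}$ \emph{are} mutually independent, each of probability $|\Sigma|^{-q}$, and the Chernoff bound applies to the per-class count. Summing the per-class estimates over the $\le q^2$ classes and taking a union bound over all colors gives concentration of $N_i(w)$.

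The quantitative bookkeeping goes as follows. Within one color class of size $m\ge s/(2q^4)$, the expected count is $m/|\Sigma|^q$; by Chernoff this is within a $(1\pm\eta)$ factor except with probability $\exp(-\Omega(\eta^2 m/|\Sigma|^q))$. For this exponent to beat the union bound over the $q^2$ colors, the $k$ strings, the $\le n^q$ choices of $Q$ (using Fact~\ref{fact:binomial_sum}: $\sum_{i\le q}\binom{n}{i}\le n^q$), and the $|\Sigma|^q\le n^q$ choices of $w$, we need $m/|\Sigma|^q \gtrsim q\log n$, i.e. roughly $s/(2q^4)\gtrsim |\Sigma|^q\cdot q\log n$. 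Solving for $q$: this holds as long as $|\Sigma|^q \le s/(C q^4\log n)$ for a suitable constant $C$, i.e. $q\le \log_{|\Sigma|}\!\big(s/(C q^4\log n)\big)$, and absorbing the polynomial $q^4\log n$ factor into the sixth-root slack (using $q\le n$ and $|\Sigma|\le n$) gives precisely $q\le \log_{|\Sigma|}\sqrt[6]{s/(400\ln n)}=A$. For such $q$ we get all $N_i(w)/s \in |\Sigma|^{-q}(1\pm\eta)$ simultaneously, so any two of them agree up to a factor $\frac{1-\eta}{1+\eta}\ge 1-2\eta$, and choosing $\eta$ a small constant multiple of $q/A\le 1$ (say $\eta\le q/(2A)$, which is compatible with the concentration budget because we have root-slack to spare) makes the distributions $\alpha|Q|$-similar with $\alpha=\tfrac1A$. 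For $q>A$ there is nothing to prove. Finally, by choosing the hidden constants so that the total failure probability is $\le 1/3$, we conclude that with probability $\ge 2/3$ over $x_1,\dots,x_k$ the distributions $\S_s(x_1),\dots,\S_s(x_k)$ are uniformly $\tfrac1A$-similar.
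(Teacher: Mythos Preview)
Your proposal is correct and follows essentially the same approach as the paper: restrict to $q<A$ (the case $q\ge A$ being vacuous), invoke Lemma~\ref{lemma:coloring} to partition the $s$ shifts into at most $q^2$ color classes of size $\ge s/(2q^4)$ with pairwise-disjoint query positions, apply Chernoff within each class with deviation parameter $\eta=q/(2A)$, and take a union bound over colors, strings, query sets, and patterns. The paper carries out exactly this computation with explicit constants, showing every $p_{i,Q,\omega}$ lands in $|\Sigma|^{-q}(1\pm q/(2A))$; your remark about handling ``light patterns separately'' is unnecessary (and you drop it yourself later), since the concentration holds uniformly for all $\omega$.
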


\begin{proof}
Let $p_{i,Q,\omega}$ be the probability of selecting a sequence $\omega \in \Sigma^{|Q|}$ from the distribution $S_s(x_i)|_Q$, where $Q \subseteq [n]$ and $1 \le i \le k$. We have to prove that with probability at least $2/3$ over the choice of $x_i$'s, it holds that for every $Q\subseteq Q$ and every $\omega \in \Sigma^{|Q|}$,
$$(1-|Q|/A) \cdot \max_{i=1,\ldots,k} p_{i,Q,\omega} \le \min_{i=1,\ldots,k} p_{i,Q,\omega}.$$

The above inequality always holds when $Q$ is empty or has at least $A$ elements.
Let $Q \subseteq [n]$ be any choice of queries, where $0 < |Q| < A$. By Fact~\ref{fact:binomial_sum}, there are at most $n^A$ such different choices of queries.
Let $q \eqdef |Q|$.
Note that $8q^4 \ln q \le 8 q^5 \le 8A^5 \le 8|\Sigma|^{5A} \le 8 \cdot \frac{s}{400 \ln n} \le s$.
This implies that we can apply Lemma~\ref{lemma:coloring}, which yields the following.
We can partition all $s$ shifts of $Q$ over $x_i$ that contribute to the distribution $S_s(x_i)|_Q$ into $q^2$ sets $\sigma_j$ such that the shifts in each of the sets are disjoint, and each of the sets has size at least $s/(2q^4)$. For each of the sets $\sigma_j$, and for each $\omega \in \Sigma^q$, the probability that fewer than $(1-\frac{q}{2A})|\sigma_j|/|\Sigma|^q$ shifts give $\omega$ is bounded by 
\begin{eqnarray*}
\exp\left(-\frac{1}{2} \cdot \left(\frac{q}{2A}\right)^2 \cdot  \frac{|\sigma_j|}{|\Sigma|^q}\right) &\le& 
\exp\left(-\frac{s}{16q^2A^2|\Sigma|^q}\right)\\
&\le& \exp\left(-\frac{s}{16A^4|\Sigma|^A}\right)\\
&\le& \exp\left(-\frac{s}{16|\Sigma|^{5A}}\right)\\
&\le& \exp\left(-\frac{1}{16} \cdot\sqrt[6]{s \cdot (400 \ln n)^5} \right)\\
&\le& \exp\left(-9.2 \sqrt[6]{s (\ln n)^5} \right),
\end{eqnarray*}
where the first bound follows from the Chernoff bound. Analogously, the probability that more than 
$(1+\frac{q}{2A})|\sigma_j|/|\Sigma|^q$ shifts give $\omega$ is bounded by 

\begin{eqnarray*}
\exp\left(-\frac{1}{4} \cdot \left(\frac{q}{2A}\right)^2 \cdot  \frac{|\sigma_j|}{|\Sigma|^q}\right) &\le& 
\exp\left(-\frac{s}{32q^2A^2|\Sigma|^q}\right)\\
&\le& \exp\left(-\frac{s}{32A^4|\Sigma|^A}\right)\\
&\le& \exp\left(-\frac{s}{32|\Sigma|^{5A}}\right)\\
&\le& \exp\left(-\frac{1}{32} \cdot\sqrt[6]{s \cdot (400 \ln n)^5} \right)\\
&\le& \exp\left(-4.6 \sqrt[6]{s (\ln n)^5} \right),
\end{eqnarray*}
where the first inequality follows from the version of the Chernoff bound that uses the fact that
$\frac{q}{2A} \le \frac{1}{2} \le 2e-1$.

We now apply the union bound to all $x_i$, all choices of $Q\subseteq[n]$ with $|Q| < A$, all corresponding sets $\sigma_j$, and all settings of $\omega\in\Sigma^{|Q|}$ to bound the probability that
$p_{i,Q,\omega}$ does not lie between $|\Sigma|^{-|Q|}\cdot(1-\frac{q}{2A})$
and $|\Sigma|^{-|Q|}\cdot(1+\frac{q}{2A})$. Assuming that $A > 1$ (otherwise, the lemma holds trivially), note first that
\begin{eqnarray*}
n \cdot n^{A} \cdot A^2 \cdot |\Sigma|^A 
& \le & n^{5A}\\
& \le & \exp\left(5A \ln n\right) \\
& \le & \exp\left(5|\Sigma|^A \ln n\right)\\
& \le & \exp\left(5\sqrt[6]{\frac{s(\ln n)^5}{400}} \right)\\
& \le & \exp\left(2.4 \cdot \sqrt[6]{s(\ln n)^5} \right).
\end{eqnarray*}
Our bound is 
$$\exp\left(2.4 \cdot \sqrt[6]{s(\ln n)^5} \right) \cdot \left(\exp\left(-9.2 \cdot\sqrt[6]{s (\ln n)^5} \right) + \exp\left(-4.6 \cdot\sqrt[6]{s (\ln n)^5} \right)\right)$$
$${}\le \exp\left(-6.8 \cdot\sqrt[6]{s (\ln n)^5} \right) + \exp\left(-2.2 \cdot\sqrt[6]{s (\ln n)^5}\right) \le 0.01 + 0.2 \le 1/3.$$
Therefore, all $p_{i,Q,\omega}$ of interest lie in the desired range with probability at least $2/3$.
Then, we know that for any $Q$ of size less than $A$, and any $\omega \in \Sigma^{|Q|}$,
we have 
\begin{eqnarray*}
\left(1-\frac{|Q|}{A} \right) \cdot \max_{i=1,\ldots,k} p_{i,Q,\omega} &\le&  \cdot
\left(1-\frac{|Q|}{A}\right) \cdot\left(1+\frac{|Q|}{2A}\right) \cdot |\Sigma|^{-|Q|}\\
&=& \left(1-\frac{|Q|}{2A}-\frac{|Q|^2}{2A^2}\right) \cdot  |\Sigma|^{-|Q|}\\
&\le& \left(1-\frac{|Q|}{2A}\right) \cdot |\Sigma|^{-|Q|}\\ \\
&\le& \min_{i=1,\ldots,k} p_{i,Q,\omega}.
\end{eqnarray*}
This implies that $\S_s(x_1)$, \ldots, $\S_s(x_k)$ are uniformly $\tfrac1A$-similar with probability at least $2/3$.
\end{proof}

\subsubsection{Amplification of Uniform Similarity via Substitution Product}

One of the key parts of our proof is the following lemma that shows that the substitution product of uniformly similar distributions amplifies uniform similarity. 

\begin{lemma}\label{lemma:SimilarityMultiplication}
Let $\D_a$ for $a \in \Sigma$, be uniformly $\alpha$-similar distributions on $(\Sigma')^{n'}$.
Let $\D \eqdef (\D_a)_{a \in \Sigma}$. Let $\mathcal E_1$, \ldots, $\mathcal E_k$ be uniformly $\beta$-similar probability distributions on $\Sigma^{n}$, for some $\beta \in [0,1]$.
Then the $k$ distributions $(\mathcal E_1 \sprod \mathcal D)$, \ldots, $(\mathcal E_{k} \sprod \mathcal D)$ are uniformly $\alpha\beta$-similar.
\end{lemma}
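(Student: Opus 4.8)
The plan is to unfold the definition of uniform similarity directly. Fix an arbitrary query set $Q\subseteq[nn']$ and an arbitrary outcome $\omega$; it suffices to show that the probability mass $p_\ell(\omega)$ that $(\mathcal E_\ell\sprod\mathcal D)|_Q$ assigns to $\omega$ differs, over $\ell=1,\dots,k$, by at most a factor $1-\alpha\beta|Q|$. Decompose $Q$ according to the $n$ blocks: let $Q_i$ be the queries in block $i$, $q_i\eqdef|Q_i|$, $I\eqdef\{i: q_i\ge1\}$, and let $\omega_i$ be the corresponding part of $\omega$, so $|Q|=\sum_{i\in I}q_i$. For $i\in I$ and $a\in\Sigma$ set $\mu_{i,a}\eqdef\mathcal D_a|_{Q_i}$, a distribution on $(\Sigma')^{q_i}$; by uniform $\alpha$-similarity of $(\mathcal D_a)_{a\in\Sigma}$, the distributions $(\mu_{i,a})_{a\in\Sigma}$ are $\alpha q_i$-similar.

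The device that makes the accounting tight is to split each inner distribution into a part common to all symbols and a symbol-dependent residual. Let $\lambda_i\eqdef\sum_{\omega'}\min_a\mu_{i,a}(\omega')\in[0,1]$, and (with the obvious conventions when $\lambda_i\in\{0,1\}$) $\theta_i\eqdef\tfrac1{\lambda_i}\min_a\mu_{i,a}(\cdot)$ and $\psi_{i,a}\eqdef\tfrac1{1-\lambda_i}(\mu_{i,a}-\lambda_i\theta_i)$, so that $\mu_{i,a}=\lambda_i\theta_i+(1-\lambda_i)\psi_{i,a}$ is a genuine mixture of distributions. From $\alpha q_i$-similarity I extract the two facts I will use: $1-\lambda_i\le\alpha q_i$, and, pointwise for every $a$, $(1-\lambda_i)\psi_{i,a}(\omega_i)=\mu_{i,a}(\omega_i)-\min_c\mu_{i,c}(\omega_i)\le\alpha q_i\,\mu_{i,a}(\omega_i)$, using $\min_c\mu_{i,c}(\omega_i)\ge(1-\alpha q_i)\mu_{i,a}(\omega_i)$. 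Now $(\mathcal E_\ell\sprod\mathcal D)|_Q$ is generated by first drawing $x\sim\mathcal E_\ell$ and then, independently for each $i\in I$, emitting a sample of $\theta_i$ with probability $\lambda_i$ (ignoring $x$) and a sample of $\psi_{i,x_i}$ otherwise. Writing $S\subseteq I$ for the (random, $\ell$-independent) set of blocks that take the residual branch, this yields
$$ p_\ell(\omega)=\sum_{S\subseteq I}\Pr[S]\,d_S\,e_{S,\ell}, $$
where $d_S\eqdef\prod_{i\in I\setminus S}\theta_i(\omega_i)$ does not depend on $\ell$, and $e_{S,\ell}\eqdef\mathbb{E}_{x\sim\mathcal E_\ell}\bigl[\prod_{i\in S}\psi_{i,x_i}(\omega_i)\bigr]$, with $d_S,e_{S,\ell}\ge0$.

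Let $\ell^*$ and $\ell_*$ index the largest and smallest of the $p_\ell(\omega)$. For each $S$, the quantity $\prod_{i\in S}\psi_{i,x_i}(\omega_i)$ depends on $x$ only through $x|_S$, and $(\mathcal E_\ell|_S)_\ell$ are $\beta|S|$-similar, so comparing coordinatewise gives $e_{S,\ell^*}-e_{S,\ell_*}\le\beta|S|\,e_{S,\ell^*}$; hence $p_{\ell^*}(\omega)-p_{\ell_*}(\omega)\le\beta\sum_S|S|\Pr[S]d_Se_{S,\ell^*}=\beta\sum_{j\in I}\bigl(\sum_{S\ni j}\Pr[S]d_Se_{S,\ell^*}\bigr)$. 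The crux is then the per-block inequality $\sum_{S\ni j}\Pr[S]d_Se_{S,\ell^*}\le\alpha q_j\,p_{\ell^*}(\omega)$ for each $j\in I$. I would prove it by pairing every $S\ni j$ with $S\setminus\{j\}$, pulling out the block-$j$ weight, applying the pointwise bound $(1-\lambda_j)\psi_{j,x_j}(\omega_j)\le\alpha q_j\,\mu_{j,x_j}(\omega_j)$ inside the expectation (this is exactly where having separated off the $x_j$-free part $\theta_j$ pays off, since that bound must hold for every value of $x_j$), and then re-expanding $\mu_{j,x_j}=\lambda_j\theta_j+(1-\lambda_j)\psi_{j,x_j}$ and re-summing over blocks — which reassembles precisely $\alpha q_j\sum_{S\subseteq I}\Pr[S]d_Se_{S,\ell^*}=\alpha q_j\,p_{\ell^*}(\omega)$. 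Summing over $j\in I$ and using $\sum_{j}q_j=|Q|$ gives $p_{\ell^*}(\omega)-p_{\ell_*}(\omega)\le\alpha\beta|Q|\,p_{\ell^*}(\omega)$, i.e.\ $p_{\ell_*}(\omega)\ge(1-\alpha\beta|Q|)p_{\ell^*}(\omega)$; since $Q$ and $\omega$ were arbitrary, the distributions $\mathcal E_\ell\sprod\mathcal D$ are uniformly $\alpha\beta$-similar.

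The main obstacle is obtaining a loss-free count: straightforward estimates — bounding a total variation distance between the projected mother distributions, or charging the inner similarity block by block through a union bound — overshoot by extra factors of $|Q|$ or $|I|$ and give only $\alpha\beta|Q|\cdot\mathrm{poly}(|Q|)$. The mixture split $\mu_{i,a}=\lambda_i\theta_i+(1-\lambda_i)\psi_{i,a}$ together with the pointwise inequality $(1-\lambda_i)\psi_{i,a}(\omega_i)\le\alpha q_i\mu_{i,a}(\omega_i)$ is what makes the per-block charges telescope to exactly $\alpha\beta|Q|$. I also need to dispose of the degenerate cases $\lambda_i\in\{0,1\}$, $\alpha q_i\ge1$, or $\beta|S|\ge1$, but in each of these the relevant inequalities hold trivially (the similarity constraints become vacuous), so they cause no real trouble.
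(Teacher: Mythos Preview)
Your proof is correct and follows essentially the same approach as the paper's: both isolate, for each block, a symbol-independent ``common'' part (your $\lambda_i\theta_i$; the paper's threshold $r_i\le\alpha^\star_i$) and a symbol-dependent residual, then show that the residual set has expected weight at most $\alpha q_i$ per block via the same pointwise inequality $\mu_{i,a}(\omega_i)-\min_c\mu_{i,c}(\omega_i)\le\alpha q_i\,\mu_{i,a}(\omega_i)$, which combines with the outer $\beta|S|$-similarity to give exactly $\alpha\beta|Q|$. The only difference is packaging: the paper phrases the decomposition as a randomized process with auxiliary uniforms $r_i$ and a conditional-expectation calculation of $|Q|$, whereas you write it as an explicit mixture $\mu_{i,a}=\lambda_i\theta_i+(1-\lambda_i)\psi_{i,a}$ and an algebraic re-expansion; your ``pair $S\ni j$ with $S\setminus\{j\}$ and re-expand $\mu_{j,x_j}$'' step is exactly the paper's computation of $\Pr[i\in Q\mid\text{output }1]$ unwound.
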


\begin{proof}
Fix $t,t'\in[k]$, let $X$ be a random sequence selected according to 
$\mathcal E_{t} \sprod \mathcal D$, 
and let $Y$ be a random sequence selected according to 
$\mathcal E_{t'} \sprod \mathcal D$. 
Fix a set $S \subseteq [n\cdot n']$ of indices, 
and the corresponding sequence $s$ of $|S|$ symbols from $\Sigma'$. 
To prove the lemma, it suffices to show that
\begin{equation}  \label{eq:SimilarityMultiplication}
\Pr[X|_S = s] \ge (1-\alpha\beta|S|) \cdot \Pr[Y|_S = s],
\end{equation}
since in particular the inequality holds for $t$ that minimizes $\Pr[X|_S = s]$,
and for $t'$ that maximizes $\Pr[Y|_S = s]$.

Recall that each $(\mathcal E_{j} \sprod \mathcal D)$ is generated by first selecting a string $x$ according to $\mathcal E_j$, and then concatenating $n$ blocks, where the $i$-th block is independently selected from $\mathcal D_{x_i}$. For $i \in [n]$ and $b \in \Sigma$, let $p_{i,b}$ be the probability of drawing from $\mathcal D_b$ a sequence that when used as the $i$-th block, matches $s$ on the indices in $S$ (if the block is not queried, set $p_{i,b}=1$).
Let $q_i$ be the number of indices in $S$ that belong to the $i$-th block. Since $\mathcal D_b$ for $b \in \Sigma$ are $\alpha$-similar, for every $i \in [n]$, it holds that $(1-\alpha q_i) \cdot \max_{b\in\Sigma} p_{i,b} \le \min_{b\in\Sigma} p_{i,b}$. For every $i \in [n]$, define $\alpha^\star_i \eqdef \min_{b \in \Sigma}p_{i,b}$ and $\beta^\star_i \eqdef \max_{b \in \Sigma}p_{i,b}$. 
We thus have
\begin{equation}  \label{eq:2}
(1-\alpha q_i) \beta^\star_i \le \alpha^\star_i.
\end{equation}

The following process outputs {\tt 1} with probability $\Pr[Y|_S = s]$. Whenever we say that the process outputs a value, {\tt 0} or {\tt 1}, it also terminates. First, for every block $i \in [n]$, the process independently picks a random real $r_i$ in $[0,1]$. It also independently draws a random sequence $c \in \Sigma^{n}$ according to $\mathcal E_{t'}$. If $r_i > \beta^\star_i$ for at least one $i$, the process outputs {\tt 0}. Otherwise, let $Q=\{i \in [n]:\ r_i > \alpha^\star_i\}$. If $r_i\le p_{i,c_i}$ for all $i \in Q$, the process outputs {\tt 1}. Otherwise, it outputs {\tt 0}. The correspondence between the probability of outputting {\tt 1} and $\Pr[Y|_S = s]$ directly follows from the fact that each of the random variables $r_i$ simulates selecting a sequence that matches $s$ on indices in $S$ with the right success probability, i.e., $p_{i,c_i}$, and the fact that block substitutions are independent. The important difference, which we exploit later, is that not all symbols of $c$ have always impact on whether the above process outputs {\tt 0} or {\tt 1}.

For every $Q\subseteq [n]$, let $p'_Q$ be the probability that the above process selected $Q$. Furthermore, let $p''_{Q,c}$ be the conditional probability of outputting {\tt 1}, given that the process selected a given $Q \subseteq [n]$, and a given $c \in \Sigma^n$. It holds 
$$
\Pr\left[Y|_S = s\right]
=
\sum_{Q\subseteq[n]} p'_Q \cdot \mathbb E_{c \leftarrow \mathcal E_{t'}} \left[p''_{Q,c}\right].
$$
Notice that for two different $c_1, c_2 \in \Sigma^n$, we have 
$p''_{Q,c_1}=p''_{Q,c_1}$ if $c_1|_Q = c_2|_Q$, since this probability only depends on the symbols at indices in $Q$. 
Thus, for $\tilde c \in \Sigma^{|Q|}$ we can define $\tilde p_{Q,\tilde c}$ to be equal to $p''_{Q,c}$ for any $c\in\Sigma$ such that $c|_Q = \tilde c$. We can now write
$$\Pr\left[Y|_S = s\right]
=
\sum_{Q\subseteq[n]} p'_Q \cdot \mathbb E_{\tilde c \leftarrow \mathcal E_{t'}|_Q}
\left[\tilde p_{Q,\tilde c}\right],
$$
and analogously,
$$\Pr\left[X|_S = s\right]
=
\sum_{Q\subseteq[n]} p'_Q \cdot \mathbb E_{\tilde c \leftarrow \mathcal E_{t}|_Q}
\left[\tilde p_{Q,\tilde c}\right].
$$
Due to the uniform $\beta$-similarity of $\mathcal E_{t'}$ and $\mathcal E_{t}$,
we know that for every $Q\subset[n]$, the probability of selecting each $\tilde c \in \Sigma^{|Q|}$ from $\mathcal E_{t}|_Q$ is at least $(1-\beta|Q|)$ times the probability of selecting the same $\tilde c$ from $\mathcal E_{t'}|_Q$. This implies that
$$\mathbb E_{\tilde c \leftarrow \mathcal E_{t}|_Q}
\left[\tilde p_{Q,\tilde c}\right] 
\ge (1-\beta|Q|)\cdot \mathbb E_{\tilde c \leftarrow \mathcal E_{t'}|_Q}
\left[\tilde p_{Q,\tilde c}\right].$$
We obtain
\begin{eqnarray}
\nonumber\Pr\left[Y|_S = s\right] - \Pr\left[X|_S = s\right] &=&
\sum_{Q\subseteq[n]} p'_Q \cdot \left(
\mathbb E_{\tilde c \leftarrow \mathcal E_{t'}|_Q} \left[\tilde p_{Q,\tilde c}\right]
-
\mathbb E_{\tilde c \leftarrow \mathcal E_{t}|_Q} \left[\tilde p_{Q,\tilde c}\right]
\right).
\\
\nonumber&\le & \sum_{Q\subseteq[n]} p'_Q \cdot \beta|Q| \cdot \mathbb E_{\tilde c \leftarrow \mathcal E_{t'}|_Q} \left[\tilde p_{Q,\tilde c}\right]\\
\nonumber&=& \beta \cdot \sum_{Q\subseteq[n]} p'_Q \cdot |Q| \cdot \mathbb E_{c \leftarrow \mathcal E_{t'}} \left[p''_{Q,c}\right]\\
\label{eq:bound_diff}
&=&\beta \cdot 
\mathbb E_{c \leftarrow \mathcal E_{t'}}
\left[\sum_{Q\subseteq[n]} p'_Q \cdot p''_{Q,c} \cdot |Q|\right].
\end{eqnarray}

Fix now any $c \in \Sigma^n$ for which the process outputs {\tt 1} with positive probability. 
The expected size of $Q$ for the fixed $c$, given that the process outputs {\tt 1}, can be written as
$$
  \EX\left[|Q|\ \Big|\Big.\ \mbox{process outputs {\tt 1}}\right]
  = \frac{\sum_{Q\subseteq[n]} p'_Q \cdot p''_{Q,c} \cdot |Q|}{\sum_{Q\subseteq[n]} p'_Q \cdot p''_{Q,c} }$$
The probability that a given $i\in[n]$ belongs to $Q$ for the fixed $c$, given that the process outputs {\tt 1} 
equals $\frac{p_{i,c_i}-\alpha^\star_i}{p_{i,c_i}}$. This follows from the two facts (a) if the process outputs {\tt 1} then $r_i$ is uniformly distributed on $[0,p_{i,c_i}]$; and (b) $i \in Q$ if and only if $r_i \in (\alpha^\star_i,\beta^\star_i]$. We have 
\begin{equation}  \label{eq:3}
\frac{p_{i,c_i}-\alpha^\star_i}{p_{i,c_i}} \le \frac{\beta^\star_i-\alpha^\star_i}{\beta^\star_i}
\le \frac{\alpha q_i \cdot \beta^\star_i}{\beta^\star_i} = \alpha q_i.
\end{equation}
By the linearity of expectation, the expected size of $Q$ in this setting is at most $\sum_{i\in[n]}\alpha q_i = \alpha\cdot|S|$. Therefore, 
\begin{equation}\label{eq:exp_size_Q}
\sum_{Q\subseteq[n]} p'_Q \cdot p''_{Q,c} \cdot |Q| \le 
\alpha\cdot|S| \cdot \sum_{Q\subseteq[n]} p'_Q \cdot p''_{Q,c}.
\end{equation}
Note that the inequality trivially holds also for $c$ for which the process always outputs {\tt 0}; both sides of the inequality equal 0.

By plugging (\ref{eq:exp_size_Q}) into (\ref{eq:bound_diff}), we obtain
\begin{eqnarray*}
\Pr\left[Y|_S = s\right] - \Pr\left[X|_S = s\right] &\le&
\beta \cdot \mathbb E_{c \leftarrow \mathcal E_{t'}}
\left[\alpha\cdot|S| \cdot \sum_{Q\subseteq[n]} p'_Q \cdot p''_{Q,c}\right]\\
&=&
\alpha\beta\cdot|S| \cdot \mathbb E_{c \leftarrow \mathcal E_{t'}}
\left[\sum_{Q\subseteq[n]} p'_Q \cdot p''_{Q,c}\right]\\
&=& \alpha\beta\cdot|S| \cdot \Pr\left[Y|_S = s\right].
\end{eqnarray*}
This proves \eqref{eq:SimilarityMultiplication} 
and completes the proof of the lemma.
\end{proof}

\subsection{Tools for Analyzing Edit Distance}\label{sec:tools_edit_distance}

This section provides tools to analyze how the edit distance
changes under a under substitution product.
We present two separate results with different guarantees, 
one is more useful for a large alphabet, the other for a small alphabet. 
The latter is used in the final step of reduction to binary alphabet.

\subsubsection{Distance between random strings}

The next bound is well-known, see also \cite{CS75,BGNS99,Lueker09}.
We reproduce it here for completeness.

\begin{lemma} \label{lem:RandomStrings}
Let $x,y\in\Sigma^n$ be chosen uniformly at random.
Then 
$$ \Pr\left[\lcs(x,y) \geq 5n/\sqrt{|\Sigma|}\right] 
  \leq e^{-5n/\sqrt{|\Sigma|}}.$$
\end{lemma}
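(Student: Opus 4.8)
The plan is a standard first-moment (union bound) argument over all candidate common subsequences. Set $\ell \eqdef \lceil 5n/\sqrt{|\Sigma|}\,\rceil$. If $\ell > n$ (equivalently $|\Sigma| < 25$), then the event $\lcs(x,y)\ge \ell$ is impossible and the bound holds trivially, so I would assume $\ell \le n$. A common subsequence of $x$ and $y$ of length $\ell$ is witnessed by a pair of index sets $I,J\subseteq[n]$ with $|I|=|J|=\ell$ such that, writing $i_1<\dots<i_\ell$ for the elements of $I$ and $j_1<\dots<j_\ell$ for those of $J$, we have $x[i_k]=y[j_k]$ for all $k$. Since $\lcs(x,y)$ is an integer, $\lcs(x,y)\ge 5n/\sqrt{|\Sigma|}$ forces $\lcs(x,y)\ge \ell$, so by a union bound $\Pr[\lcs(x,y)\ge \ell] \le \sum_{I,J}\Pr[x|_I = y|_J]$, the sum being over all such pairs $(I,J)$.

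The key observation is that for any fixed $I,J$ with $|I|=|J|=\ell$, the strings $x|_I$ and $y|_J$ are independent and each uniform on $\Sigma^\ell$ (because $x$ and $y$ are independent and uniform on $\Sigma^n$), hence $\Pr[x|_I=y|_J] = |\Sigma|^{-\ell}$ regardless of $I,J$. There are exactly $\binom{n}{\ell}^2$ ordered pairs, so $\Pr[\lcs(x,y)\ge\ell] \le \binom{n}{\ell}^2 |\Sigma|^{-\ell}$. Then I would use the throwaway inequality $\binom{n}{\ell}\le (en/\ell)^\ell$ to get $\Pr[\lcs(x,y)\ge\ell] \le \bigl(e^2 n^2/(\ell^2 |\Sigma|)\bigr)^{\ell}$. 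Since $\ell \ge 5n/\sqrt{|\Sigma|}$ gives $\ell^2 |\Sigma| \ge 25 n^2$, the base is at most $e^2/25$, and using $e^3 < 25$ (indeed $e^3\approx 20.09$) we obtain $e^2/25 < 1/e$, so $\Pr[\lcs(x,y)\ge \ell]\le e^{-\ell}\le e^{-5n/\sqrt{|\Sigma|}}$, as required.

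There is no genuine obstacle here; this is a routine second-moment-free estimate. The only points requiring care are the bookkeeping around the ceiling and the reduction to the case $\ell\le n$, and checking that the numerical slack is enough — i.e., that the constant $5$ was chosen precisely so that $25 = 5^2 > e^3$, which is exactly what converts the bound $(e^2/25)^\ell$ into the clean $e^{-\ell}$. (More generally the same argument works with $5$ replaced by any constant $c$ satisfying $c^2 > e^3$.)
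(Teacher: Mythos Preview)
Your proof is correct and is essentially identical to the paper's own argument: a union bound over the $\binom{n}{\ell}^2$ candidate alignments, the estimate $\binom{n}{\ell}\le(en/\ell)^\ell$, and the numerical check $5^2>e^3$ to turn the base into something below $1/e$. You are slightly more careful than the paper in handling the ceiling and the trivial case $\ell>n$, but otherwise the two proofs coincide.
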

\begin{proof}
Let $c\eqdef 5 > e^{1.5}$ and $t\eqdef cn/\sqrt{|\Sigma|}$.
The number of potential alignments of size $t$ between two strings 
of length $n$ is at most $\binom{n}{t}^2\le (\tfrac{ne}{t})^{2t}$.
Each of them indeed becomes an alignment of $x,y$ 
(i.e.\ symbols that are supposed to align are equal) 
with probability at most $1/|\Sigma|^{t}$.
Applying a union bound,
\begin{align*}
  \Pr[\lcs(x,y) \ge t]
  & \le (\tfrac{ne}{t})^{2t} / |\Sigma|^{t} 
    \le ({e^2c^{-2}|\Sigma|})^t \cdot |\Sigma|^{-t}
    \le e^{-t}. 
\qedhere\end{align*}
\end{proof}

\subsubsection{Distance under substitution product (large alphabet)}

We proceed to analyze how the edit distance between two strings, say $\ed(x,y)$,
changes when we perform a substitution product, i.e.\ $\ed(x\sprod B,y\sprod B)$.
The bounds we obtain are additive, and are thus most effective 
when the edit distance $\ed(x,y)$ is large (linear in the strings length).
Furthermore, they depend on $\lambda_B\in[0,1]$, which denotes the maximum 
normalized LCS between distinct images of $B:\Sigma\to(\Sigma')^{n'}$,
hence they are most effective when $\lambda_B$ is small, 
essentially requiring a large alphabet $\Sigma'$.

\begin{theorem}\label{thm:sprodAdditive}
Let $x,y\in \Sigma^n$ and $B:\Sigma\to(\Sigma')^{n'}$. Then 
$$ n'\cdot \edd(x,y) -8nn' \sqrt{\lambda_B}
   \ \le\ \edd(x\sprod B, y\sprod B) 
   \ \le\ n'\cdot \edd(x,y),$$
where 
$\lambda_B\eqdef 
  \max \Big\{\tfrac{\lcs(B(a),B(b))}{n'} :\ a\neq b\in\Sigma\Big\}$.
\end{theorem}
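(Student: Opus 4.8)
The plan is to pass everything through longest common subsequences. By \eqref{eq:edLCS} and $|x\sprod B|=|y\sprod B|=nn'$, the statement is equivalent to the two inequalities
$n'\cdot\lcs(x,y)\le\lcs(x\sprod B,y\sprod B)\le n'\cdot\lcs(x,y)+4nn'\lambda_B$,
from which the claimed $\edd$ bounds follow (the $4\lambda_B$ on the right is even a bit stronger than the asserted $8\sqrt{\lambda_B}$, since $\lambda_B\le1$ gives $\lambda_B\le\sqrt{\lambda_B}$). The upper bound on $\edd(x\sprod B,y\sprod B)$, i.e.\ the left inequality on $\lcs$, is the easy direction: take an optimal common subsequence of $x$ and $y$, realized by matched pairs $x_{a_1}=y_{b_1},\dots,x_{a_m}=y_{b_m}$ with $a_1<\dots<a_m$, $b_1<\dots<b_m$, and $m=\lcs(x,y)$; since $B(x_{a_k})=B(y_{b_k})$ as strings, replacing each matched symbol by its entire length-$n'$ image block produces a common subsequence of $x\sprod B$ and $y\sprod B$ of length $mn'$.

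For the substantive direction, fix an optimal alignment $A$ of $x\sprod B$ onto $y\sprod B$, with $L\eqdef\lcs(x\sprod B,y\sprod B)$ matched pairs. Every matched pair falls inside a unique pair of blocks; let $m_{ij}$ be the number of matched pairs whose $x$-coordinate lies in the $i$-th block of $x\sprod B$ and whose $y$-coordinate lies in the $j$-th block of $y\sprod B$, and let $G\eqdef\{(i,j):m_{ij}>0\}$, so $L=\sum_{(i,j)\in G}m_{ij}$. Monotonicity of $A$ forbids any two cells $(i,j),(i',j')\in G$ with $i<i'$ and $j>j'$. This yields two structural facts: (a) $|G|\le2n-1$, because writing $J_i\eqdef\{j:(i,j)\in G\}$ one has $\max J_i\le\min J_{i+1}$, hence $\sum_i|J_i|\le n+\sum_i(\max J_i-\min J_i)\le2n-1$; and (b) for the strict order ``$(i,j)\prec(i',j')$ iff $i<i'$ and $j<j'$'', any two distinct cells of $G$ are comparable unless they share a row or a column.

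Now split $G=G_=\sqcup G_{\neq}$ according to whether $x_i=y_j$ or $x_i\neq y_j$. For $(i,j)\in G_{\neq}$, the matched pairs inside the block pair $(i,j)$, read in local offsets, form a common subsequence of $B(x_i)$ and $B(y_j)$, so $m_{ij}\le\lcs(B(x_i),B(y_j))\le\lambda_B n'$; summing over the at most $2n-1$ cells of $G$ gives $\sum_{G_{\neq}}m_{ij}\le2nn'\lambda_B$. For $G_=$, let $\ell$ be the largest size of a $\prec$-chain in $G_=$, i.e.\ of a set of cells strictly increasing in both coordinates; such a chain selects indices $i_1<\dots<i_\ell$ and $j_1<\dots<j_\ell$ with $x_{i_k}=y_{j_k}$, a common subsequence of $x$ and $y$, so $\ell\le\lcs(x,y)$. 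By Mirsky's theorem $G_=$ decomposes into $\ell$ antichains, and by fact (b) each antichain lies within a single row or a single column; since $\sum_j m_{ij}\le n'$ for each fixed $i$ (an $x$-block has only $n'$ positions) and symmetrically for each fixed column, each antichain carries $m_{ij}$-mass at most $n'$, whence $\sum_{G_=}m_{ij}\le\ell n'\le n'\lcs(x,y)$. Adding the two contributions gives $L\le n'\lcs(x,y)+2nn'\lambda_B$, and translating back through $\edd=|x\sprod B|+|y\sprod B|-2\lcs$ yields $\edd(x\sprod B,y\sprod B)\ge n'\,\edd(x,y)-4nn'\lambda_B\ge n'\,\edd(x,y)-8nn'\sqrt{\lambda_B}$.

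The main obstacle is the $G_=$ term: a naive bound would allow the alignment to pile $\Theta(nn')$ matched pairs into matched-content block pairs that do not lie along any single long increasing chain, which would destroy the estimate. The resolution is the interplay of (i) the per-block capacity $n'$, which makes $m_{ij}$-mass concentrated in one row or one column ``cheap'', and (ii) the Mirsky/antichain decomposition together with the staircase structure (b), which shows the matched-content cells organize into only $\lcs(x,y)$ such rows/columns. Verifying facts (a) and (b) about $G$, and the claim that cross-block matched pairs form a common subsequence of the two image blocks, is the remaining bookkeeping, all of it routine.
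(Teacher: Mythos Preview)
Your argument is correct and in fact proves more than the theorem asserts: you obtain $\lcs(x\sprod B,y\sprod B)\le n'\lcs(x,y)+2nn'\lambda_B$, hence $\edd(x\sprod B,y\sprod B)\ge n'\,\edd(x,y)-4nn'\lambda_B$, whereas the paper only gets the weaker loss $8nn'\sqrt{\lambda_B}$. The two proofs are genuinely different. The paper works directly on the optimal alignment and prunes it in two passes: first it unaligns every block whose \emph{span} (number of blocks it touches in the other string) exceeds a threshold $s$, then it discards block pairs carrying at most $\lambda_B n'$ matches; the survivors have $x_i=y_j$, and a greedy scan extracts an alignment of $x$ with $y$. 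The threshold $s=2/\sqrt{\lambda_B}$ is chosen to balance the two pruning losses, which is precisely the origin of the $\sqrt{\lambda_B}$. You bypass pruning altogether by passing to the block-pair mass matrix $(m_{ij})$: its support $G$ is a staircase of size at most $2n-1$; the $G_{\neq}$ mass is bounded cellwise by $\lambda_B n'$; and for $G_{=}$ you invoke Mirsky's theorem, combined with the observation that in a staircase every $\prec$-antichain is confined to a single row or column and hence carries mass at most $n'$, so the total $G_{=}$ mass is at most $n'$ times the longest chain length, which is at most $\lcs(x,y)$. Your route is shorter, avoids the two-parameter balancing, and yields the sharper linear-in-$\lambda_B$ dependence; the paper's approach is more hands-on but pays for the balancing with the square root. (One cosmetic remark: in your opening sentence ``equivalent'' should read ``it suffices to prove'', since what you actually establish on the $\lcs$ side is strictly stronger than the equivalent of the stated $\edd$ bound.)
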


Before proving the theorem, we state a corollary that will turn to be most 
useful.
The corollary follows from Theorem \ref{thm:sprodAdditive} by letting $\Sigma'=\Sigma$, 
and using Lemma \ref{lem:RandomStrings} together with 
a union bound over all pairs $B(a),B(b)$ (while assuming $n'\ge |\Sigma|$).

\begin{corollary} \label{cor:sprodAdditiveRandom}
Assume $|\Sigma|\ge 2$ and $n'\ge |\Sigma|$ is sufficiently large 
(i.e.\ at least some absolute constant $c'$).
Let $B:\Sigma\to(\Sigma)^{n'}$ be a random function,
i.e.\ for each $a\in\Sigma$ choose $B(a)$ uniformly at random.
Then with probability at least $1-2^{-n'/|\Sigma|}$,
for all $n$ and all $x,y\in\Sigma^n$,
$$ 0
  \ \le\ n'\cdot \edd(x,y) - \edd(x\sprod B, y\sprod B) 
  \ \le\ O(nn'/|\Sigma|^{1/4}).$$
\end{corollary}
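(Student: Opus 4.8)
The plan is to derive the corollary directly from Theorem~\ref{thm:sprodAdditive} by bounding $\lambda_B$ with high probability. First I would note that since $B$ is a random function, the images $B(a)$, $a\in\Sigma$, are mutually independent and each uniformly distributed over $\Sigma^{n'}$; in particular, for any fixed pair $a\neq b$ the pair $(B(a),B(b))$ is a pair of independent uniformly random length-$n'$ strings over $\Sigma$, so Lemma~\ref{lem:RandomStrings} (applied with alphabet $\Sigma$ and length $n'$) gives $\Pr\bigl[\lcs(B(a),B(b))\ge 5n'/\sqrt{|\Sigma|}\bigr]\le e^{-5n'/\sqrt{|\Sigma|}}$.

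Next I would take a union bound over the at most $\binom{|\Sigma|}{2}\le|\Sigma|^2$ unordered pairs $\{a,b\}$ with $a\neq b$: with probability at least $1-|\Sigma|^2 e^{-5n'/\sqrt{|\Sigma|}}$, every such pair has $\lcs(B(a),B(b))<5n'/\sqrt{|\Sigma|}$, i.e. $\lambda_B\le 5/\sqrt{|\Sigma|}$. The only genuine computation in the proof is to check that, under the hypotheses $n'\ge|\Sigma|$ and $n'\ge c'$ for a sufficiently large absolute constant $c'$, this failure probability is at most $2^{-n'/|\Sigma|}$. After taking logarithms this amounts to $5n'/\sqrt{|\Sigma|}\ \ge\ 2\ln|\Sigma|+(n'/|\Sigma|)\ln 2$, which holds because $n'/\sqrt{|\Sigma|}=\sqrt{|\Sigma|}\cdot(n'/|\Sigma|)$, so the left side already dominates the term $(n'/|\Sigma|)\ln 2$ (as $5\sqrt{|\Sigma|}\ge\ln 2$) and, since $n'/\sqrt{|\Sigma|}\ge\sqrt{|\Sigma|}$ grows while also $n'$ can be taken past a constant, it dominates $2\ln|\Sigma|$ as well. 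I expect this estimate to be the main (and essentially only) obstacle, and it is purely arithmetic.

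Finally, I would condition on this good event. Crucially, it is an event about $B$ alone and does not refer to $n$, $x$, or $y$, so once a good $B$ is fixed the conclusion holds for \emph{all} $n$ and \emph{all} $x,y\in\Sigma^n$ simultaneously. Plugging $\lambda_B\le 5/\sqrt{|\Sigma|}$ into Theorem~\ref{thm:sprodAdditive} yields
$$ n'\cdot\edd(x,y)-8\sqrt{5}\,nn'\,|\Sigma|^{-1/4}\ \le\ \edd(x\sprod B,y\sprod B)\ \le\ n'\cdot\edd(x,y), $$
where the upper inequality (hence the bound $0\le n'\edd(x,y)-\edd(x\sprod B,y\sprod B)$) is the deterministic right-hand side of Theorem~\ref{thm:sprodAdditive}, and the lower inequality gives $n'\edd(x,y)-\edd(x\sprod B,y\sprod B)\le O(nn'/|\Sigma|^{1/4})$. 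This is exactly the statement of the corollary; the only point worth highlighting in the write-up is the quantifier order just mentioned.
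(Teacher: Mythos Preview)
Your proposal is correct and follows essentially the same approach as the paper: apply Lemma~\ref{lem:RandomStrings} to each pair $B(a),B(b)$, take a union bound over the at most $|\Sigma|^2$ pairs (using $n'\ge|\Sigma|$ to control the failure probability), and plug the resulting bound on $\lambda_B$ into Theorem~\ref{thm:sprodAdditive}. The paper states this in one sentence; you have simply filled in the arithmetic and the quantifier remark, both of which are fine.
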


\begin{proof}[Proof of Theorem \ref{thm:sprodAdditive}]
By using the direct connection \eqref{eq:edLCS} between $\edd(x,y)$ and $\lcs(x,y)$,
it clearly suffices to prove 
\begin{equation}  \label{eq:sprodAdditiveLCS}
  n'\cdot \lcs(x,y) 
  \le \lcs(x\sprod B, y\sprod B) 
  \le n'\cdot \lcs(x,y) + 4nn'\sqrt{\lambda_B}.
\end{equation}
Throughout, we assume the natural partitioning of $x,y$ 
into $n$ blocks of length $n'$.

The first inequality above is immediate.
Indeed, give an (optimal) alignment between $x$ and $y$, do the following;
for each $(i,j)$ such that $x_i$ is aligned with $y_j$,
align the entire $i$-th block in $x\sprod B$ with the entire $j$-th block 
in $y\sprod B$.
It is easily verified that the result is indeed an alignment and has 
size $n'\cdot \edd(x,y)$.

To prove the second inequality above, 
fix an optimal alignment $A$ between $x\sprod B$ and $y\sprod B$;
we shall construct an $\hat A$ alignment for $x,y$ in three stages,
namely, first pruning $A$ into $A'$, then pruning it further into $A''$,
and finally constructing $\hat A$.
Define the \emph{span} of a block $b$ in either $x\sprod B$ or $y\sprod B$
(under the current alignment) to be the number of blocks in the other string
to which it is aligned in at least one position
(e.g. the span of block $i$ in $x\sprod B$ is the number of blocks $j$
for which at least one position $p$ in block $i$ satisfies that $A(p)$ 
is in block $j$.)

Now iterate the following step:
``unalign'' a block (in either $x\sprod B$ or $y\sprod B$) completely 
whenever its span is greater than $s\eqdef 2/\sqrt{\lambda_B}$.
Let $A'$ be the resulting alignment; its size is $|A'| \ge |A| - 4nn'/s$
because each iteration is triggered by a distinct block, 
the total span of all these blocks is at most $4n$,
hence the total number of iterations is at most $4n/s$. 

Next, iterate the following step (starting with $A'$ as the current alignment): 
remove alignments between two blocks (one in $x\sprod B$ and one in $y\sprod B$)
if, in one of the two blocks, 
at most $\lambda_Bn'$ positions are aligned to the other block.
Let $A''$ be the resulting alignment; its size is $|A''| \ge |A'| - ns\cdot \lambda_Bn'$
because each iteration is triggered by a distinct pair of blocks,
out of at most $ns$ pairs (by the span bound above).

This alignment $A''$ has size $|A''| \ge |A| -4nn'/s -nn's\lambda_B$.
Furthermore, if between two blocks, 
say block $i$ in $x\sprod B$ and block $j$ in $y\sprod B$,
the number of aligned positions is at least one,
then this number is actually greater than $\lambda_Bn'$ (by construction of $A''$)
and thus $x[i]=y[j]$ (by definition of $\lambda_Bn'$).

Finally, construct an alignment $\hat A$ between $x$ and $y$,
where initially, $\hat A(i)=\bot$ for all $i\in [n]$.
Think of the alignment $A''$ as the set of aligned positions,
namely $\{(p,q)\in[n]\times[n]:\ A''(p)=q\}$.
Let $\blk_{x\sprod B}(p)$ denote the number of the block in $x\sprod B$ 
which contains $p$, and similarly for positions $q$ in $y\sprod B$.
Now scan $A''$, as a set of pairs, in lexicographic order.
More specifically, initialize $(p,q)$ to be the first edge in $A''$,
and iterate the following step: 
assign $\hat A(\blk_{x\sprod B}(p))=\blk_{y\sprod B}(q)$, 
and advance $(p,q)$ according to the lexicographic order 
so that both coordinates now belong to new blocks,
i.e.\ set it to be the next pair $(p',q')\in A''$
for which both $\blk_{x\sprod B}(p')>\blk_{x\sprod B}(p)$
and $\blk_{y\sprod B}(q')>\blk_{y\sprod B}(q)$.
We claim that $\hat A$ is an alignment between $x$ and $y$.
To see this, consider the moment when we assign some $\hat A(i)=j$.
Then the corresponding blocks in $x\sprod B$ and $y\sprod B$
contain at least one pair of positions that are aligned under $A''$, 
and thus, as argued above, $x[i]=y[j]$.
In addition, all subsequent assignments of the form $\hat A(i')=j'$ 
satisfy that both $i'>i$ and $j'>j$.
Hence $\hat A$ is indeed an alignment.

En route to bounding the size of $\hat A$, we claim that each iteration scans 
(i.e.\ advances the current pair by) at most $n'$ pairs from $A''$.
To see this, consider an iteration where we assign some $\hat A(i)=j$.
Every pair $(p,q)\in A''$ that is scanned in this iteration
satisfies that either $i=\blk_{x\sprod B}(p)$ or $j=\blk_{x\sprod B}(p)$.
Each of these two requirements can be satisfied by at most $n'$ pairs,
and together at most $2n'$ pairs are scanned.
By the fact that $A''$ is monotone, it can be easily verified that
at least one of the two requirements must be satisfied by all scanned pairs,
hence the total number of scanned pairs is at most $n'$.

Using the claim, we get that $|\hat A| \le |A''|/n'$
(recall that each iteration also makes one assignment to $\hat A$).
It immediately follows that 
\[
  n'\cdot \lcs(x,y) 
  \ge n'\cdot |\hat A|
  \ge |A''|
  \ge |A| -4nn'/s -nn's\lambda_B
  = \lcs(x\sprod B,y\sprod B)-4nn'\sqrt{\lambda_B},
\]
which completes the proof of \eqref{eq:sprodAdditiveLCS} and of 
Theorem \ref{thm:sprodAdditive}.
\end{proof}

\subsubsection{Distance under substitution product (any alphabet)}

We give another analysis for how the edit distance between two strings, 
say $\ed(x,y)$,
changes when we perform a substitution product, i.e.\ $\ed(x\sprod B,y\sprod B)$.
The bounds we obtain here are multiplicative, 
and may be used as a final step of alphabet reduction 
(say, from a large alphabet to the binary one).

\begin{theorem}\label{thm:sprodFar}
Let $B:\Sigma\to(\Sigma')^{n'}$, and suppose that 
(i) for every $a\neq b\in\Sigma$, we have 
\[
  \lcs(B_a,B_b) \le \tfrac{15}{16}n';
\]
and (ii) for 
every $a,b,c\in\Sigma$ (possibly equal), 
and every substring $B'$ of (the concatenation) $B_bB_c$
that has length $n'$ and overlaps each of $B_b$ and $B_c$ by at least $n'/10$,
we have 
\[ \lcs(B_a,B') \le 0.98n'. \]
Then for all $x,y\in \Sigma^n$,
\begin{equation}  \label{eq:sprodFar}
  c_1 n'\cdot \edd(x,y)
   \le \edd(x\sprod B, y\sprod B) 
   \le n'\cdot \edd(x,y),
\end{equation}
where $0<c_1<1$ is an absolute constant.
\end{theorem}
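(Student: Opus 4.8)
The upper bound $\edd(x\sprod B, y\sprod B)\le n'\cdot\edd(x,y)$ is immediate, exactly as in the first half of the proof of Theorem~\ref{thm:sprodAdditive}: by \eqref{eq:edLCS} it suffices to show $\lcs(x\sprod B,y\sprod B)\ge n'\cdot\lcs(x,y)$, which follows by taking an optimal alignment of $x,y$ and aligning each matched pair of symbols into a matched pair of whole blocks. So the whole content is the lower bound, and I plan to prove it by extracting an alignment of $x,y$ from an optimal alignment of the blown-up strings. Concretely, fix an alignment $A$ of $x\sprod B$ and $y\sprod B$ realizing $L'\eqdef\lcs(x\sprod B,y\sprod B)$, viewed as a monotone partial matching of positions; I will build an alignment $\hat A$ of $x,y$ with $|\hat A|\ge n-\frac{C}{n'}\bigl(nn'-L'\bigr)$ for an absolute constant $C$. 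Since $|\hat A|\le\lcs(x,y)$, this rearranges via \eqref{eq:edLCS} to $\edd(x\sprod B,y\sprod B)=2(nn'-L')\ge \tfrac{2}{C}\,n'\bigl(n-\lcs(x,y)\bigr)=\tfrac1C\,n'\cdot\edd(x,y)$, giving $c_1=1/C$.

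\textbf{Construction of $\hat A$.} Use the natural partition of $x\sprod B$ and $y\sprod B$ into $n$ length-$n'$ blocks. For each $x$-block $i$ let $m_i$ be the number of its positions matched by $A$; because $A$ is monotone, the $A$-images of block $i$ lie in a single window $W_i$ of $y\sprod B$, and the windows $W_1,\dots,W_n$ are pairwise disjoint and in order. Call $x$-block $i$ \emph{clean} if more than $\tfrac{15}{16}n'$ of its matched positions fall inside one $y$-block, call it $j(i)$; hypothesis (i) then forces $x_i=y_{j(i)}$ (these positions give a common subsequence of $B_{x_i}$ and $B_{y_{j(i)}}$ of length $>\tfrac{15}{16}n'$), and two clean blocks cannot both claim $>\tfrac{15}{16}n'$ of the same $y$-block, so $i\mapsto j(i)$ is injective and, by monotonicity of $A$, increasing. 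Hence setting $\hat A(i)=j(i)$ on clean blocks and $\hat A(i)=\bot$ otherwise is a valid alignment of $x$ and $y$, of size equal to the number of clean blocks; it remains to bound the number of non-clean blocks by $O\bigl((nn'-L')/n'\bigr)$.

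\textbf{Charging.} I plan to assign to each non-clean $x$-block $i$ a set $S_i$ of at least $c_1n'$ ``error positions'' --- unmatched positions of $x\sprod B$ inside block $i$, or unmatched positions of $y\sprod B$ inside $W_i$ --- so that the $S_i$ are pairwise disjoint (automatic, since the $W_i$ are disjoint and a block's own positions serve only itself). If $m_i\le(1-c_1)n'$, charge block $i$'s own $\ge c_1n'$ unmatched positions. Otherwise $m_i$ is close to $n'$, and non-cleanness forces $W_i$ to straddle at least one $y$-block boundary. If $W_i$ contains a length-$n'$ sub-window straddling such a boundary with overlap $\ge n'/10$ on each side, hypothesis (ii) bounds the LCS of $B_{x_i}$ with that sub-window by $0.98n'$; comparing this with $m_i$ and $|W_i|$ shows that either block $i$ has $\ge c_1n'$ unmatched positions or $W_i$ contains $\ge c_1n'$ unmatched $y$-positions, which I charge.

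\textbf{Main obstacle.} The delicate case is a non-clean block $i$ whose window straddles a $y$-block boundary only \emph{barely} --- by less than $n'/10$ on one side --- so that (ii) does not apply directly while $m_i$ need not be bounded away from $n'$. I expect this to be the technical heart of the argument: one has to note that in this case block $i$ overlaps a single $y$-block $j$ by more than $\tfrac{15}{16}n'$ yet $s_1\le\tfrac{15}{16}n'$ of it is matched there, and either extract $\ge c_1n'$ unmatched $y$-positions inside that overlap, or compare $B_{x_i}$ against a slightly shifted length-$n'$ window to which (ii) \emph{does} apply, or amortize over several consecutive blocks via the offsets of the induced segment boundaries. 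It is precisely in resolving this case that the particular constants $\tfrac{15}{16}$, $0.98$ and $\tfrac1{10}$ in hypotheses (i) and (ii) are consumed; everything else is routine bookkeeping with \eqref{eq:edLCS}.
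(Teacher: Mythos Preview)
Your overall strategy is right and matches the paper's: extract an alignment of $x,y$ from an optimal alignment $A$ of the blow-ups, and bound the number of $x$-blocks that fail to produce a match by $O(R/n')$ where $R=\edd(x\sprod B,y\sprod B)$. The difference is organizational, and your organization is what creates the ``main obstacle''.

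The paper does not classify blocks as clean/non-clean and then charge the non-clean ones. Instead it \emph{prunes} in two simple steps and shows every surviving block is automatically good:
\begin{enumerate}
\item Unalign every $x$-block with $\ge n'/200$ unmatched positions (at most $100R/n'$ blocks, by averaging).
\item Unalign every $x$-block whose window $W_i$ has length (``gap'') $\ge 1.01n'$. Each such window contains $\ge n'/100$ unmatched $y$-positions, and since the $W_i$ are disjoint this removes at most $50R/n'$ blocks.
\end{enumerate}
For each of the $\ge n-150R/n'$ surviving blocks one now has $m_i>0.995n'$ and $|W_i|<1.01n'$, so a single length-$n'$ window $B'$ (containing or contained in $W_i$) captures $>0.98n'$ of the matches. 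Hypothesis~(ii) is then applied once: since $\lcs(B_{x_i},B')>0.98n'$, the window $B'$ \emph{cannot} overlap two adjacent $y$-blocks by $\ge n'/10$ each, hence it lies $>0.9n'$ inside a single $y$-block $j$, so block $i$ has many matches into block $j$, and (i) forces $x_i=y_j$. Injectivity of $i\mapsto j$ follows because two $x$-blocks cannot both put more than $n'/2$ matches into the same $y$-block.

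Your ``obstacle'' --- a non-clean block whose window straddles a boundary by $<n'/10$ on one side --- is exactly the surviving case in the paper's argument, and the paper's point is that such a block is \emph{not} bad: it aligns to a single $y$-block. What makes it look like an obstacle in your setup is that you fixed the clean threshold at $>\tfrac{15}{16}n'$ matches into one $y$-block; after losing up to $n'/10$ matches to the short side, you are left with only about $0.88n'$ matches in the main block, which does not beat $\tfrac{15}{16}$. The fix is simply to lower your clean threshold (anything strictly above $n'/2$ still gives injectivity of $i\mapsto j(i)$) and take $c_1$ small enough that $m_i>0.99n'$ in the non-trivial case; then the ``barely straddling'' block is clean, no separate charging argument is needed, and the three alternatives you list for this case become unnecessary. (You may notice that the particular constants $\tfrac{15}{16}$, $0.98$, $\tfrac{1}{10}$ do not mesh perfectly even in the paper's own write-up; the method is sound and works after a routine adjustment of thresholds.)
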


Before proving the theorem, let us show that it is applicable for 
a random mapping $B$, 
by proving two extensions of Lemma \ref{lem:RandomStrings}.
Unlike the latter, the lemmas below are effective also for small alphabet size.
\begin{lemma}\label{lem:randomFar}
Suppose $|\Sigma|\ge 2$ and let $x,y\in \Sigma^n$ be chosen uniformly at random.
Then with probability at least $1-|\Sigma|^{-l/8}$, the following holds:
for every substring $x'$ in $x$ of length $l\ge 24$,
and every length $l$ substring $y'$ in $B_b$, 
we have 
$$ \lcs(x',y') \le \tfrac{15}{16}l.$$
\end{lemma}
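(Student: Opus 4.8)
The plan is a first-moment (union-bound) argument of exactly the flavor used for Lemma~\ref{lem:RandomStrings}, applied per pair of length-$l$ windows and then unioned over all windows. Fix starting positions $i,j$ and set $x'\eqdef x[i:i+l]$ and $y'\eqdef y[j:j+l]$. Since $x$ and $y$ are drawn independently and a contiguous block of a uniformly random string is itself uniformly random, $x'$ and $y'$ are \emph{independent} and each is uniform on $\Sigma^l$. Hence it suffices to bound $\Pr[\lcs(x',y')>\tfrac{15}{16}l]$ for two independent uniform strings of length $l$, and then take a union bound over the (at most $n^2$) relevant pairs of windows. Note that the many windows \emph{inside} $x$ overlap and are far from independent; this is irrelevant, since the union bound needs no independence and the per-pair estimate uses only the independence of $x$ from $y$.

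For the per-pair estimate, set $t\eqdef\lceil\tfrac{15}{16}l\rceil$ and count candidate alignments as in the proof of Lemma~\ref{lem:RandomStrings}: a size-$t$ alignment of $x'$ and $y'$ is determined by a $t$-subset of $[l]$ (the matched positions of $x'$), a $t$-subset of $[l]$ (the matched positions of $y'$), and the forced monotone matching between them, so there are at most $\binom{l}{t}^2$ of them, and a fixed one is realized with probability exactly $|\Sigma|^{-t}$ because the $2t$ symbols at the matched positions are independent and uniform. Thus $\Pr[\lcs(x',y')\ge t]\le\binom{l}{t}^2|\Sigma|^{-t}$. The reason for choosing the threshold so close to $l$ is that $l-t\le l/16$, so I would bound $\binom{l}{t}=\binom{l}{l-t}\le(16e)^{l/16}$, giving $\binom{l}{t}^2|\Sigma|^{-t}\le(16e)^{l/8}|\Sigma|^{-15l/16}$; since $|\Sigma|\ge2$ we have $16e<2^6\le|\Sigma|^6$, so this is at most $|\Sigma|^{6l/8-15l/16}=|\Sigma|^{-3l/16}$. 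Union-bounding over the $\le n^2$ window pairs, the probability that some pair of windows violates the conclusion is at most $n^2|\Sigma|^{-3l/16}$, which is below $|\Sigma|^{-l/8}$ once $l$ exceeds a suitable constant multiple of $\log_{|\Sigma|}n$ — and in the regime where this lemma is applied, with $l$ a fixed fraction of the strings' length, this is automatic; the hypothesis $l\ge24$ merely keeps the rounding of $t$ and the binomial estimate harmless.

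The argument has no serious obstacle; the two points to get right are the ones just flagged: (i) do not accidentally invoke independence of the overlapping windows within a single string — only $x\perp y$ is used, and only for a fixed pair of windows; and (ii) estimate $\binom{l}{t}$ through its small index $l-t\le l/16$ rather than crudely as $2^l$, so that the $(16e)^{l/16}$ entropy term is comfortably absorbed by the $|\Sigma|^{-15l/16}$ collision term even when $|\Sigma|=2$. If one prefers to keep the $n$-dependence explicit, one simply records the failure probability as $n^2|\Sigma|^{-\Omega(l)}$.
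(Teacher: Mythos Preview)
Your proposal is correct and follows essentially the same approach as the paper: fix a pair of length-$l$ windows, bound $\Pr[\lcs(x',y')\ge(1-\tfrac1{16})l]$ via the $\binom{l}{t}^2|\Sigma|^{-t}$ count with $\binom{l}{t}\le(16e)^{l/16}$, and union-bound over window positions. You are in fact more careful than the paper on one point: the union bound over the $\le n^2$ window pairs introduces an $n^2$ factor, so the stated failure probability $|\Sigma|^{-l/8}$ really requires $l\gtrsim\log_{|\Sigma|}n$ (the paper writes the union-bound factor as $|\Sigma|^3$, which is not justified for general $n$); your observation that the clean form $n^2|\Sigma|^{-\Omega(l)}$ is what actually follows, and that this suffices in the application, is exactly right.
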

\begin{proof}
Set $\alpha\eqdef 1/16$.
Fix $l$ and the positions of $x'$ inside $x$ and of $y'$ inside $y$.
Then $x'$ and $y'$ are chosen at random from $\Sigma^l$, hence
\[
  \Pr[\lcs(x',y') \ge (1-\alpha) l]
  \le \textstyle\binom{l}{(1-\alpha)l}^2 |\Sigma|^{-(1-\alpha)l}
  \le (\tfrac{e}{\alpha})^{2\alpha l} |\Sigma|^{-(1-\alpha)l}
  \le |\Sigma|^{-l/4},
\]
where the last inequality uses $|\Sigma|\ge2$.

Now apply a union bound over all possible positions of $x'$ and $y'$
and all values of $l$.
It follows that the probability that $x$ and $y$ contain length $l$ substrings 
$x'$ and $y'$ (respectively) with $\lcs(x',y')\ge (1-\alpha)l$ is at most 
$|\Sigma|^3 \cdot|\Sigma|^{-l/4}  \leq |\Sigma'|^{-l/8}$,
if only $l$ is sufficiently large.
\end{proof}

The next lemma is an easy consequence of Lemma \ref{lem:randomFar}.
It follows by applying a union bound and observing that disjoint
substrings of $B(a)$ are independent.

\begin{lemma}\label{lem:randomFar2}
Let $B:\Sigma\to(\Sigma')^{n'}$ be chosen uniformly at random
for $|\Sigma'|\ge 2$ and $n' \ge 1000\log|\Sigma|$.
Then with probability at least $1-|\Sigma'|^{-\Omega(n')}$, 
$B$ satisfies the properties (i) and (ii) described in 
Theorem \ref{thm:sprodFar}.
\end{lemma}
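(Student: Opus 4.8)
The plan is to prove both properties by a union bound over all relevant configurations, in each case invoking Lemma~\ref{lem:randomFar} (applied with the substring length $l=n'$) to bound the longest common subsequence of a pair of substrings. The structural fact that makes this work is that for a uniformly random map $B$ the images $\{B_a\}_{a\in\Sigma}$ are independent uniformly random strings, and consequently any family of pairwise-disjoint substrings — whether drawn from a single $B_a$ or spread across several — is a family of independent uniformly random strings, so Lemma~\ref{lem:randomFar} applies verbatim to any such pair.

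For property~(i), I would fix $a\neq b$: then $B_a,B_b$ are two independent uniformly random strings of length $n'\geq 24$, so Lemma~\ref{lem:randomFar} (with $l=n'$ and with the roles $x=B_a$, $y=B_b$, taking the substrings to be the whole strings) gives $\lcs(B_a,B_b)\le\tfrac{15}{16}n'$ except with probability $|\Sigma'|^{-\Omega(n')}$. A union bound over the at most $|\Sigma|^2$ pairs multiplies the failure probability by $|\Sigma|^2$, which is absorbed since $n'\ge1000\log|\Sigma|$ makes $|\Sigma|^{2}$ negligible against $|\Sigma'|^{\Omega(n')}$.

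For property~(ii), fix $a,b,c$ and the offset defining $B'$, and write $B'=S\cdot P$, where $S$ is a suffix of $B_b$ of length $m$ and $P$ is a prefix of $B_c$ of length $n'-m$, with $m\in[n'/10,\,9n'/10]$ forced by the overlap hypothesis. When $a\notin\{b,c\}$ there is nothing new: $B_a$ is independent of $B_bB_c$, and $B_bB_c$ is itself uniformly random, so applying Lemma~\ref{lem:randomFar} with $x=B_a$ and $y=B_bB_c$ (taking $x'=B_a$, $y'=B'$) gives $\lcs(B_a,B')\le\tfrac{15}{16}n'<0.98\,n'$ except with probability $|\Sigma'|^{-\Omega(n')}$. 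When $a\in\{b,c\}$, say $a=b$, the pair $(B_a,B')$ is correlated, and here I would decompose the alignment: using $\lcs(u,vw)=\max_{u=u_1u_2}\big(\lcs(u_1,v)+\lcs(u_2,w)\big)$ together with a further split of $B_a$ at its internal boundary $U|S$ (where $U=B_a[1:n'-m+1]$), $\lcs(B_a,B')$ is bounded by a sum of LCS's of pairs of substrings. Each such pair is either a pair of disjoint substrings of $B_a$, or a substring of $B_a$ against a substring of the independent string $B_c$ — in both cases Lemma~\ref{lem:randomFar} bounds it by $\tfrac{15}{16}$ of the shorter length — or else an overlapping pair inside $B_a$, which contributes at most the length of its overlap region "for free". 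Summing, the $\tfrac{15}{16}$ factors on the genuinely-random pieces together with the overlap contributions (kept bounded away from $n'$ by $m\in[n'/10,\,9n'/10]$) yield $\lcs(B_a,B')\le 0.98\,n'$; a final union bound over the $O(|\Sigma|^3 n')$ choices of $(a,b,c,\text{offset})$ is again absorbed by $n'\ge1000\log|\Sigma|$.

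The main obstacle is precisely the correlated subcase $a\in\{b,c\}$ — and especially $a=b=c$, where $B'$ is a cyclic shift of $B_a$ — since there one cannot invoke Lemma~\ref{lem:randomFar} as a black box and must instead track the alignment carefully through the block boundaries, charging the self-matched portion to the (bounded) overlap region and everything else to independent substrings. The quantitative $\tfrac{15}{16}$ bound of Lemma~\ref{lem:randomFar} is essential: the weaker $5n'/\sqrt{|\Sigma'|}$ estimate of Lemma~\ref{lem:RandomStrings} degenerates for small $|\Sigma'|$, whereas $\tfrac{15}{16}$ holds whenever $|\Sigma'|\ge2$, and the constants $n'/10$ and $\tfrac{15}{16}$ are calibrated so that the worst case stays below $0.98\,n'$.
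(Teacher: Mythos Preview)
Your overall plan---union bound over all relevant configurations, together with the observation that disjoint substrings of a uniformly random string are themselves independent uniformly random strings so that Lemma~\ref{lem:randomFar} applies to them---is exactly the paper's argument, which in fact gives only a one-line sketch to this effect. Property~(i) and the subcase $a\notin\{b,c\}$ of property~(ii) are handled correctly.

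There is, however, a genuine gap in your treatment of the correlated subcase $a\in\{b,c\}$. Two of the claims you lean on are not valid as stated. First, Lemma~\ref{lem:randomFar} is stated for substrings of \emph{equal} length $l$; it does not assert ``$\tfrac{15}{16}$ of the shorter length'' for independent substrings of unequal lengths, and that extension is in fact false when the length ratio is large. For $|\Sigma'|=2$, a uniformly random string of length $n'/10$ is with high probability a full subsequence of an independent uniformly random string of length $9n'/10$ (a greedy scan matches each symbol after two characters in expectation), so the $\lcs$ equals the entire shorter length. Your split $\lcs(u,vw)=\max_{u=u_1u_2}(\lcs(u_1,v)+\lcs(u_2,w))$ produces pairs with one side of length $m$ or $n'-m$ and the other governed by the free split point, so exactly such lopsided comparisons arise. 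Second, the assertion that an overlapping pair ``contributes at most the length of its overlap region'' is not an upper bound on its $\lcs$: the overlap is a common substring and hence a \emph{lower} bound, while the non-overlapping tails can contribute further matches on top of it.

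These defects are repairable---the slack between $\tfrac{15}{16}$ and $0.98$ and the constraint $m\in[n'/10,9n'/10]$ are there precisely to absorb such losses---but the bookkeeping must be redone. One route is to arrange the decomposition so that each invocation of Lemma~\ref{lem:randomFar} is on an equal-length pair (embedding the shorter piece in a longer disjoint substring where available) and to bound all remaining pieces trivially by their length, then check that the trivial contributions stay below the $0.98n'$ budget. Another is to prove directly, by the same counting as in Lemma~\ref{lem:randomFar}, that for a uniformly random $z\in(\Sigma')^{n'}$ every cyclic shift of $z$ by an offset in $[n'/10,9n'/10]$ has $\lcs$ at most $0.98n'$ with $z$; this handles the worst case $a=b=c$ in one stroke.
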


\begin{proof}[Proof of Theorem \ref{thm:sprodFar}]
The last inequality in \eqref{eq:sprodFar} is straightforward.
Indeed, whenever $x_i$ is aligned against $y_j$,
we have $x_i=y_j$ and $B(x_i)=B(y_j)$,
hence we can align the corresponding blocks in $x\sprod B$ and $y\sprod B$.
We immediately get that $\lcs(x\sprod B, y\sprod B) \ge n'\cdot \lcs(x,y)$.

Let us now prove the first inequality.
Denote $R\eqdef\edd(x\sprod B, y\sprod B)$,
and fix a corresponding alignment between the two strings.
The string $x\sprod B$ is naturally partitioned into $n$ blocks of length $n'$.
The total number of coordinates in $x\sprod B$ that are unaligned 
(to $y\sprod B$) is exactly $R/2$, which is $R/2n$ in an average block.

We now prune this alignment in two steps.
First, ``unaliagn'' each block in $x\sprod B$ with at least
$(nn'/100R)\cdot (R/2n) = n'/200$ unaligned coordinates.
By averaging (or Markov's inequality), this step applies to
at most $100R/nn'$-fraction of the $n$ blocks.

Next, define the \emph{gap} of a block in $x\sprod B$ to be the difference
(in the positions) between the first and last positions in $y\sprod B$
that are aligned against a coordinate in $x\sprod B$.
The second pruning step is to unalign every block in $x\sprod B$ 
whose gap is at least $1.01n'$.
Every such block can be identified with a set of at least $n'/100$ 
unaligned positions in $y\sprod B$ (sandwiched inside the gap), 
hence these sets (for different blocks) are all disjoint,
and the number of such blocks is at most $(R/2)/(n'/100)=50R/n'$.

Now consider one of the remaining blocks (at least $n-100R/n'-50R/n'$ blocks).
By our pruning, for each such block $i$ we can find a corresponding 
substring of length $n'$ in $y\sprod B$ with at least 
$n'-n'/200-n'/100 > 0.98n'$ aligned pairs (between these two substrings).
Using the property (ii) of $B$, the corresponding substring in $y\sprod B$
must have overlap of at least $0.9n'$ with some block of $y\sprod B$ 
(recall that $y\sprod B$ is also naturally partitioned into length $n'$ blocks).
Thus, for each such block $i$ in $x\sprod B$ there is a corresponding
block $j$ in $y\sprod B$, such that these two blocks contain at least
$0.9n'-0.02n'=0.88n'$ aligned pairs.
By the property (i) of $B$, it follows that the corresponding
coordinates in $x$ and in $y$ are equal, i.e. $x_i=y_j$.
Observe that distinct blocks $i$ in $x\sprod B$ are matched in this way 
to distinct blocks $j$ in $y\sprod B$ 
(because the initial substrings in $y\sprod B$ were non-overlapping,
and they each more than $n'/2$ overlap with a distinct block $j$).

It is easily verified that the above process gives an alignment 
between $x$ and $y$.
Recall that the number of coordinates in $x$ that are not aligned 
in this process is at most $150R/n'$, hence
$\edd(x,y) \le 300R/n'$, and this completes the proof.
\end{proof}

\subsection{The Lower Bound}

We now put all the elements of our proof together. We start by describing hard distributions, and then prove their properties. We also give a slightly more precise version of the lower bound for polynomial approximation factors in a separate subsection.

\subsubsection{The Construction of Hard Distributions}\label{sect:hard_distro}

We give a probabilistic construction for the hard distributions. We have two basic parameters, $n$ which is roughly the length of strings, and $\alpha$ which is the approximation factor. We require that $2<\alpha \ll n/\log n$. 
The strings length is actually smaller than $n$ (for $n$ large enough), 
but our query complexity lower bound hold also for length $n$,
e.g., by a simple argument of padding by a fixed string.

We now define the hard distributions.
\begin{enumerate}
\item Fix an alphabet $\Sigma$ of size $\lceil 5^2 \cdot 2^{16}\cdot \log_{\alpha}^4 n\rceil$.
\item Set:
\begin{itemize}
 \item $T\eqdef \left\lceil 1000 \cdot \log |\Sigma| \right\rceil$.
 \item $\beta \eqdef \begin{cases}
          \alpha, &\hbox{if $\alpha < n^{1/3}$,}\\
          \frac{n}{\alpha \ln n}, & \hbox{otherwise}.\\
         \end{cases}$
 \item $s \eqdef \left\lceil 400\beta \ln n \cdot |\Sigma|^{12} \right\rceil$, thus $s = O(\beta \cdot \log n \cdot \log_\alpha^{48} n)$.
 \item $B \eqdef \left\lceil 8\alpha s \cdot \log_\alpha n \right\rceil$, implying that $B = O(\alpha \beta\log n \cdot \log_\alpha^{49} n)$. Notice that $B < \frac{n}{T}$ for $n$ large enough. If $\alpha < n^{1/3}$, then $B = \tilde O(n^{2/3})$. Otherwise, $\log_\alpha n \le 3$, $\log |\Sigma| = O(1)$, and $B = o(n)$. 
\end{itemize}

\item Select at random $|\Sigma|$ strings of length $B$, 
denoted $x_a$ for $a \in \Sigma$.

\item
Define $|\Sigma|$ corresponding distributions $\D_a$. 
For each $a \in \Sigma$, let
$$\D_a \eqdef \S_s(x_a),$$
and set 
$$\D \eqdef (\D_a)_{a \in \Sigma}.$$
\item Define by induction on $i$a a collection of distributions $\E_{i,a}$ for $a\in\Sigma$.
As the base case, set 
$$\E_{1,a} \eqdef \D_a.$$
For $i > 1$, set
$$\E_{i,a} \eqdef \E_{i-1,a} \sprod \D.$$

\item Let $i_\star \eqdef \left\lfloor \log_{B} \frac{n}{T}\right\rfloor$. Note that the distributions $\E_{i_\star,a}$ are defined on strings of length $B^{i_\star}$, which is
is of course at most $\frac{n}{T}$, but due to an earlier observation,
we also know that $i_\star \ge 1$, for $n$ large enough.

\item Fix distinct $a_\star,b_\star\in\Sigma$. Let $\mathcal F_0 \eqdef \E_{i_\star,a_\star}$ and $\mathcal F_1 \eqdef \E_{i_\star,b_\star}$.

\item Pick a random mapping $R : \Sigma \to \zo^{T}$. Let
$\mathcal F'_0 \eqdef \mathcal F_0 \sprod R$
and 
$\mathcal F'_1 \eqdef \mathcal F_1 \sprod R$.
Note that the strings drawn from $\mathcal F'_0$ and $\mathcal F'_1$ are of length at most $n$.
\end{enumerate}

Notice the construction is probabilistic only because of step 
\#3 (the base strings $x_a$),
and \#8 (the randomized reduction to binary alphabet).

\subsubsection{Proof of the Query Complexity Lower Bound}\label{sect:final_lb}

The next theorem shows that:
\begin{itemize}
\compactify
 \item Every two strings selected from the same distribution $\mathcal F_i$ 
are always close in edit distance.

 \item With non-zero probability (recall the construction is probabilistic), 
distribution $\mathcal F_0$ produces strings that are far, in edit distance, from strings produced by $\mathcal F_1$, yet distinguishing between these cases requires many queries.
\end{itemize}
Essentially the same properties hold also for $\mathcal F'_0$ and $\mathcal F'_1$.

\begin{theorem}\label{thm:main_lb}
Consider a randomized algorithm that is given full access to a string in $\Sigma^n$, and query access to another string in $\Sigma^n$. Let $2<\alpha\le o(n/\log n)$.
If the algorithm distinguishes, with probability at least $2/3$, edit distance $\ge n/2$ from $\le n/(4\alpha)$, then it makes
$$\left(2+\Omega\left(\frac{\log \alpha}{\log\log n}\right)\right)^{\max\left\{1,\Omega\left(\frac{\log n}{\log \alpha + \log\log n}\right)\right\}}$$
queries for $\alpha < n^{1/3}$, and $\Omega\left(\log \frac{n}{\alpha \ln n}\right)$ queries
for $\alpha \ge n^{1/3}$. The bound holds even for $|\Sigma| = O(\log_\alpha^4 n)$.

For $\Sigma = \zo$, the same number of queries is required to distinguish edit distance 
$\ge c_1n/2$ and $\le c_1n/(4\alpha)$, where $c_1 \in (0,1)$ is the constant from Theorem~\ref{thm:sprodFar}.
\end{theorem}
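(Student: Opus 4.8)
The plan is to assemble Theorem~\ref{thm:main_lb} from the three toolkits developed above: the edit-distance bounds under the substitution product, the uniform-similarity machinery, and the random-shift lemma. First I would establish the \emph{closeness} claim: any two strings $u,v\in\supp(\E_{i,a})$ satisfy a good upper bound on $\edd(u,v)$. Since $\E_{i,a}=\E_{i-1,a}\sprod\D$ and $\D_a=\S_s(x_a)$, a string from $\E_{i,a}$ is obtained from a fixed ``skeleton'' string (the one with all shifts $r=0$ at every level) by applying cyclic shifts of magnitude at most $s$ to each of the $B^{j}$ blocks at each level $j<i$. Each such shift costs at most $2s$ in $\edd$ per block, so by induction $\edd(u,v)\le \sum_{j=1}^{i}2s\cdot B^{j-1}\cdot (B/B^{j})\cdot(\text{block length})$ — more cleanly, the total shift-cost accumulated over all levels is $O(s\cdot i\cdot B^{i-1})$, which for $i=i_\star$ and the parameter choices is $\le n/(8\alpha)$ (using $s=O(\beta\log n\cdot\log_\alpha^{48}n)$, $B^{i_\star}\le n/T$, and $\beta\le\alpha$ when $\alpha<n^{1/3}$). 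Passing through $\sprod R$ multiplies lengths by $T$ and costs nothing extra by the easy direction of Theorem~\ref{thm:sprodFar}, giving $\edd\le n/(4\alpha)$, hence $\ed\le n/(4\alpha)$.

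Next I would prove the \emph{farness} claim: with positive probability over the random base strings (step \#3), every string in $\supp(\F_0)$ is far in $\edd$ from every string in $\supp(\F_1)$. Here I first bound $\edd$ for the ``skeleton'' strings, i.e.\ with no shifts: $\E_{i,a}$'s skeleton is $(\text{skeleton of }\E_{i-1,a})\sprod B$ where $B$ maps each symbol to the appropriate length-$B$ base string. Starting from $\edd(x_{a_\star},x_{b_\star})\ge (1-10/\sqrt{|\Sigma|})B\ge \tfrac12 B$ (Lemma~\ref{lem:RandomStrings} plus a union bound), I apply Corollary~\ref{cor:sprodAdditiveRandom} repeatedly: each substitution product shrinks the \emph{relative} $\edd$ by only an additive $O(1/|\Sigma|^{1/4})$, and since there are $i_\star = o(\log n)$ levels while $|\Sigma|=\Theta(\log_\alpha^4 n)$ is chosen large enough that $i_\star\cdot O(1/|\Sigma|^{1/4})$ is a small constant, the skeletons remain at relative $\edd\ge \tfrac14$, i.e.\ $\ge \tfrac14 B^{i_\star}$. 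Then I account for shifts: an optimal alignment between a shifted $\F_0$-string and a shifted $\F_1$-string can be ``unshifted'' at each level at a total cost of $O(s\cdot i_\star\cdot B^{i_\star-1})\le n/(8\alpha)$ exactly as in the closeness argument, converting it into an alignment of the skeletons; hence $\edd$ of the actual strings is at least $\tfrac14 B^{i_\star} - n/(4\alpha)\ge n/(2\cdot?)$ — after rescaling by $T$ to length $\le n$ this is $\ge n/2$ (adjusting constants; if needed, normalize so the threshold reads $n/2$ versus $n/(4\alpha)$ as stated). For the binary version, I invoke Lemma~\ref{lem:randomFar2} to ensure the random $R$ satisfies hypotheses (i),(ii) of Theorem~\ref{thm:sprodFar}, which then gives $\edd(\F'_0\text{-string},\F'_1\text{-string})\ge c_1\cdot T\cdot(\text{previous bound})$, yielding the claimed $\ge c_1 n/2$ versus $\le c_1 n/(4\alpha)$.

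Finally, the \emph{query lower bound}. By Lemma~\ref{lem:RotationSimilarity} applied with the $|\Sigma|$ base strings of length $B$ and shift range $s$, the distributions $\D_a=\S_s(x_a)$ are (with probability $\ge 2/3$) uniformly $\tfrac1A$-similar for $A=\max\{\log_{|\Sigma|}\sqrt[6]{s/(400\ln B)},1\}=\Omega(\log_{|\Sigma|}|\Sigma|^{12})=\Omega(1)$; more precisely the choice of $s$ forces $A=\Omega(12)$ — I want $A\ge 2+\Omega(\log\alpha/\log\log n)$ actually, so I should instead track that $s\ge 400\beta\ln n\cdot|\Sigma|^{12}$ gives $A = \Theta(\log_{|\Sigma|}(\beta\log n)) + \Theta(1)$, and with $|\Sigma|=\Theta(\log_\alpha^4 n)$ this is $\Theta(\log\alpha/\log\log n)$ when $\beta=\alpha$. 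Then Lemma~\ref{lemma:SimilarityMultiplication}, applied $i_\star-1$ times, shows $\E_{i_\star,a}$ are uniformly $A^{-i_\star}$-similar; passing through $\sprod R$ preserves this (Lemma~\ref{lemma:SimilarityMultiplication} with $R$ being $0$-similar, or directly). So $\F'_0,\F'_1$ are uniformly $A^{-i_\star}$-similar, and Lemma~\ref{lemma:SimilarityAlgorithms} gives that distinguishing them with probability $\ge 2/3$ requires $q\ge \tfrac16 A^{i_\star}$ queries. Substituting $i_\star=\lfloor\log_B(n/T)\rfloor=\Theta(\log n/(\log\alpha+\log\log n))$ for $\alpha<n^{1/3}$ (and $i_\star=1$, $A=\Omega(\log(n/(\alpha\ln n)))$ for $\alpha\ge n^{1/3}$, coming from the single-level case where $B=\Theta(n)$ and the similarity parameter is $1/\Omega(\log n)$) yields exactly the stated bound $(2+\Omega(\log\alpha/\log\log n))^{\max\{1,\Omega(\log n/(\log\alpha+\log\log n))\}}$.

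The main obstacle I anticipate is the bookkeeping of constants across the three coupled parameter regimes — making sure the shift-cost $O(s\,i_\star B^{i_\star-1})$ genuinely stays below the farness gap (which is why $s$ carries the $|\Sigma|^{12}$ factor and $B$ the extra $\alpha\log_\alpha n$ factor), and simultaneously that $i_\star\cdot O(|\Sigma|^{-1/4})$ from the repeated Corollary~\ref{cor:sprodAdditiveRandom} losses stays a small constant, and that $A^{i_\star}$ comes out to the advertised expression. The conceptually delicate point is the ``unshifting'' step in the farness proof: one must argue that any alignment achieving small $\edd$ between two product-structured strings must essentially respect the recursive block structure, so that undoing the (bounded) shifts at every level only costs the insertion/deletion budget accounted above — this rigidity is precisely what the proofs of Theorems~\ref{thm:sprodAdditive} and~\ref{thm:sprodFar} are designed to supply, and the work here is to apply them level by level rather than once.
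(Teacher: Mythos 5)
Your proposal follows the paper's proof in all essentials: closeness by accumulating shift costs against the skeleton strings $y_{i_\star,a}$ level by level, farness via Lemma~\ref{lem:RandomStrings} and iterated Theorem~\ref{thm:sprodAdditive} (your Corollary~\ref{cor:sprodAdditiveRandom} variant is equivalent) combined with a triangle inequality (your ``unshifting''), and the query bound by chaining Lemma~\ref{lem:RotationSimilarity}, Lemma~\ref{lemma:SimilarityMultiplication} applied $i_\star-1$ times, and Lemma~\ref{lemma:SimilarityAlgorithms}. One small correction in the final step: the point-mass inner distributions induced by a deterministic injective $R$ are uniformly $1$-similar, not $0$-similar --- for any nonempty $Q$ separating two images of $R$, the projections are distinct point masses, so the definition forces the similarity parameter to be at least $1/|Q|\le 1$, and claiming $0$-similarity would wrongly make $\mathcal{F}'_0$ and $\mathcal{F}'_1$ identical --- but $\alpha=1$ in Lemma~\ref{lemma:SimilarityMultiplication} still preserves the outer similarity parameter, so the conclusion is unchanged; the paper instead sidesteps this by noting directly that each symbol of an $\mathcal{F}'_i$-string is a function of one symbol of the underlying $\mathcal{F}_i$-string, which is exactly your ``or directly'' fallback.
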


\begin{proof}
We use the construction described in Section~\ref{sect:hard_distro}.
Recall that $i_\star \ge 1$, for $n$ large enough, and that $i_\star \le \log_B n$.

Let $F:\Sigma\to \Sigma^B$ be defined as 
$F(a) \eqdef x_a$
for every $a \in \Sigma$. We define $y_{i,a}$ inductively. Let $y_{1,a} \eqdef x_a$ for every $a \in \Sigma$, then for $i > 1$ define $y_{i,a} \eqdef y_{i-1,a} \sprod F$.

We now claim that for every word $z$ with non-zero probability in $\E_{i,a}$ for $a \in \Sigma$, we have 
$$\frac{\ed(z,y_{i,a})}{B^i} \le \frac{i \cdot 2 \cdot s}{B} \le \frac{i}{4\alpha \log_\alpha n}.$$
This follows by induction on $i$, since every rotation by $s$ can be ``reversed'' with at most
$s$ insertions and $s$ deletions. In particular,
$$\frac{\ed(z,y_{i_\star,a})}{B^{i_\star}} \le \frac{\log_B n}{4\alpha \log_\alpha n} = 
\frac{\log \alpha}{4\alpha \log B} \le \frac{1}{4\alpha},$$
where the last inequality is because $\alpha \le B$.

It follows from Lemma~\ref{lem:RandomStrings} and the union bound that with probability 
$$1-|\Sigma|^2\cdot e^{-5B/\sqrt{|\Sigma|}} \ge 1-|\Sigma|^2 \cdot e^{-5|\Sigma|} \ge
1-e^{-3|\Sigma|} \ge 1-e^{-3} \ge 2/3$$
(over the choice of $F$, i.e. $x_a$ for $a\in\Sigma$), 
that for all $a\neq b\in\Sigma$ we have $\lcs(x_a,x_b) \le 5B/ \sqrt{|\Sigma|}$, that
is, the value corresponding to $\sqrt{\lambda_B}$ in Lemma~\ref{thm:sprodAdditive}
is at most $\sqrt{5/\sqrt{|\Sigma|}} \le 1/(16\log_\alpha n)$. 
We assume henceforth this event occurs.
Then by Lemma~\ref{thm:sprodAdditive} and induction, 
we have that for all $a\neq b$, 
$$\edd(y_{i,a},y_{i,b})\ge B^{i} \left(2 - \frac{i}{2\log_\alpha n}\right)$$
which gives
\begin{eqnarray*}
\ed(y_{i_\star,a_\star},y_{i_\star,b_\star}) &\ge&
\frac{1}{2}\edd(y_{i_\star,a_\star},y_{i_\star,b_\star}) \ge B^{i_\star} \left(1 - \frac{i_\star}{4\log_\alpha n}\right) \ge B^{i_\star} \left(1 - \frac{\log \alpha}{4\log B}\right)\\
&\ge&
B^{i_\star} \left(1 - \frac{1}{4}\right) = \frac{3}{4}B^{i_\star}.
\end{eqnarray*}

Consider now an algorithm that is given full access to the string $y_{i_\star,a_\star}$ and query access to some other string $z$.
If $z$ comes from $\mathcal F_0 = \E_{i_\star,a_\star}$, then $\ed(y_{i_\star,a_\star},z) \le \frac{B^{i_\star}}{4\alpha}$.
If $z$ comes from $\mathcal F_1 = \E_{i_\star,b_\star}$, then 
$\ed(y_{i_\star,a_\star},z) \ge \frac{3}{4} B^{i_\star} - \frac{1}{4\alpha} B^{i_\star} \ge \frac{1}{2} B^{i_\star}$
by the triangle inequality.

We now show that the algorithm has to make many queries to learn whether $z$ is drawn from $\mathcal F_0$ or from $\mathcal F_1$. By Lemma~\ref{lem:RotationSimilarity}, with probability at least $2/3$ over the choice of $x_a$'s, $\E_{1,a}$'s are uniformly $\tfrac1A$-similar, for
\begin{eqnarray*}
A &\eqdef& \log_{|\Sigma|}\sqrt[6]{\frac{s}{400\ln B}} \ge \log_{|\Sigma|}\sqrt[6]{\beta \cdot |\Sigma|^{12}}
 = 2 + \frac{\log \beta}{6\log|\Sigma|}. 
\end{eqnarray*}
Note that both the above statement regarding $\tfrac1A$-similarity as well as the earlier requirement
that $\lcs(x_a,x_b)$ be small for all $a \ne b$, are satisfied with non-zero probability.

Observe that $\log |\Sigma| = \Theta(1 + \log (\frac{\log n}{\log \alpha}))$.
For $\alpha < n^{1/3}$,
$$A = 2 + \Omega\left(\frac{\log \alpha}{1+\log\left(\frac{\log n}{\log \alpha}\right)}\right)
= 2 + \Omega\left(\frac{\log \alpha}{\log\log n}\right).$$
For $\alpha \ge n^{1/3}$,
$$A \ge 2 + \Omega\left(\frac{\log \frac{n}{\alpha \ln n}}{1+\log\left(\frac{\log n}{\log \alpha}\right)}\right) 
  \ge \Omega\left(\log \frac{n}{\alpha \ln n}\right),$$
where the last transition follows since
$\frac{\log n}{\log \alpha}=\Theta(1)$ and $\alpha = o(n/\log n)$.

By using Lemma~\ref{lemma:SimilarityMultiplication} over $\E_{i,a}$'s, we have that
$\E_{i,a}$'s are uniformly $\tfrac{1}{A^i}$-similar. It now follows from Lemma~\ref{lemma:SimilarityAlgorithms} that an algorithm that distinguishes whether its input $z$ is drawn from $\mathcal F_0 = \E_{i_\star,a_\star}$ or from $\mathcal F_1 = \E_{i_\star,b_\star}$ with probability at least $2/3$, must make at least $A^{i_\star}/3$ queries to $z$.
Consider first the case of $\alpha < n^{1/3}$.
We have $i_\star = \Omega\left(\frac{\log n}{\log B}\right) = \Omega\left(\frac{\log n}{\log \alpha + \log\log n}\right)$. The number of queries we obtain is
$$\left(2+\Omega\left(\frac{\log \alpha}{\log\log n}\right)\right)^{\max\left\{1,\Omega\left(\frac{\log n}{\log \alpha + \log\log n}\right)\right\}}.$$
For $\alpha \ge n^{1/3}$ we have $i_\star \ge 1$, and the algorithm must make 
$\Omega\left(\log \frac{n}{\alpha \ln n}\right)$ queries. This finishes the prove of the first part of the theorem, which states a lower bound for an alphabet of size $\Theta(\log_\alpha^4 n)$.

For the second part of the theorem regarding alphabet $\Sigma=\zo$, we use the distributions from the first part, but we employ the mapping $\mathcal R:\Sigma \to \zo^{T}$ to replace every symbol in $\Sigma$ with a binary string of length $T$. Lemma~\ref{lem:randomFar2} and Theorem~\ref{thm:sprodFar} state that if $R$ is chosen at random, then with non-zero probability, $R$ preserves (normalized) edit distance up to a multiplicative $c_1$. Using such a mapping $R$ and $\alpha/c_1$ instead of $\alpha$ in the entire proof, we obtain the desired gap in edit distance between $\mathcal F_0'$ and $\mathcal F_1'$.
The number of required queries remains the same after the mapping, because every symbol in a string obtained
from $\mathcal F'_0$ or $\mathcal F'_1$ is a function of a single symbol from a string obtained from
$\mathcal F_0$ or $\mathcal F_1$, respectively. An algorithm using few queries to distinguish 
$\mathcal F_0'$ from $\mathcal F_1'$ would therefore imply an algorithm with similar query complexity to distinguish $\mathcal F_0$ from $\mathcal F_1$, which is not possible.
\end{proof}

\subsubsection{A More Precise Lower Bound for Polynomial Approximation Factors}

We now state a more precise statement that specifies the exponent for polynomial approximation factors.

\begin{theorem}
\label{thm:lbMorePrecise}
Let $\lambda$ be a fixed constant in $(0,1)$. Let $t$ be the largest positive integer
such that $\lambda \cdot t < 1$.

Consider an algorithm that is given a string in $\Sigma^n$, and query access to another string in $\Sigma^n$. 
If the algorithm correctly distinguishes edit distance $\ge n/2$ and $\le n/(4n^{\lambda})$
with probability at least $2/3$, then it needs $\Omega(\log^{t} n)$ queries,
even for $|\Sigma| = O(1)$.

For $\Sigma = \zo$, the same number of queries is required to distinguish edit distance 
$\ge c_1n/2$ and $\le c_1n/(4n^{\lambda})$, where $c_1 \in (0,1)$ is the constant from Theorem~\ref{thm:sprodFar}.
\end{theorem}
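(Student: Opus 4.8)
The plan is to re-run the hard-distribution construction of Section~\ref{sect:hard_distro} with approximation factor $\alpha \eqdef n^\lambda$, but overriding the case split on $\beta$ there by a single hand-picked value. Since $\lambda$ is a fixed constant with $\lambda t < 1$, the quantity $\tfrac1t - \lambda$ is a fixed \emph{positive} constant (it lies in $(0,\tfrac1{t(t+1)}]$), and $\log_\alpha n = 1/\lambda = \Theta(1)$, so the alphabet has size $|\Sigma| = \Theta(\log_\alpha^4 n) = O(1)$ and $T = \Theta(\log|\Sigma|) = O(1)$. Take $\beta \eqdef \lceil c' n^{1/t-\lambda}/\log n\rceil$ for a sufficiently small absolute constant $c' > 0$. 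Then $s = \Theta(\beta\log n)$ and the block length $B = \lceil 8\alpha s\log_\alpha n\rceil = \Theta(n^{1/t})$, with the hidden constant proportional to $c'$; choosing $c'$ small enough we get $B \le (n/T)^{1/t}$ while still $B = \Omega(n^{1/t})$, so that $B^{t} \le n/T < B^{t+1}$ and hence the number of recursion levels is exactly $i_\star = \lfloor \log_B(n/T) \rfloor = t$, and moreover $B^{i_\star} = \Theta(n)$. The point of this choice is that $\beta = n^{\Omega(1)}$, so by Lemma~\ref{lem:RotationSimilarity} the base distributions $\D_a = \E_{1,a}$ are uniformly $\tfrac1A$-similar (with probability $\ge 2/3$ over the base strings) for $A \ge 2 + \tfrac{\log\beta}{6\log|\Sigma|} = \Theta(\log n)$.

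Given $i_\star = t$, the edit-distance analysis is verbatim that of the proof of Theorem~\ref{thm:main_lb}. Every $z$ in the support of $\E_{i_\star,a}$ has $\ed(z,y_{i_\star,a}) \le \tfrac{2 i_\star s}{B} B^{i_\star} \le \tfrac{i_\star}{4\alpha\log_\alpha n} B^{i_\star} = \tfrac{t\lambda}{4\alpha}B^{i_\star} \le \tfrac1{4\alpha}B^{i_\star}$, using $t\lambda<1$. And Lemma~\ref{lem:RandomStrings} (with $|\Sigma|$ a large enough constant and $B$ huge), together with Theorem~\ref{thm:sprodAdditive} and induction, gives $\edd(y_{i_\star,a_\star},y_{i_\star,b_\star}) \ge B^{i_\star}\bigl(2 - \tfrac{i_\star}{2\log_\alpha n}\bigr) \ge \tfrac32 B^{i_\star}$ (again by $i_\star = t \le \log_\alpha n$), so $\ed(y_{i_\star,a_\star},y_{i_\star,b_\star}) \ge \tfrac34 B^{i_\star}$, with probability $\ge 2/3$ over the base strings. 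Fixing base strings making both events hold (possible since $\tfrac23 + \tfrac23 - 1 > 0$), and giving the algorithm full access to $y_{i_\star,a_\star}$ and query access to $z$, we get $\ed(y_{i_\star,a_\star},z) \le \tfrac1{4\alpha}B^{i_\star}$ if $z \sim \mathcal F_0 = \E_{i_\star,a_\star}$, and $\ed(y_{i_\star,a_\star},z) \ge \tfrac34 B^{i_\star} - \tfrac1{4\alpha}B^{i_\star} \ge \tfrac12 B^{i_\star}$ if $z \sim \mathcal F_1 = \E_{i_\star,b_\star}$; since $B^{i_\star} = \Theta(n)$ this is exactly the $\ge n/2$ versus $\le n/(4\alpha)$ gap of the statement, up to adjusting universal constants (and a trivial padding by a fixed string to reach length exactly $n$).

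For the query bound, Lemma~\ref{lemma:SimilarityMultiplication} applied $i_\star - 1$ times along $\E_{i,a} = \E_{i-1,a}\sprod\D$ shows the $\E_{i_\star,a}$ (hence $\mathcal F_0,\mathcal F_1$) are uniformly $\tfrac1{A^{i_\star}}$-similar, so Lemma~\ref{lemma:SimilarityAlgorithms} forces any algorithm distinguishing them with probability $\ge 2/3$ to make at least $A^{i_\star}/3 = \Omega(\log^t n)$ queries. For $\Sigma = \zo$, compose everything with a random map $R:\Sigma\to\zo^T$ (here $T = O(1)$): by Lemma~\ref{lem:randomFar2} and Theorem~\ref{thm:sprodFar}, with nonzero probability $R$ preserves normalized edit distance up to the constant $c_1$, so the same argument with $\alpha/c_1$ in place of $\alpha$ yields the claimed gap for $\mathcal F_0' = \mathcal F_0\sprod R$ and $\mathcal F_1' = \mathcal F_1\sprod R$; the query count is unchanged because each symbol of a string drawn from $\mathcal F_j'$ is a deterministic function of a single symbol of the corresponding string drawn from $\mathcal F_j$, so $q$ queries into the binary string are answered from $q$ queries into the $\Sigma$-string.

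The only genuinely delicate step is the parameter choice in the first paragraph: we need $i_\star$ large enough ($\ge t$) to raise $A$ to the $t$-th power, small enough ($\le \log_\alpha n$, indeed $= t$) so that the accumulated substitution-product error $\tfrac{i_\star}{2\log_\alpha n}$ leaves a constant-fraction edit-distance gap, and simultaneously $\beta$ polynomially large so that $A = \Omega(\log n)$ while the alphabet stays of constant size. This is precisely where $\lambda t < 1$ enters: it makes $\tfrac1t - \lambda$ a fixed positive constant, which is exactly the slack that lets $B = \Theta(n^{1/t})$ land in the window $B^t \le n/T < B^{t+1}$ while keeping $\beta = n^{\Omega(1)}$. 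Once the parameters are pinned down, verifying $B^t \le n/T < B^{t+1}$ and propagating the constants through $s$, $B$ and $A$ is mechanical.
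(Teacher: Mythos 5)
Your proposal is correct and follows essentially the same approach as the paper's own proof: re-run the hard-distribution construction of Section~\ref{sect:hard_distro} with $\alpha = n^\lambda$, override $\beta$ with a hand-picked polynomially large value so that $i_\star = \lfloor\log_B(n/T)\rfloor = t$ exactly, invoke Lemma~\ref{lem:RotationSimilarity} to get $A = \Theta(\log n)$, and finish with Lemmas~\ref{lemma:SimilarityMultiplication}, \ref{lemma:SimilarityAlgorithms} and the edit-distance bounds exactly as in Theorem~\ref{thm:main_lb}. The one cosmetic divergence is the choice of $\beta$: the paper takes $\beta = n^{(1/t-\lambda)/2}$, yielding $B = \tilde\Theta(n^{(1/t+\lambda)/2})$ and hence strings of length $B^{i_\star} = o(n)$ (handled by the paper's padding remark), whereas you take $\beta = \Theta(n^{1/t-\lambda}/\log n)$, which makes $B = \Theta(n^{1/t})$ and $B^{i_\star} = \Theta(n)$, so the construction lands directly at string length $\Theta(n)$ — arguably a slightly cleaner calibration, but the same argument.
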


\begin{proof}
The proof is a modification of the proof of Lemma~\ref{thm:main_lb}. 
We reuse the same construction with the following differences:
\begin{itemize}
 \item We set $\alpha \eqdef n^{\lambda}$. This is our approximation factor.
 \item We set $\beta \eqdef n^{\frac{1}{2} \left(\frac{1}{t} - \lambda \right)}$. This is up to a logarithmic factor the shift at every level of recursion
\end{itemize}
$T$, $s$, $B$, $|\Sigma|$ are defined in the same way as functions of $\alpha$ and $\beta$.
Note that $B = \Theta\left(n^{\frac{1}{2} \left(\frac{1}{t} + \lambda \right)} \log n\right)$ and $T = \Theta(1)$.
This implies that for sufficiently large $n$, $i_\star = \lfloor \log_B \frac{n}{T} \rfloor = t$, because
$B^t = \tilde\Theta\left(n^{\frac{1+\lambda t}{2}}\right) = o(n)$, and
$B^{t+1} = \tilde\Theta\left(n^{\frac{1}{2}+\frac{1}{2t}+\frac{\lambda(t+1)}{2}}\right) = \tilde\Omega\left(n^{1+\frac{1}{2t}}\right)
= \omega(n)$.

As in the proof of Lemma~\ref{thm:main_lb}, we achieve the desired separation in edit distance. Recall that the number of
queries an algorithm must make is $\Omega(A^{i_\star})$, where
$$A \ge 2 + \frac{\log \beta}{6\log|\Sigma|} = \Omega(\log n).$$
Thus, the number of required queries equals $\Omega(\log^t n)$.
\end{proof}

\small
\bibliography{bibfile,more} 
\bibliographystyle{alpha}

\end{document}